\DeclarePairedDelimiter{\norm}{\lVert}{\rVert}
\algnewcommand{\LineComment}[1]{\State \(\triangleright\) #1}
\algnewcommand{\And}{\textbf{and}\xspace}
\tikzset{
	external/export=true,           
	external/mode=list and make,    
	external/prefix=externalized/,  
	external/verbose IO=false,      
	external/optimize command away=\dm1,
	external/optimize command away=\cite1,
	external/optimize command away=\citeauthor1,
	external/optimize command away=\fullcite1,
}
\renewcommand\nomgroup[1]{%
	\item[\bfseries
	\ifstrequal{#1}{P}{Parallel prefix sum}{%
		\ifstrequal{#1}{I}{Image Registration}{%
			\ifstrequal{#1}{O}{Other Symbols}{}}}%
	]
}
\renewcommand\bibfont\small
\theoremstyle{definition}
\newtheorem{theorem}{Theorem}[section]
\newtheorem{definition}{Definition}[section]
\newtheorem{remark}{Remark}[section]
\newcommand*\mean[1]{\bar{#1}}
\renewcommand\fps@figure t  
\renewcommand\fps@table t
\bfseries\color{darkblue},  
\bfseries\color{darkblue}]{struct}{\ },
\bfseries\color{darkblue}]{class}{\ },
\bfseries\color{darkblue}]{public:}{\ },
\bfseries\color{darkblue}]{public}{\ },
\bfseries\color{darkblue}]{protected:}{\ },
\bfseries\color{darkblue}]{private:}{\ },
\bfseries\color{black}]{operator+}{\ },
\newcommand{\code}[1]{\texttt{#1}}
\definecolor{darkblue}{rgb}{0,0,.6}
\definecolor{darkred}{rgb}{.6,0,0}
\definecolor{darkgreen}{rgb}{0,.6,0}
\definecolor{red}{rgb}{.98,0,0}
\definecolor{gray}{rgb}{.6,.6,.6}
\newcommand{\for}{\text{for }}
\newcommand{\thesistype}{Master's Thesis}
\newcommand{\thesistitle}{Parallel Prefix Algorithms \protect\\ for the Registration of Arbitrarily Long \protect\\Electron Micrograph Series}
\newcommand{\thesisauthor}{Marcin Copik}
\newcommand{\thesisFirstSupervisor}{Prof. Paolo Bientinesi}
\newcommand{\thesisFirstSupervisorUniversity}{\protect{RWTH Aachen University}}
\newcommand{\thesisFirstSupervisorDepartment}{AICES}
\newcommand{\thesisSecondSupervisor}{Prof. Benjamin Berkels}
\newcommand{\thesisSecondSupervisorUniversity}{\protect{RWTH Aachen University}}
\newcommand{\thesisSecondSupervisorDepartment}{AICES}
\newcommand{\thesisuniversity}{\protect{RWTH Aachen University}}
\newcommand{\thesisinstitute}{Aachen Institute for Advanced Study in Computational Engineering Science}
\newcommand{\thesisgroup}{High-Performance and Automatic Computing Group}
\begin{document}


\graphicspath{{.}}

\newgeometry{centering}
\begin{titlepage}
	\centering
	
	{\fontsize{15}{20}\selectfont \textsf{\thesisuniversity}} \\[4mm]
	{\fontsize{13}{18}\selectfont \textsf{\thesisinstitute}} \\
	{\fontsize{13}{18}\selectfont \textsf{\thesisgroup}} \\
	
	\vspace{\fill} 
	{\fontsize{15}{20}\selectfont \thesistype} \\[5mm]
	{\fontsize{20}{25}\selectfont\bfseries \thesistitle \\[10mm]}
	{\fontsize{15}{20}\selectfont \thesisauthor} \\
	
	\vspace{\fill}
	\begin{minipage}[t]{.27\textwidth}
		\raggedleft
		\textbf{First Supervisor:}
	\end{minipage}
	\hspace*{15pt}
	\begin{minipage}[t]{.65\textwidth}
		{\Large \thesisFirstSupervisor} \\
		{\small \thesisFirstSupervisorDepartment} \\[-1mm]
		{\small \thesisFirstSupervisorUniversity}
	\end{minipage} \\[5mm]
	\begin{minipage}[t]{.27\textwidth}
		\raggedleft
		\textbf{Second Supervisor:}
	\end{minipage}
	\hspace*{15pt}
	\begin{minipage}[t]{.65\textwidth}
		{\Large \thesisSecondSupervisor} \\
		{\small \thesisSecondSupervisorDepartment} \\[-1mm]
		{\small \thesisSecondSupervisorUniversity}
	\end{minipage} \\[10mm]
	
\end{titlepage}
\restoregeometry
\cleardoublepage

\begin{abstract}
Recent advances in the technology of transmission electron microscopy have allowed for a more precise visualization of materials and physical processes, such as metal oxidation. Nevertheless, the quality of information is limited by the damage caused by an electron beam, movement of the specimen or other environmental factors. A novel registration method has been proposed to remove those limitations by acquiring a series of low dose microscopy frames and performing a computational registration on them to understand and visualize the sample. This process can be represented as a prefix sum with a complex and computationally intensive binary operator and a parallelization is necessary to enable processing long series of microscopy images. With our parallelization scheme, the time of registration of results from ten seconds of microscopy acquisition has been decreased from almost thirteen hours to less than seven minutes on 512 Intel IvyBridge cores.
\end{abstract}
\cleardoublepage

\tableofcontents

\nomenclature[P]{$S(N, P)$}{Span, length of critical path of an algorithm for input data of length $N$ and $P$ workers}
\nomenclature[P]{$W(N, P)$}{An amount of work performed by an algorithm for input data of length $N$ and $P$ workers}
\nomenclature[P]{$SP(N, P)$}{The theoretical speedup of an algorithm, defined as a ratio of serial and parallel span}
\nomenclature[P]{$P$}{Total number of allocated process cores}
\nomenclature[P]{$T_{S}$}{Execution time of a serial execution}
\nomenclature[P]{$T_{P}$}{Execution time of a parallel execution}
\nomenclature[P]{$\mathcal{SP}$}{The measured speedup of an algorithm, defined as a ratio of serial and parallel exeuction time}
\nomenclature[I]{$f_{i}$}{Image with index $i$}
\nomenclature[I]{$\phi_{i, j}$}{Deformation matching image $f_{j}$ to $f_{i}$}
\nomenclature[I]{$\mathrm{NCC}[f, g]$}{The normalized cross--correlation of images f and g}
\nomenclature[I]{$\mathbf{A}$}{The registration function for two neighboring frames}
\nomenclature[I]{$\mathbf{B}$}{The registration function for two non-neighboring frames}

\printnomenclature[2.5 cm] 
\cleardoublepage

\chapter{Introduction}
\label{chap:intro}

Modern electron microscopes allow an observation of specimens at a nanometer resolution. Recent advances in the technology of scanning transmission electron microscopy (STEM) allowed for a more precise visualization of materials and physical processes such as metal oxidation. Nevertheless, the quality of information obtained during the acquisition is limited by the damage caused by an electron beam, movement of the specimen or other environmental factors. Restricting the electron dose to avoid the damage results in obtaining data with an undesirably low \textit{signal--to--noise} ratio. Image processing algorithms have been successively applied to extract reliable information from a noisy electron microscopy data. \\
Berkels et.al.\cite{Berkels201446} have proposed a new approach to increase the amount of information gathered by observation with STEM. Instead of using a single high--dose frame, a series of low--dose noisy frames $f_{0}, f_{1}, \dots, f_{n}$ is acquired. The quality of frames is affected not only by the noise but also by the movement of observed object during the acquisition. Therefore, frames are aligned to the first image $f_{0}$ to represent only the physical change of the observed object and not the movement of the specimen. \\ The information encoded in images is extracted in a two--step series \textit{registration} method. First, for each pair of neighboring images $(f_{i}, f_{i + 1})$ an image deformation $\phi_{i, i + 1}$ is approximated such that $f_{i} \approx f_{i+1} \circ \phi_{i, i+1}$. The process of finding a good estimation for $\phi_{i, i + 1}$ is based on a \textit{multilevel gradient flow} minimizing a \textit{normalized cross--correlation} between $f_{i}$ and $f_{i+1} \circ \phi_{i, i + 1}$. A composition of two deformations $\phi_{i, j}$ and $\phi_{j, k}$ with a common center $j$ can be used as a starting to point for registration of frame $f_{k}$ to $f_{i}$. A recursive application of this procedure allows aligning each frame $f_{i}$ to the first one $f_{0}$. An example of the process is presented on Figure~\ref{fig:chapter_introduction_series}. A series of deformed images $f_{i} \circ \phi_{0, i}$ is averaged to produce a single frame representing the observed object. \\
The theoretical background of the new method and a formal statement of the problem is introduced in the the Chapter~\ref{chap:image_registration}.
\begin{figure}[htb]
	\centering
	\includegraphics[width=\textwidth]{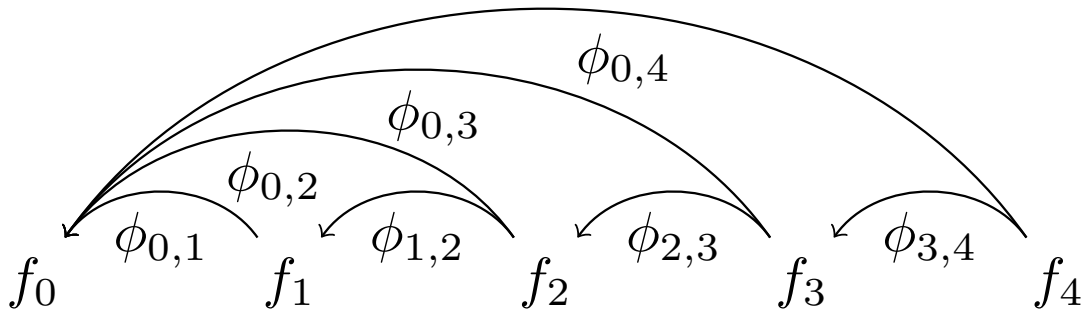}
	\caption{An example of series registration for images $f_{0}, f_{1}, \dots, f_{4}$.}
	\label{fig:chapter_introduction_series}
\end{figure}
The process of image registration is computationally intensive. As a matter of fact, the convergence of a gradient flow on three levels $8$, $9$ and $10$, usually takes a few seconds. In our experiments, a serial registration of $4096$ frames requires almost 13 hours of computation. Since each second of data acquisition generates 400 frames, the registration algorithm becomes impractical for an acquisition running longer than a few dozen seconds. This problem becomes even more apparent in applications such as the \textit{series averaging procedure}, where the registration algorithm is repeated many times to improve the quality of averaged frame. The computation time could be reduced by manipulating algorithm parameters to perform fewer iterations, but this raises the likelihood of gradient flow finding a local minimum which decreases the quality of results. In this dissertation, we discuss another approach to speed up the registration procedure which is to employ parallel computing techniques.

In the Chapter~\ref{chap:prefix_sum}, we prove that the image registration process can be represented as a \textit{prefix sum}. The prefix sum, also known as \textit{scan} or \textit{cumulative sum}, accepts a sequence of input data $x_{0}, x_{1}, \dots, x_{n}$ with a binary operator $\odot$ and for each element $x_{i}$, computes a sum of all preceding elements and the selected item, such as
\begin{align*}
x_{i} &= x_{0} \odot x_{1} \odot \dots \odot x_{i}
\end{align*}
Parallelization strategies for a prefix sums have been researched for decades to construct fast and efficient prefix adders, a class of digital circuits performing binary addition. In the parallel programming, the scan primitive has been proposed as a basic block for building parallel applications. Therefore, we intend to use the parallel prefix sum as a basis for parallelization strategy of the image registration problem.

However, properties of the image registration process are entirely different from prefix sum problems discussed in the literature. Previous work is focused on parallelization strategies for memory--bound operators where the cost of accessing and moving data is significantly higher than an application of the operator. Furthermore, the iterative nature of registration does not allow to predict a total cost of computation, and we have observed huge variances in execution time between different pairs of frames. We have not found any examples of a parallel scan with an operator of unpredictable runtime. Thus, we construct new guidelines for an efficient parallelization and reevaluate existing strategies. \\
For an arbitrarily long series of data acquisition, we need a distributed implementation of a parallel prefix sum. We derive a general strategy which attempts to minimize the synchronization between workers. The distributed approach for a parallel prefix sum is discussed in the Chapter~\ref{chap:distr_prefix_sum}.

We implemented the general strategy for a parallel distributed prefix sum as an extension of QuocMesh\cite{quocmesh}, a library for Finite Elements computations on Cartesian grids developed at the Institute for Numerical Simulation at the University of Bonn. The strategy has been applied to the image registration process, and a comparative evaluation of different algorithms is presented in Chapter~\ref{chap:results}. Our results prove that we cannot solve a fixed size problem efficiently on an arbitrary number of processors. However, we can use more hardware to register longer series without a significant increase in the execution time. Further improvements are provided by parallelization of the registration algorithm.

Finally, we present conclusions and suggestions for future work in Chapter~\ref{chap:summary}.

\chapter{Image registration}
\label{chap:image_registration}

In this chapter, we introduce the problem of image registration for series of frames. We use names \textit{image} and \textit{frame} interchangeably for a single item of data acquired by an electron microscope. Furthermore, we use terms \textit{transformation} and \textit{deformation} interchangeably for a function defining the deformation of an image. Definitions provided here are based on those given by Modersitzki in \cite{modersitzki2004numerical}\cite{modersitzki2009fair}.

\section{Problem statement}

An image is a function which assigns a gray value to each position in the region of interest, as defined below
\begin{definition}{\textbf{Image}}\label{def:image} A $d$--dimensional image is a function $f$
\begin{align}
f: \Omega \longrightarrow \mathbb{R}
\end{align}
where $\Omega \in \mathbb{R}^d$ is a region of interest and $d \in \mathbb{N}$. $\text{Img}(d)$ is a set of all d--dimensional images.
\end{definition}
In this chapter, we introduce methods for \textit{registration} of two--dimensional images where $d = 2$. We denote two particular kinds of image: $\mathcal{R}$ known as the reference and $\mathcal{T}$ known as the template image. In the image registration problem, we want to find the transformation $\phi: \mathbb{R}^{d} \longrightarrow \mathbb{R}^{d}$ of $\mathcal{T}$ such that the deformed image is aligned to the reference image and
\begin{align}
\mathcal{T} \circ \phi &\approx \mathcal{R}
\end{align}
For the sake of simplicity, we restrict ourselves to rigid transformations in the description, but techniques described below allow to approximate both rigid and non--rigid deformations. A rigid deformation is defined as follows
\begin{definition}{\textbf{Rigid transformation}}\label{def:rig_deform} A transformation is called rigid when only rotation and translation are allowed. A rigid transformation is represented by the equation
	\begin{align}
	\phi(x) \: &= \: R(\alpha) \cdot x + G 
	\end{align}
where $R(\alpha) \in \mathbb{R}^{d \times d}$ is an orthogonal matrix and $G \in \mathbb{R}^{d}$.
\end{definition}
The name refers to a movement of a rigid body which can not be deformed through a shear, scaling or any other non--affine transformation. In the two--dimensional case, the deformation is given by a translation vector $b$ of length two and a single rotation angle $\alpha$. Then the transformation shall take the form
\begin{align}
\phi(x) \: &= \: \begin{pmatrix}
\cos\alpha & -\sin\alpha \\
\sin\alpha & \cos\alpha
\end{pmatrix} \cdot
\begin{pmatrix}
x_{0} \\
x_{1}
\end{pmatrix}
+ \begin{pmatrix}
t_{0} \\
t_{1}
\end{pmatrix}
\end{align}
In practice, however, it is usually impossible to obtain a deformation providing an ideal match for images. Therefore, a proper metric has to be defined to estimate the similarity between images and serve the role of an objective functional in the minimization process. We intend to find a transformation such that given a metric, the distance between a reference $\mathcal{R}$ and a transformed template $\mathcal{T} \circ \phi$ is minimal. Finally, we define the image registration problem
\begin{definition}{\textbf{Image registration problem}}\label{def:image_reg} Given a metric $\mathcal{M}: \longrightarrow \mathbb{R}$ and two images $\mathcal{R}, \mathcal{T}$, find a transformation $\phi$ such that
\begin{align}
\mathcal{M}(\mathcal{R}, \phi \circ \mathcal{T})
\end{align}
is minimized.
\end{definition}
An example of a trivial transformation is presented in Figure~\ref{fig:chapter_img_rect_deformation}. For both images, $\Omega = [0, 1]^{2}$. In the image $f_{1}$, the rectangle has been translated but not rotated. We intend to find a deformation $\phi_{0, 1}$ such that
\begin{align*}
f_{1} \circ \phi_{0, 1} &= f_{0} \\
\forall x\in\Omega \: f_{1}( \phi_{0, 1}(x) ) &= f_{0}(x)
\end{align*}
To find a transformation, we observe that no shift is performed on vertical axis and we solve for the lower--left corner of the rectangle $(0.25, 0.25)$ and $(0.5, 0.25)$
\begin{align*}
f_{0}(x) &= f_{1}( \phi_{0, 1}(x))\\
f_{0}(x) &= f_{1}( x + G )\\
f_{0}((0.25, 0.25)^{T}) &= f_{1}((0.25, 0.25)^{T} + (g_{0}, 0)^{T}) \\
&\iff \\
(0.5, 0.25)  &= (0.25, 0.25)^{T} + (g_{0}, 0)^{T}\\
G &= (0.25, 0)^{T}
\end{align*}
Thus, the transformation does not deform the template image $f_{1}$ to match $f_{0}$. This operation would require a translation vector $-G$. The deformation is applied to coordinates before computing an image value at given position. $\phi$ represents a geometrical change from frame $f_{0}$ to $f_{1}$, not the other way around. With both images representing exactly the same object, the transformation encodes a correspondence between coordinate systems of two images.
\begin{figure}
	\centering
	\includegraphics[width=0.75\textwidth]{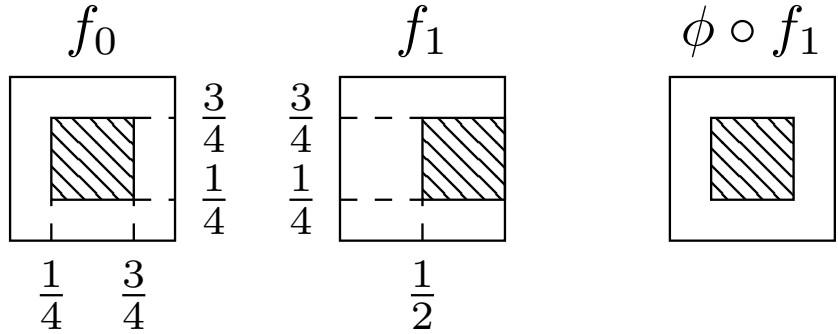}
	\caption{An example of two images containing a rectangle of size $(0.5, 0.5)$, located in the center of image $f_{0}$ and on right side of image $f_{1}$. Deformation $\phi$ produces an ideal match of $f_{1}$ to $f_{0}$.}
	\label{fig:chapter_img_rect_deformation}
\end{figure}

\section{Electron microscopy data}

For the image registration, we consider data from an experiment where ultrahigh vacuum high--resolution transmission electron microscopy (UVH HRTEM) has been applied to capture the process of aluminum oxidation\cite{thesisManchester}. The images have been acquired at the rate of 400 frames per second, and each experiment has lasted for up to 4 minutes, producing up to 96,000 frames. The quality of images is lowered by \textit{sample drift}, a movement of the aluminum sample between taking consecutive frames, and the presence of low--contrast frames.

An example of an electron microscopy image is presented on Figure~\ref{fig:chapter_img_electron_data}. The regular structure on the left represents an atomic grid of aluminum. The approximated deformation for this pair of images is
\begin{align}
\phi(x) \: &= \: \begin{pmatrix}
0.99 & -4.18 \cdot 10^{-4} \\
4.18 \cdot 10^{-4} & 0.99
\end{pmatrix} \cdot
\begin{pmatrix}
x_{0} \\
x_{1}
\end{pmatrix}
+ \begin{pmatrix}
-4.53 \cdot 10^{-4} \\
-0.004909
\end{pmatrix}
\end{align}
A key feature of these pictures is a very low variability between two images with neighbor indices, presented in Figures~\ref{fig:chapter_img_electron_data_ref} and~\ref{fig:chapter_img_electron_data_temp}. It is clearly seen in the deformation, the estimated angle of rotation is approximately equal $4.18 \cdot 10^{-4}$ radians and the translation on the horizontal axis is insignificant as well. Not surprisingly, it is a hard task to notice a change in the deformed frame presented on Figure~\ref{fig:chapter_img_electron_data_def}. A magnification of the upper-left corner of this frame is presented on Figure~\ref{fig:chapter_img_electron_data_def_crop}. There, a movement of the frame along the vertical axis can be spotted. 
\begin{figure}[htb]
	\centering
	\subfloat[Frame 25]{%
		\includegraphics[width=\dimexpr0.45\textwidth-2\fboxrule,frame={\fboxrule}]{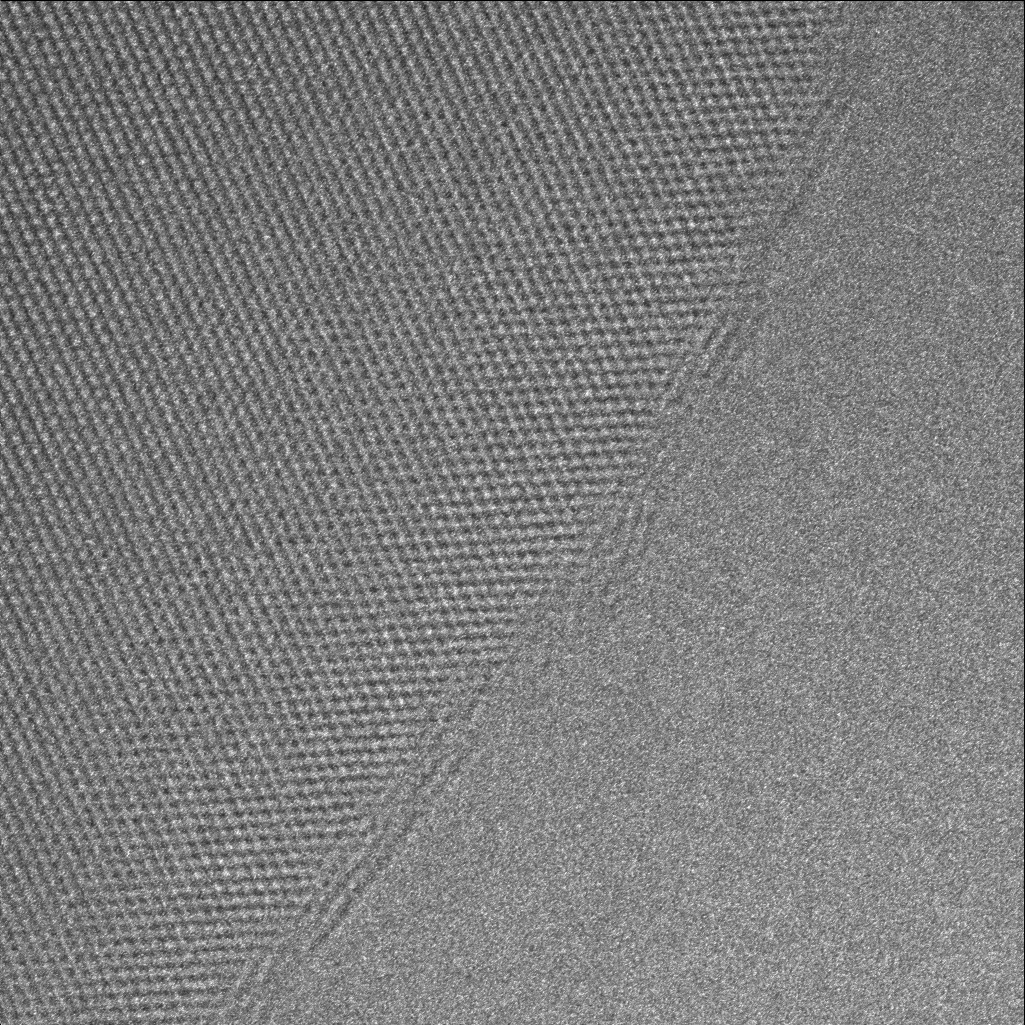}\label{fig:chapter_img_electron_data_ref}}\hfill
	\subfloat[Frame 26]{%
		\includegraphics[width=\dimexpr0.45\textwidth-2\fboxrule,frame={\fboxrule}]{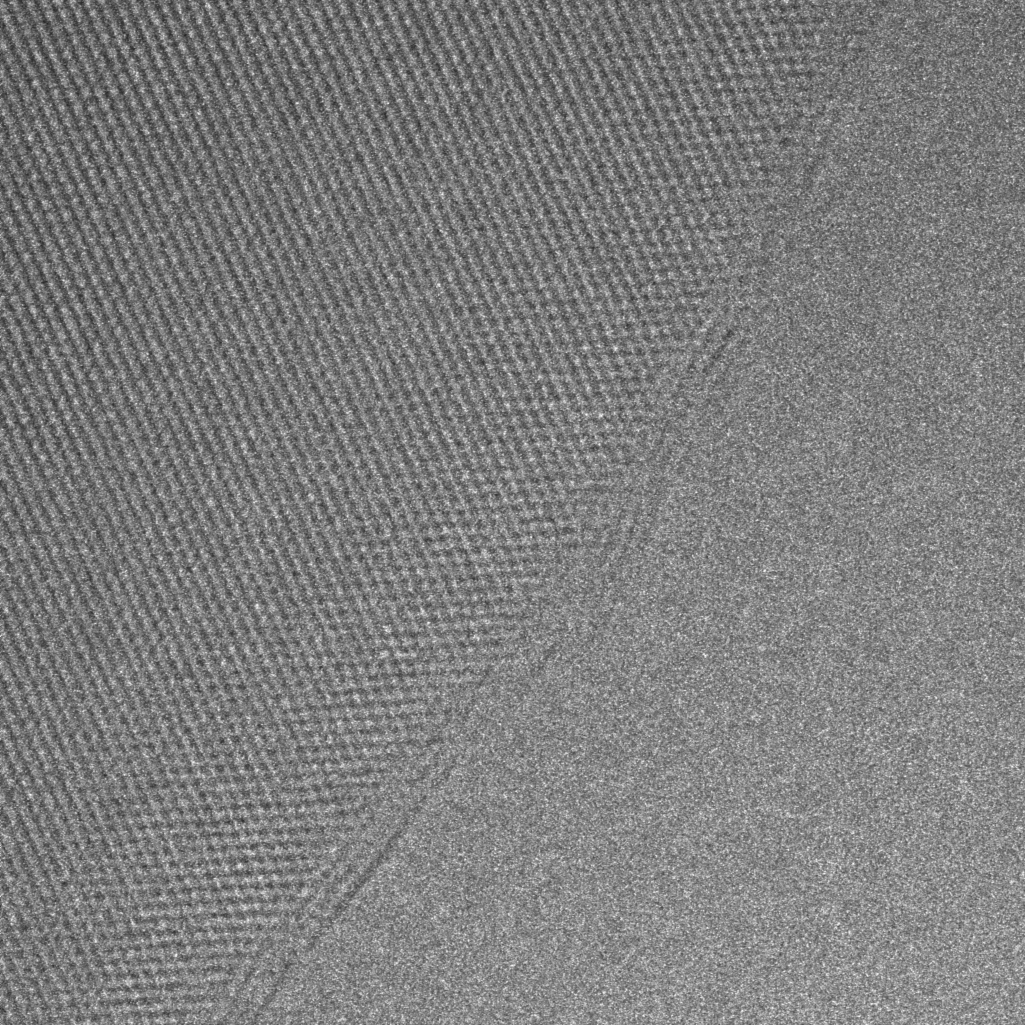}\label{fig:chapter_img_electron_data_temp}}\\
	\subfloat[Deformed frame 26]{%
		\includegraphics[width=\dimexpr0.45\textwidth-2\fboxrule,frame={\fboxrule}]{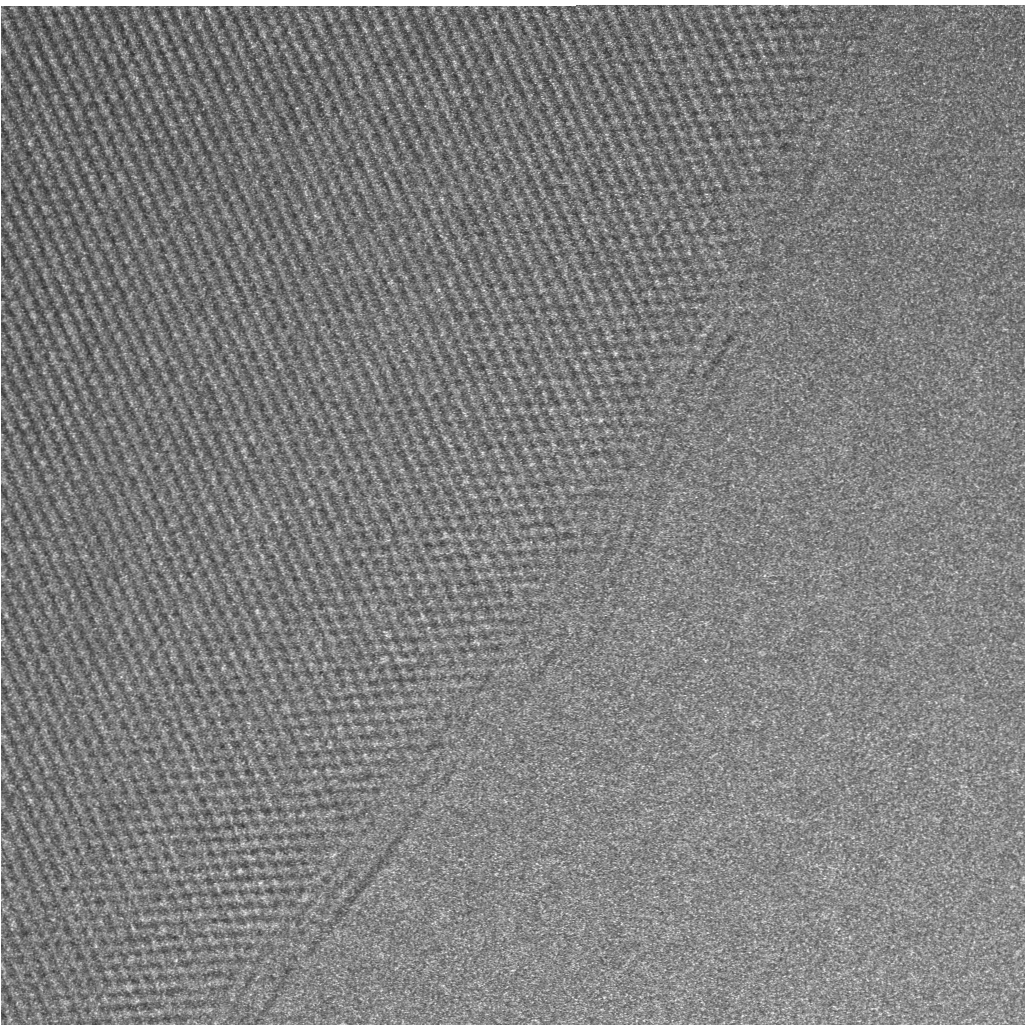}\label{fig:chapter_img_electron_data_def}}\hfill
	\subfloat[An upper--left corner of deformed frame 26]{%
		\includegraphics[width=\dimexpr0.45\textwidth-2\fboxrule,frame={\fboxrule}]{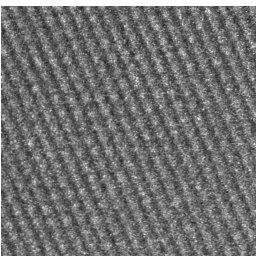}\label{fig:chapter_img_electron_data_def_crop}
	}\hfill
	\caption{The first two figures of the top row present frames 25 and 26, respectively,  from twentieth--first second of first minute of the acquisition. The registration has been performed and the two figures of the bottom row depict the deformed frame 26 and a magnification of an upper--left corner of this frame. The movement of frame is visible on the last image.}  \label{fig:chapter_img_electron_data}
\end{figure}
\pagebreak
\section{Image registration for electron microscopy}
This section provides a brief description of the selected method for image registration. This technique has been proposed for electron microscopy images by Berkels et al. in \cite{Berkels201446}. For more details on the method, please refer to the mentioned paper or to \cite{Clarenz02towardsfast}. \\
We begin the description with a definition of selected metric. The mean value of an image $\mean{f}$ is defined as
\begin{align}
\mean{f} &= \frac{1}{|\Omega|} \int_{\Omega} f \mathrm{d}x 
\end{align}
And the standard deviation is defined as follows
\begin{align}
\sigma_{f} &= \sqrt{\frac{1}{|\Omega|} \int_{\Omega} (f - \mean{f})^{2} \mathrm{d}x} 
\end{align}
The \textit{normalized cross--correlation} of reference $\mathcal{R}$ and template $\mathcal{T}$ is given as
\begin{align}
\mathrm{NCC}[\mathcal{R}, \mathcal{T}] &= \frac{1}{|\Omega|} \int_{\Omega} \frac{\mathcal{R} - \mean{\mathcal{R}}}{\sigma_{\mathcal{R}}} \frac{\mathcal{T} - \mean{\mathcal{T}}}{\sigma_{\mathcal{T}}} \mathrm{d}x
\end{align}
The value of a normalized cross--correlation is bounded from below by $-1$ and from above by $1$. Obviously, for a perfect transformation $\mathrm{NCC}[\mathcal{R}, \phi \circ \mathcal{T}] = 1$. This function is combined with a regularization term consisting of Dirichlet energy of the displacement $\phi(x) - x$ to form the objective functional for the optimization process
\begin{align}
\mathrm{E}[\phi] \: &= \: -\mathrm{NCC}[\mathcal{R}, \phi \circ \mathcal{T}] + \lambda \cdot \frac{1}{2} \int_{\Omega} \norm{D(\phi(x) - x)}^{2} \mathrm{d}x 
\end{align}
The additional term, controlled by a regularization parameter $\lambda > 0$, has been introduced because of an ill--posedness of the problem. This function is characterized by a presence of multiple local minima which make finding a unique solution a very non--trivial problem. A key requirement for well--posedness of a problem is an existence of a unique solution, and the additional regularization is expected to make the problem well--posed by leading to a more convex functional with a single minimum\cite{modersitzki2009fair}. As we are going to see later, it is usually not the case, and the computed deformation may vary not only between different starting points for the minimization but also among various implementations of the same algorithm. \\

The proposed approach for minimization of the functional is a hybrid one, based on a combination of a multilevel scheme with a gradient flow minimization process. The multilevel process and the gradient flow with its spatial discretization are introduced in the following sections.
\subsection{Multilevel}

The idea of a multilevel algorithm comes from multigrid\cite{Trottenberg:2000:MUL:374106}, a major technique developed for solving partial differential equations. There, a scheme consisting of multiple grids is used to reduce high--frequency errors by applying a smoother at a fine grid and iteratively solving the problem on coarser grids. Each step down to a coarser grid requires \textit{restricting} the residual, solving problem there and \textit{prolongating} the solution to a finer grid.

A multilevel scheme applies those concepts in image registration to minimize the likelihood of a gradient solver stopping at local minima. The minimization process operates on grid levels $m_{0}, m_{0} + 1, \dots, m_{1}$ and $m_{0} < m_{1}$. For each level $l$, a grid of size $2^{l} \times 2^{l}$ or $(2^{l} + 1) \times (2^{l} + 1)$ is created. For this implementation, we focus on the latter. \\
A coarser grid $\mathcal{G}_{l}$ of size $(2^{l} + 1) \times (2^{l} + 1)$ is extended with a new node between each pair of nodes. The new finer grid $\mathcal{G}_{l+1}$ has $(2^{l+1} + 1) \times (2^{l+1} +1)$ nodes. An example is presented in Figure~\ref{fig:chapter_img_multigrid}. \\
The \textit{prolongation} operator $\mathcal{I}_{l}^{2l}$ is responsible for copying values from a coarser grid to corresponding nodes in the finer grid and for computing a bilinear interpolation for each new node, as defined below
\begin{equation}
\mathcal{I}_{l}^{2l} f(x, y)=
\begin{cases}
\frac{1}{4}(f(x, y - l) + f(x, y + l) & \for x \bmod 2 = 0 \\ \: \: + f(x - l, y) + f(x + l, y))  &  \ \quad \land \:  y \bmod 2 = 0 \\
\frac{1}{2}(f(x - l, y) + f(x + l, y)) & \for x \bmod 2 = 0\\ 
\frac{1}{2}(f(x, y - l) + f(x, y + l))& \for y \bmod 2 = 0 \\
f(x, y) & \text{otherwise} 
\end{cases}
\end{equation}
Since the direction is from a coarse to fine grid, there is no procedure of going back to the starting level like in the V--cycle in multigrid. However, the \textit{restriction} operator is required in the initialization to represent images and the initial guess of deformation on the coarse grid. This operation is performed by a scaled average of neighbors around the coarse node and a stencil representation of this operator is
\begin{equation}
\frac{1}{16}
\begin{pmatrix}
1 & 2 & 1 \\
2 & 4 & 2 \\
1 & 2 & 1
\end{pmatrix}
\end{equation}
\begin{figure}[htb]
	\centering
	\includegraphics[width=0.6\textwidth]{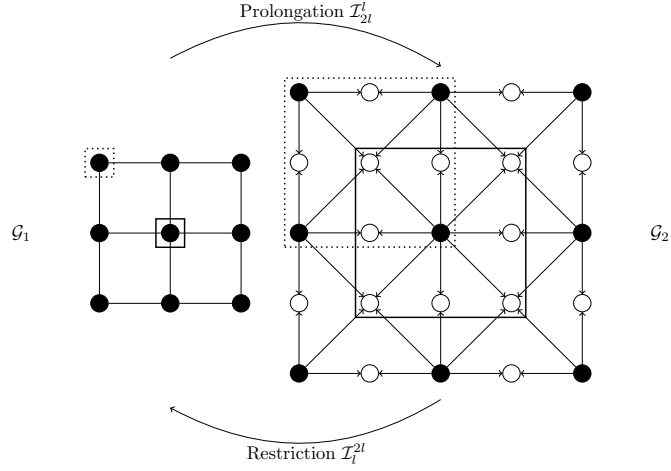}
	\caption{An example of a coarse grid $\mathcal{G}_{1}$ and a finer grid $\mathcal{G}_{3}$. On a fine grid, nodes copied from a coarse grid are depicted as black, and new nodes are white. For each new node, arrows depict nodes contributing to the bilinear interpolation $\mathcal{I}_{2l}^{l}$. Dashed and solid frames symbolize the restriction process. On the finer grid, a frame identifies nodes to which the stencil operator $\mathcal{I}_{l}^{2l}$ is applied. The result of a restriction is stored in a node indicated by a corresponding frame on the coarse grid.}
	\label{fig:chapter_img_multigrid}
\end{figure}
On coarser levels, fewer features of an image are preserved which eradicates local minima created by small structures in the image. Furthermore, the multigrid strategy has been known as an efficient iterative solver because of less computationally intensive computations on coarse levels. 

\subsection{Gradient flow}

A gradient flow solver is a generalization of the gradient descent. The idea behind this optimization technique is the same - moving in the direction of the negative gradient - but the gradient is computed with respect to a scalar product $G$. The update is defined as an ordinary differential equation
\begin{align}
\frac{\partial \phi}{\partial t} &= -\mathrm{grad}_{G} \mathrm{E}[\phi]
\end{align}
The scalar product $G$ is selected in a way to help the minimization process avoid local minima. It has been shown that this ordinary differential equation can be reformulated as
\begin{align}
\label{eq:gradient}
\frac{\partial \phi}{\partial t} &= -A^{-1} \mathrm{E}'[\phi]
\end{align}
$\mathrm{E}'$ refers to the first variation of the functional $\mathrm{E}$ i.e. the generalization of the first derivative of a function of one variable to functionals. The other component $A^{-1}$ is selected to smooth the functional. The smoother is applied only in the non--rigid registration. \\
The equation is discretized in both spatial and time domain. For the latter, a forward Euler scheme is applied
\begin{align}
\frac{\partial \phi}{\partial t} &= \frac{\phi^{k+1} - \phi^{k}}{\tau}
\end{align}
Where $\tau$ is a step size and $\phi^{k}$ is an approximation of deformation from $k$-th iteration. The discrete form of equation~\ref{eq:gradient} is obtained
\begin{align}
\phi^{k+1} &= \phi^{k} -\tau A^{-1} \mathrm{E}'[\phi^{k}]
\end{align}
The step size is selected to ensure the convergence and make the update process more efficient by choosing in each iteration the largest $\tau$ still guaranteeing a decrease of energy. For this problem, an Armijo rule with widening is applied to the energy function of a new deformation $\mathrm{E}(\phi^{k+1})$
\begin{align}
\Phi(\tau) &= \mathrm{E}[ \phi^{k} -\tau A^{-1} \mathrm{E}'[\phi^{k}] ]
\end{align}
A $\tau$ value is selected such that all conditions below are satisfied for $0 < \sigma < 1$
\begin{align}
\left\{
\begin{array}{ll}
	\frac{ \Phi(\tau) - \Phi(0) }{ \Phi'(\tau) \tau} > \sigma \\
	\tau \leq \tau_{max}
	\end{array}
\right.
\end{align}
This ensures that the decay of energy $\Phi$ is at least $\sigma$ times larger than the expected decrease in energy, given by the derivative $\Phi'$. In the implementation employed for this problem, $\sigma = 0.5$.

\subsection{Spatial discretization}

The image domain $\Omega$ is mapped into a uniform rectangular mesh i.e. a mesh consisting of $M$ equal nodes $N_{i}$ in shape of a rectangle. Then, a canonical basis needs to be constructed. For this problem, piecewise bilinear have been selected as basis functions $\varphi_{i}$.
The image function $f$ expressed in the nodal basis is presented in the next equation
\begin{align}
f &= \sum_{j=1}^{M} f_{j} \varphi_{j}
\end{align}
An FE discretization allows introducing the mass matrix $M$ to discretize the integration over domain $\Omega$
\begin{align}
M_{i, j} &= \int_{\Omega} \varphi_{i} \varphi_{j} \mathrm{d}x 
\end{align}
As a result, a new equation for a mean of an image can be derived
\begin{equation}
\begin{split}
\mean{f} &= \frac{1}{\Omega} \int_{\Omega} f \mathrm{d}x \\
&= \sum_{i=1}^{M} \sum_{j=1}^{M} \int_{\Omega} f_{i} \varphi_{i} \varphi_{j} \mathrm{d}x \\
&= M F \mathbbm{1}
\end{split}
\end{equation}
where $F$ denotes a vector representation of $f$ in the nodal basis and $\mathbbm{1}$ is a vector of ones. Furthermore, the standard deviation becomes
\begin{equation}
\begin{split}
\sigma_{f} &= \sqrt{\frac{1}{|\Omega|} \int_{\Omega} (f - \mean{f})^{2} \mathrm{d}x}  \\
&= \sqrt{M(F - \mean{F})^2}
\end{split}
\end{equation}
And the normalized cross--correlation can be formulated as
\begin{equation}
\begin{split}
\mathrm{NCC}[\mathcal{R}, \mathcal{T}] &= \frac{1}{|\Omega|} \int_{\Omega} \frac{\mathcal{R} - \mean{\mathcal{R}}}{\sigma_{\mathcal{R}}} \frac{\mathcal{T} - \mean{\mathcal{T}}}{\sigma_{\mathcal{T}}} \mathrm{d}x \\
&= M \tilde{\mathcal{R}} \tilde{\mathcal{T}}
\end{split}
\end{equation}
where $\tilde{\mathcal{R}}$ and $\tilde{\mathcal{T}}$ are FE representations of normalized reference and template images.

\subsection{Summary}

\begin{algorithm}[H]
	\caption{Multilevel gradient flow for image registration.}
	\begin{algorithmic}[1]
		\ForModS{$i$}{$m_{1} - 1$}{$m_{0}$}{-1} \Comment{Initialize coarse grids}
		\State $\mathcal{R}^{i} \gets restriction(\mathcal{R}^{i+1})$
		\State $\mathcal{T}^{i} \gets restriction(\mathcal{T}^{i+1})$
		\State $\phi^{i} \gets restriction(\phi^{i+1})$ \Comment{Only for a non--rigid deformations}
		\EndForModS
		\ForMod{$i$}{$m_{0}$}{$m_{1}$}
		\State $\phi^{i} \gets solve(\mathcal{R}^{i}, \mathcal{T}^{i}, \phi^{i})$ \Comment{Gradient flow}
		\If{$i < m_{1}$} \Comment{Prolongate deformation to a finer grid}
		\State $\phi^{i+1} \gets prolongate(\phi^{i})$ \Comment{Only for a non--rigid deformations}
		\EndIf
		\EndForMod
	\end{algorithmic}
	\label{alg:multigrid_gradient}
\end{algorithm}

The method is summarized on Algorithm~\ref{alg:multigrid_gradient}. The deformation $\phi^{i}$ is prolongated and restricted only in the non--rigid case. A rigid transformation consists only of three parameters, and it does not depend on grid size. \\
Given relatively low differences between two consecutive frames, setting an identity transformation as the initial guess should not prevent the algorithm from finding a decent solution. Multilevel start and end levels are selected by the user, and they provide a firm boundary on the outer loop. On the other hand, the inner loop requires proper stopping criteria. We employ two criteria:
\begin{itemize}
	\item convergence $\epsilon$ \\
	stop the computation if a difference in energy $\mathrm{E}[\phi^{k+1}] - \mathrm{E}[\phi^{k}]$ is less than the $\epsilon$
	\item maximum number of iterations \\
	perform at most $iter\_max$ iterations
\end{itemize}

In the following chapters, we refer to an implementation of this technique as the function \textbf{A}. It accepts two images with subsequent indices, $f_{i}$ and $f_{i+1}$, and the initial guess for deformation $\phi_{0}$ with a default value of this parameter equal to an identity transformation $I_{\phi}$. The function applies the proposed algorithm to estimate a deformation $\phi_{i, i+1}$
\begin{align}
\forall i \in \mathbb{N} \: \phi_{i, i + 1} &= \mathbf{A}(f_{i}, f_{i+1}, \phi_{0} = I_{\phi})
\end{align}

\section{Registration for series of images}

The previous section introduced a multilevel gradient flow for registration of two consecutive frames. However , we operate on a sequence of $n + 1$ images $f_{0}, f_{1}, \dots, f_{n}$. Therefore, a strategy for registration of series of frames is necessary. \\
Given deformation $\phi_{0, 1}$ which estimates $f_{1} \circ \phi_{0, 1} \approx f_{0}$,  and deformation $\phi_{1, 2}$ providing an approximation $f_{2} \circ \phi_{1, 2} \approx f_{1}$, we can safely assume that a composition of deformations $\phi_{1,2} \circ \phi_{0, 1} $ is a decent guess of deformation registering $f_{0}$ and $f_{2}$, since
\begin{equation}
\begin{split}
f_{2} \circ (\phi_{1,2} \circ \phi_{0, 1}) &= (f_{2} \circ \phi_{1, 2}) \circ \phi_{0, 1} \\ &\approx f_{1} \circ \phi_{0, 1} \\ &\approx f_{0} 
\end{split}
\end{equation}
Thus, we can approximate the deformation for two non--consecutive frames by using a specific initial guess. We reuse the function $\mathbf{A}$ defined in the previous section to define a new function $\mathbf{B}$ such that
\begin{equation}
\begin{split}
\forall i, k \in \mathbb{N}, |k - i| > 1\ \forall j \in \mathbb{N}, i < j < k\ \phi_{i, k}  \: &= \: \mathbf{B}(\phi_{i, j}, \; \phi_{j, k}) \\
&= \mathbf{A}(f_{i}, f_{k}, \phi_{j, k} \circ \phi_{i, j})
\end{split}
\end{equation}
In particular, if we iterate consecutively from the first image
\begin{align}
\label{eq:chapter_img_consecutive_b}
\forall i \in \mathbb{N}, i > 1 \: \phi_{0, i} &= \mathbf{B}(\phi_{0, i - 1}, \phi_{i-1, i})
\end{align}
\begin{figure}[htb]
	\centering
	\includegraphics[width=\textwidth]{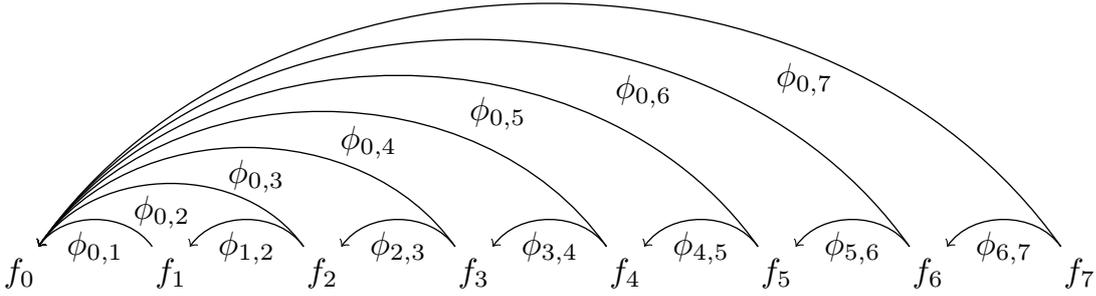}
	\caption{An image registration process for a series of frames. For an image $f_{i}$, the partial result from its predecessor $\phi_{0,i-1}$ is combined with a neighbor deformation $\phi_{i-1, i}$ to register the image to the reference frame $f_{0}$.}
	\label{fig:chapter_img_series_matching}
\end{figure}
The algorithm (\ref{eq:chapter_img_consecutive_b}) is depicted in Figure~\ref{fig:chapter_img_series_matching}. A series of aligned images may be averaged to obtain a single image storing all information acquired in the experiment. Later, the \textit{series averaging procedure} may be used to obtain results of a better quality by performing multiple iterations of the algorithm (\ref{eq:chapter_img_consecutive_b}). For details, please refer to \cite{Berkels201446}.

We formally define the problem as consisting of two steps - a preprocessing stage to generate neighbor transformations and the general registration for non--consecutive frames. In next sections, we refer to the second stage as the \textit{image registration problem}.
\begin{definition}{\textbf{Image series registration}}\label{def:img_reg} Given a sequence of images $f_{0}, f_{1}, \dots, f_{n}$, perform a \textit{preprocessing} step to register each pair of frames
\begin{align*}
\phi_{i, i + 1} &= \mathbf{A}(f_{i}, f_{i+1}, I_{\phi})
\end{align*}
and the \textit{series registration} step to align each image $f_{i}$ to the reference $f_{0}$.
\end{definition}

\section{A note on the associativity}

The standard iterative algorithm outlined in the previous section would require three applications of function $\mathbf{B}$
\begin{equation}
\label{eq:chapter_img_def_str}
\begin{split}
\phi_{0, 2} &= \mathbf{B}(\phi_{0, 1}, \phi_{1, 2}) \\
\phi_{0, 3} &= \mathbf{B}(\phi_{0, 2}, \phi_{2, 3}) \\
\phi_{0, 4} &= \mathbf{B}(\phi_{0, 3}, \phi_{3, 4})
\end{split}
\end{equation}
It is not the only way of computing $\phi_{0,4}$. The process can be split between two processors computing independently $\phi_{0, 2}$ and $\phi_{2, 4}$. Then, results are merged to estimate $\phi_{0,4}$
\begin{equation}
\label{eq:chapter_img_alter_str}
\begin{split}
\phi_{0, 2} &= \mathbf{B}(\phi_{0, 1}, \phi_{1, 2}) \\
\phi_{2, 4} &= \mathbf{B}(\phi_{2, 3}, \phi_{3, 4}) \\
\phi_{0, 4} &= \mathbf{B}(\phi_{0, 2}, \phi_{2, 4})
\end{split}
\end{equation}
The first strategy~(\ref{eq:chapter_img_def_str}) returns such deformation
\begin{align}
\phi_{0,4}(x) \: &= \: \begin{pmatrix}
0.999 & 1.074  \cdot 10^{-5} \\
-1.074 \cdot 10^{-5} & 0.999
\end{pmatrix}
\begin{pmatrix}
x_{0} \\
x_{1}
\end{pmatrix}
+ \begin{pmatrix}
4.743 \cdot 10^{-4} \\
3.431 \cdot 10^{-3}
\end{pmatrix}
\end{align}
The alternative approach~(\ref{eq:chapter_img_alter_str}) produces a result $\phi'_{0, 4}$ which is clearly different from the previous one
\begin{align}
\phi'_{0,4}(x) \: &= \: \begin{pmatrix}
0.999 & -1.424  \cdot 10^{-4} \\
1.424 \cdot 10^{-4} & 0.999
\end{pmatrix}
\begin{pmatrix}
x_{0} \\
x_{1}
\end{pmatrix}
+ \begin{pmatrix}
5.158 \cdot 10^{-4} \\
4.325 \cdot 10^{-3}
\end{pmatrix}
\end{align}
Deformed images have been verified to represent images indistinguishable by a human operator. For further analysis, we compare these deformations by computing energy along a line going through both solutions. We evaluate the energy functional for different deformations
\begin{align}
\mathrm{E}[ t\phi_{0, 4}  + (1-t)\phi'_{0,4}]
\end{align}
Obviously, for $t = 0$ we have $\mathrm{E}[\phi'_{0,4}]$ and for $t = 1$ the computed functional is $\mathrm{E}[\phi_{0,4}]$. The function is plotted against different values of $t$ in Figure~\ref{fig:chapter_img_associativity}. In the second case, the gradient solver has not been able to reach $\phi_{0,4}$ because it got stuck at a local minimum $\phi'_{0,4}$.
An important conclusion here is that the difference between deformations is small enough to not have any impact on the image. Thus, we consider two solutions to be identical if the difference between them appears only at the sub--pixel level. We formally define it by introducing the concept of \textit{approximate associativity}, a weaker form of the associativity
\begin{definition}{\textbf{Approximate associativity}}\label{def:appr_ass} A binary operation $\odot$ defined on a set $S$ is called \textit{approximately associative} if
\begin{align}
\forall a, b, c \in S\ (a \odot b) \odot c \approx a \odot (b \odot c)
\end{align}
where $\approx$ defines two objects as equal if they are indistinguishable in the context of the operation represented by $\odot$ i.e. they represent the same final result.
\end{definition}
Thus, changing the order of operations during evaluation may affect the exact representation of the result. Obviously, each associative operator is approximately associative at the same time. We conclude this chapter with a final remark.
\begin{remark}
Function $\mathbf{B}$ is approximately associative.
\end{remark}
\begin{figure}[htb]
	\centering
	\includegraphics[width=\textwidth]{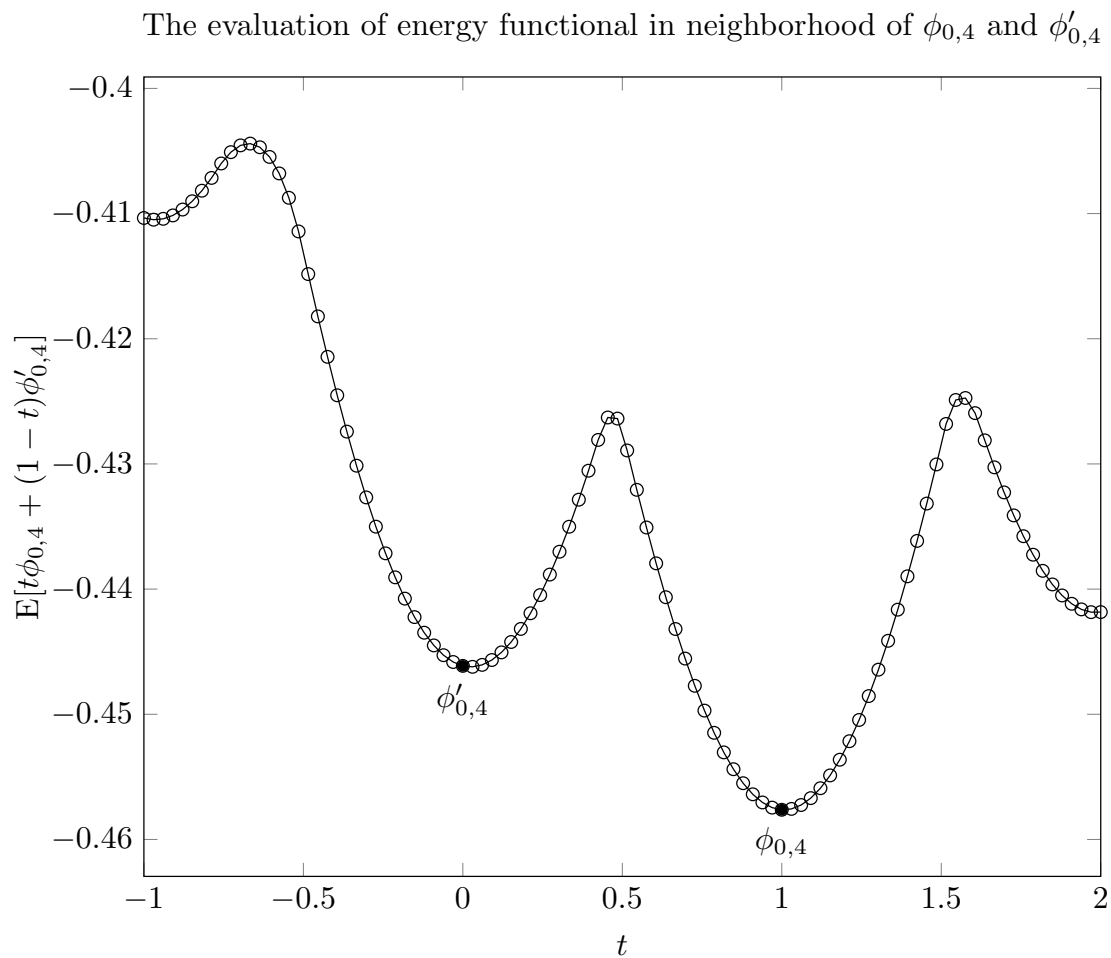}
	\caption{An evaluation of the energy function along a line passing through $\phi_{0,4}$ and $\phi'_{0,4}$. Both solutions appear to be a local minimum and the problem is ill--posed.}
	\label{fig:chapter_img_associativity}
\end{figure}





\chapter{Prefix sum}
\label{chap:prefix_sum}

In this chapter, we introduce the prefix sum and discuss parallelization strategies known from the literature. We use names \textit{prefix sum} and \textit{scan} interchangeably.

\section{Introduction}

A prefix sum is an operation accepting a sequence of elements and generating a new sequence of partial sums. The operation has been described under many names in literature, including names such as inclusive prefix sum, scan, cumulative sum. The definition is as follows:
\begin{definition}{\textbf{Prefix sum}}\label{def:inclusive_scan} A prefix sum operation applies a binary approximately associative operator $\odot$ to a sequence $(x_{i})$
\begin{align*}
x_{1}, x_{2}, x_{3}, \dots , x_{n-1}
\end{align*}
generating a new sequence $(y_{i})$
\begin{align*}
y_{1} \: &= \: x_{1} \\
y_{2} \: &= \: x_{1} \odot x_{2} \\
y_{3} \: &= \: x_{1} \odot x_{2} \odot x_{3} \\
& \dots \\
y_{n-1} \: &= \: x_{1} \odot x_{2} \odot \dots \odot x_{n-1} = \odot_{i=1}^{n-1} x_{i}
\end{align*}
\end{definition}
Our definition differs from the one commonly used in the literature in that we permit approximately associative operators. With such operator, the result may be affected by changes in the order of operator evaluations, but it is guaranteed to be correct. This extension is required to treat the image registration problem as a prefix sum.\\
We use the notation $\odot_{i=1}^{n-1}$ to describe an iterative application of the binary operator $\odot$ and $x_{1, n-1}$ to denote the product of such application. The definition above presents an \textit{inclusive} prefix sum where each new element with index $i$ is a sum of first $i$ elements of input sequence. In an alternative approach, the new $i$-th value is a sum of first $i-1$ input values. The $i$-th value is \textit{excluded}, therefore this approach is known as an \textit{exclusive} prefix sum. The method is also known under a name prescan.
\begin{definition}{\textbf{Exclusive prefix sum}} An exclusive prefix sum operation applies a binary approximately associative operator $\odot$, with an identity element $I_{\odot}$, to a sequence $(x_{i})$
	\begin{align*}
	x_{1}, x_{2}, x_{3}, \dots , x_{n-1}
	\end{align*}
	generating a new sequence $(y_{i})$
	\begin{align*}
	y_{1} \: &= \: I_{\odot} \\
	y_{2} \: &= \: x_{1} \\
	y_{3} \: &= \: x_{1} \odot x_{2}\\
	& \dots \\
	y_{n-1} \: &= \: x_{1} \odot x_{2} \odot \dots \odot x_{n-2} = \odot_{i=1}^{n-2} x_{i}
	\end{align*}
\end{definition}

Inclusive and exclusive prefix sums are closely connected with each other. After all, for an input sequence of length $n$, $n-1$ output elements are exactly the same in both scans, only placed in different positions. Obtaining an exclusive result from an inclusive prefix sum is trivial because all necessary values are already computed, and it is sufficient to shift results by one position to the right and place the identity element $I_{\odot}$ on the very first position. On the other hand, computing an inclusive prefix sum from an exclusive one may require additional computation. Results are shifted by one position to the left, and the binary operator is applied to $y_{n-2}$ and $x_{n-1}$ to compute the last reduction value $y_{n-1}$. In some algorithms, such as Blelloch parallel prefix sum described in section~\ref{subsection:blelloch}, this step is unnecessary because the full reduction has already been computed.

The scan algorithm has been originally proposed for APL programming language\cite{Iverson:1962:PL:1098666}. Prefix sum is a natural representation for a binary addition of two numbers in digital circuits\cite{Ladner:1980:PPC:322217.322232} and a significant progress have been achieved to improve the performance and internal design of arithmetical circuits known as parallel prefix adders. Modern parallel prefix algorithms are usually based on circuit design research. \\
The parallel prefix sum has been described as a common pattern and fundamental building block in parallel applications\cite{scanEncyclopedia}\cite{McCool:2012:SPP:2385466}. It has been found to be helpful and useful for implementation and parallelization of multiple computer science problems with serial dependencies, including, but not limited to, polynomial evaluation, various sorting algorithms, solving recurrence equations, graph and tree algorithms\cite{Chatterjee:1990:SPV:110382.110597}\cite{BlellochTR90}\cite{Cormen:2001:IA:580470}. The usefulness of prefix sum has motivated a proposal of \textit{scan vector model} for parallel computations, where prefix sums are given as a unit time primitive\cite{Blelloch:1990:VMD:91254}.

A proof for the importance of this algorithm is the prevalence of various implementations of the prefix sum in major programming languages. C++ Standard Template Library\cite{ISO14882} includes a sequential implementation of an inclusive prefix sum \code{std::partial\_\\sum}. The order of summation is explicitly defined by the standard and associativity of a binary operator is not required. The most recent update of standard\footnote{At the time of writing, C++17 was feature-complete but an official ISO standard has not been published.}\cite{N4659} provides several overloads of \code{std::inclusive\_scan} and \code{std::exclusive\_scan}, both sequential and parallel via appropriate execution policies. Associativity of the binary operator is required, otherwise the behavior is nondeterministic. A similar set of overloads \code{std::transform\_inclusive\_scan} and \code{std::transform\_exclusive\_scan} transforms input range with an unary operator before a prefix sum is computed. Other parallel implementations are provided with an \code{inclusive\_scan} and \code{exclusive\_scan} in OpenCL-based Boost.Compute\cite{BoostCompute} and and CUDA-based Thrust\cite{hoberock2010thrust}, a multithreaded implementation in Intel TBB\cite{Reinders:2007:ITB:1461409} and a distributed implementation of \code{MPI\_Scan} and \code{MPI\_Exscan}\cite{MPISpec}. \\
The prefix sum has been generalized to perform independently multiple scans on \textit{segments}, disjoint subsequences of input data, therefore this operation is known as a segmented prefix sum\cite{Blelloch:1990:VMD:91254}. An additional input is required to distinguish how segments are located in the input sequence $(x_{i})$. It could be a bit sequence of the same length as input data for prefix sum, indicating where new segment starts, or a shorter sequence of integers containing lengths of consecutive segments. We provide a formal definition for the former case.
\begin{definition}{\textbf{Segmented prefix sum}} A segmented prefix sum operation takes as an input two sequences of the same length, a data sequence $(x_{i})$ and a flag sequence of bits $(b_{i})$, and applies a binary approximately associative operator $\odot$, generating a new sequence $(y_{i})$
	\begin{align*}
	y_{1} \: &= \: x_{1} \\
	y_{2} \: &= \: (y_{1} \otimes b_{2}) \odot x_{2} \\
	& \dots \\
	y_{n-1} \: &= \: (y_{n-2} \otimes b_{i-1}) \odot x_{n-1}
	\end{align*}
where binary operator $\otimes$ is defined as follows
	\begin{equation*}
	x \otimes b =\begin{cases}
	I_{\odot}, & \text{if $b = 1$}.\\
	x, & \text{otherwise}.
	\end{cases}
	\end{equation*}
\end{definition}
Segmented prefix sum has been found useful in parallelization of the quicksort algorithm\cite{Blelloch:1990:VMD:91254}. An example of an implementation is \code{MPI\_Scan} which computes $k$ independent prefix sums for input array of length $k$. \\
A further variation of this technique is known as multiprefix\cite{Blelloch:2010:PA:1882723.1882748}. There, multiple exclusive scans are computed for data represented by pairs $(k, a)$, where $k$ is a key encoding in which subsequence is the value $a$ located. Segments are required to be neither contiguous nor disjoint. The multiprefix operation can be performed with work--efficient algorithm of $O(\sqrt{n})$ span\cite{Sheffler:1993:IMO:165231.166115}.

\section{Prefix sum for registration problem}

In the previous chapter, we have defined the process of applying distinct operators $\mathbf{A}$ and  $\mathbf{B}$ to register a sequence of images. In this chapter, we focus on the second phase of the process where neighbor deformations are processed to align all images to the first one. 
The initial application of function $\mathbf{A}$ to images can be performed independently and it is not relevant to the analysis. \\
A short investigation of the approach reveals a striking similarity between registration process and prefix sum. Indeed, for any final deformation $\phi_{0, i}$ we observe
\begin{equation}
\label{eq:chapter_prefix_sum_b}
\begin{split}
\phi_{0, i} &= \mathbf{B}(\phi_{0, i - 1}, \; \phi_{i - 1, i}) \\ &= \mathbf{B}( \mathbf{B}(\phi_{0, i - 2}, \phi_{i-2, i-1}), \phi_{i-1,i}) \\ &= \mathbf{B}( \mathbf{B}( \mathbf{B}(\dots) , \phi_{i-2, i-1}), \phi_{i-1,i}) \\ &= \: \phi_{0, 1} \odot_{B} \phi_{1, 2} \odot_{B} \dots \odot_{B} \phi_{i-2, i-1}
\end{split}
\end{equation}
where binary operator $\odot_{B}$ is defined as follows
\begin{equation}
\phi_{i, j} \odot_{B} \phi_{j, k} \: = \: \mathbf{B}(\phi_{i, j}, \phi_{j, k})
\end{equation}
The new operator inherits approximate associativity from function $\mathbf{B}$ and by the definition~\ref{def:inclusive_scan}, we prove that the problem ~(\ref{eq:chapter_prefix_sum_b}) may be represented in terms of a prefix sum. A stronger formulation of prefix sum with a regular associativity would not allow using function $\mathbf{B}$ as an operator for prefix sum. This result allows to express the parallelization of image registration as a parallelization of prefix sum, and the parallel prefix pattern has been found to be applicable in yet another problem.

\section{Parallel prefix sum}

Upon initial inspection, the inevitable sequential nature of prefix sum is a bad indicator for finding an efficient parallelization strategy. Undoubtedly, in this case, it is not possible to achieve a perfect linear scaling, but several decades of research have produced numerous procedures with varying efficiency. \\
Below we describe distinct approaches for parallelization of prefix sum. A lot of recent work has been done on researching and optimizing different algorithms for SIMD architectures such as GPGPU\cite{Sengupta:2007:SPG:1280094.1280110}\cite{cudaBook}\cite{CS2009PSSA}. New, hybrid strategies have been developed to fit their execution model\cite{Ha:2013:SWD:2553646.2553648}\cite{Dotsenko:2008:FSA:1375527.1375559}. Design goals and improvements are, however, related to the specific execution and memory models, such as removing bank conflicts or proper utilization of hierarchical memory.

For the simplicity of analysis, we assume that there are exactly as many workers as data elements and that the length of input data is a power of two. The Chapter~\ref{chap:distr_prefix_sum} presents the general scheme for prefix sum without those assumptions. For simplicity's sake, for all discussed algorithms we assume a constant running time $C_{\odot}$ of the binary operator $\odot$. \\
Our comparative analysis is based on the \textit{span}\cite{McCool:2012:SPP:2385466} or \textit{depth}\cite{Blelloch:1996:PPA:227234.227246} of an algorithm i.e. length of the critical path determined by the longest sequence of computation performed during the execution. A comparison of a span between serial and parallel algorithm gives an upper bound on attainable parallelism.
We provide work complexity as well, which estimates the span in a case of a fully serialized execution. An algorithm is considered to be \textit{work--efficient} if its work complexity scales linearly with the size of input data.
In our PRAM algorithms, we assume the input data to be allocated in a single block of memory with zero-based indexing. The for loop has an inclusive upper bound in our notation. As an example, a for loop iterating from $1$ to $N$ executes $N$ iterations.

We begin the description by introducing the serial algorithm for prefix sum computation on Listing~\ref{alg:serial}. Figure~\ref{fig:chapter_scan_serial_scheme} presents an example of serial prefix sum. Each layer of nodes depicts a single iteration of the algorithm and filled nodes represent an application of binary operator. The lines joining nodes represent communication between workers. The algorithm is a direct mapping from definition~\ref{def:inclusive_scan}. It is worth noting that this solution is the most optimal in terms of work--efficiency.
\begin{figure}
	\centering
	\includegraphics[width=0.8\textwidth]{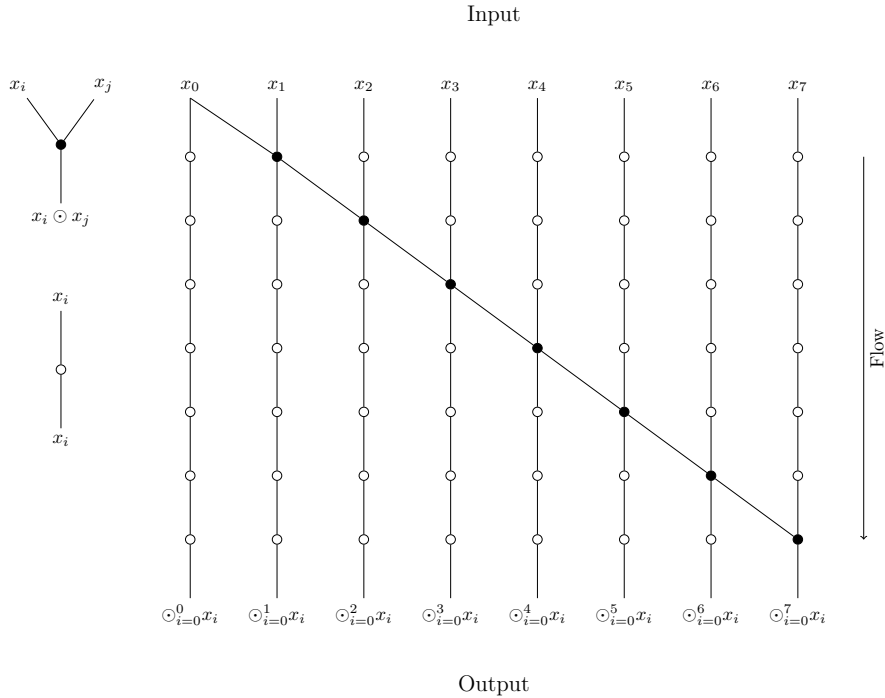}
	\caption{An example of serial prefix sum on 8 data elements. Results are produced in $7$ steps and $7$ applications of binary operator.}
	\label{fig:chapter_scan_serial_scheme}
\end{figure}
Total span of the algorithms is simply equal to $N - 1$ applications of the operator
\begin{equation}
\begin{split}
S_{S}(N) \: &= \: N - 1\\
W_{S}(N) \: &= \: N - 1
\end{split}
\end{equation}
\alglanguage{pseudocode}
\begin{algorithm}[H]
	\caption{A pseudocode for serial prefix sum of $N$ deformations.}
	\begin{algorithmic}[1]
		\ForMod{$i$}{$1$}{$N-1$}
		\State $data[i] = data[i - 1] \odot data[i]$
		\EndForMod
	\end{algorithmic}
	\label{alg:serial}
\end{algorithm}
\subsection{Blelloch scan}
\label{subsection:blelloch}
\begin{figure}
	\centering
	\includegraphics[width=0.8\textwidth]{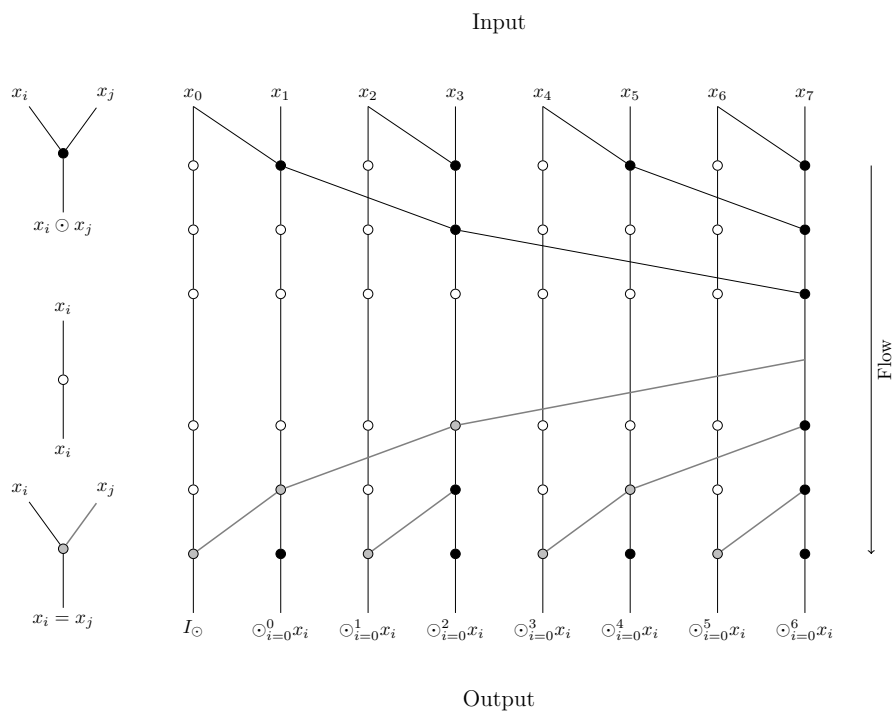}
	\caption{An example of Blelloch parallel prefix sum on 8 data elements. Results are produced in $6$ steps and $14$ applications of the binary operator. A black node represents an application of the binary operator during either up--sweep or down--sweep phase (lines 3 and 14 of Algorithm~\ref{alg:blelloch}, respectively). 	A gray node represents a part of down--sweep where a left child receives a value from its parent and no computation is performed. This node corresponds to line 12 of Algorithm~\ref{alg:blelloch}.}
	\label{fig:chapter_scan_blelloch_scheme}
\end{figure}
One of the most popular parallel prefix sum strategies has been researched and presented by Guy Blelloch\cite{Blelloch:1989:SPP:76108.76113}. The tree--based approach correlates with Brent--Kung parallel prefix adder\cite{Brent:1980:CCB:800141.804666}, described separately later. Figure~\ref{fig:chapter_scan_blelloch_scheme} depicts an example of Blelloch prefix sum on eight image deformations. \\
The algorithm is usually defined as two sweeps on a binary tree. An up-sweep, from leaves to the root, produces a reduction of all data elements, in our case $x_{0, 7}$. This procedure requires no more steps than a height of three which is equal to $\log_{2}{N}$. Then, a down--sweep is performed, from the root to leaves, and in each iteration workers proceed in a triple of a parent, left and right child. The left child, illustrated in the figure with a gray node, receives a partial result from its parent and the right child, depicted in the figure with a black node, computes another partial result. In practice, parent and right child refer to the same worker in different iterations. Once again, a full sweep requires  $\log_{2}{N}$ steps to finish. Listing~\ref{alg:blelloch} presents an example of the algorithm. 
\alglanguage{pseudocode}
\begin{algorithm}[H]
	\caption{A pseudocode for Blelloch parallel prefix sum.}
	\begin{algorithmic}[1]
		\ForMod{$i$}{0}{$\log_{2}{N}-1$} \Comment{Up-sweep traversal of the tree.}
		\ForModP{$j$}{$2^{i}$}{$N$}{$2^{i+1}$}
		\State $data[j] = data[j - 2^{i+1}] \odot data[j]$
		\EndForModP
		\EndForMod
		\State
		\State $data[N-1] = I_{\odot}$ \Comment{$data[N-1]$ stored the final reduction}
		\State
		\ForMod{$i$}{$\log_{2}{N} - 1$}{0} \Comment{Down-sweep traversal of the tree.}
		\ForModP{$j$}{$0$}{$N - 1$}{$2^{i+1}$}
		\State $temp = data[j + 2^{i} - 1]$ \Comment{Save value of left child.}
		\State $data[j+2^{i} - 1] \: = \: data[j + 2^{i+1} - 1]$ \Comment{Copy value to left child.}
		\State $data[j+2^{i+1}-1] = temp \odot data[j + 2^{i+1} - 1]$ \Comment{Apply left} \State \Comment{child's value to right child.}
		\EndForModP
		\EndForMod
	\end{algorithmic}
	\label{alg:blelloch}
\end{algorithm}
The algorithm computes an exclusive prefix sum, but an inclusive prefix sum may be computed after a small modification. For a $i$-th worker, the inclusive value may be obtained either through one additional application of operator or by receiving the value of exclusive scan from $i+1$-th worker. The former approach is preferable for applications where it is less expensive to apply the operator rather than communicate with other workers. This is especially significant on message-passing systems. On the other hand, the latter approach is desirable for examples where the binary operator is computationally intensive. In this scenario, the missing value for last worker $x_{0, N-1} $ is already computed by the same worker at the end of an up-sweep. \\
The span of Blelloch prefix sum is bounded by a double traversal of binary tree which scales logarithmically with the number of workers
\begin{equation}
S_{B}(N) \: = \: 2 \cdot \log_{2}{N}
\end{equation}
A number of applications of the operator scales linearly with the number of data elements. This makes the Blelloch algorithm work--efficient
\begin{equation}
\begin{split}
W_{B}(N) &= 2 \cdot \sum_{i=0}^{\log_{2}{N} - 1} \frac{N}{2^{i+1}} \\ &= 2 \cdot \frac{N}{2}\frac{1 - (\frac{1}{2})^{\log_{2}{N}}}{1 - \frac{1}{2}} \\ &= 2 \cdot N \cdot (1 - \frac{1}{N}) \\ &= 2 \cdot (N - 1)
\end{split}
\end{equation}
\subsection{Brent--Kung}
Brent--Kung adder\cite{Brent:1980:CCB:800141.804666} is a work-efficient circuit design for parallel prefix sum. Figure~\ref{fig:chapter_scan_brent_kung_scheme} presents an example of Brent--Kung strategy on eight image deformations.
The up--sweep traversal is exactly the same as in Blelloch prefix sum and produces a reduction of all data elements. In the down--sweep, a breadth--first search (BFS) is performed where only right child adds partial result from its parent. An implementation may not always synchronize between workers and given lack of data dependencies, some parts of second tree traversal may be performed earlier, even during the first tree visit. This improvement, however, won't improve algorithm's span and a less greedy version simplifies the description.
\\ The strategy allows computing results in $2 \cdot \log_{2}{N} - 1$ steps, making it slightly more efficient than Blelloch method. Listing~\ref{alg:brent_kung} presents the algorithm. \\
Contrary to the Blelloch algorithm, Brent--Kung produces an inclusive prefix sum. An exclusive version of Brent--Kung\cite{CS2009PSSA}, with an integrated rejection of final reduction and propagation of operator's identity, exhibits a design very similar to the Blelloch algorithm with a slightly better depth of Brent--Kung strategy.
\begin{figure}
	\centering
	\includegraphics[width=\textwidth]{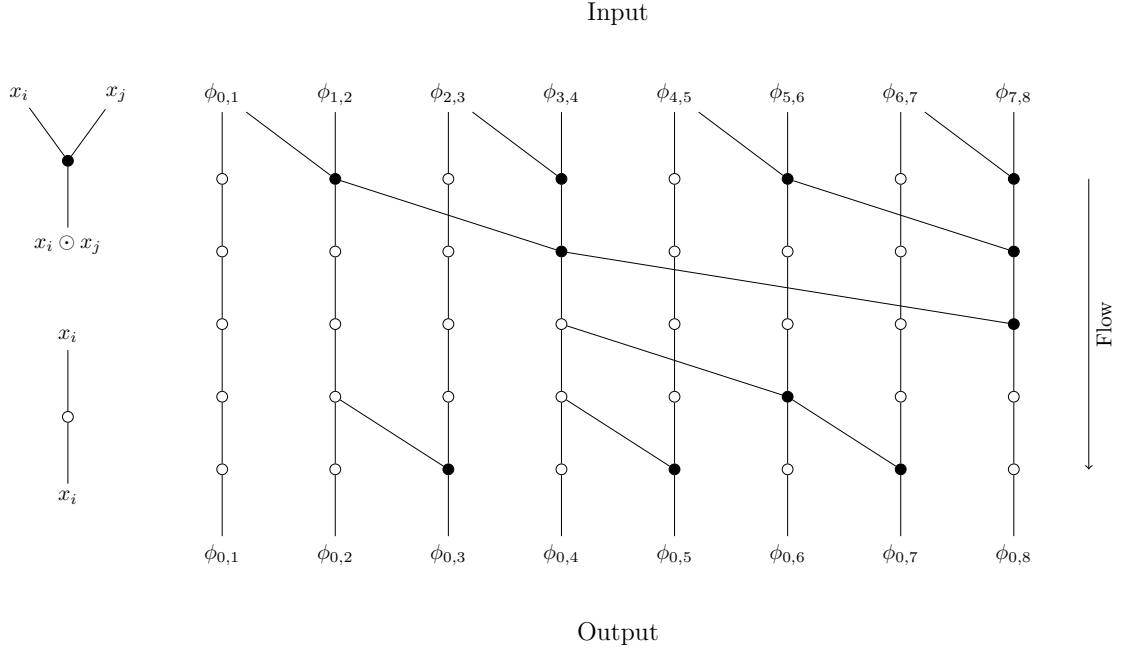}
	\caption{An example of Brent--Kung parallel prefix sum on 8 data elements. Results are produced in $5$ steps and $11$ applications of the binary operator.}
	\label{fig:chapter_scan_brent_kung_scheme}
\end{figure}
\alglanguage{pseudocode}
\begin{algorithm}[H]
	\caption{A pseudocode for Brent--Kung parallel prefix sum.}
	\begin{algorithmic}[1]
		\ForMod{$i$}{0}{$\log_{2}{N}-1$} \Comment{Up-sweep traversal of the tree.}
		\ForModP{$j$}{$2^{i}$}{$N$}{$2^{i+1}$}
		\State $data[j] = data[j - 2^{i+1}] \odot data[j]$
		\EndForModP
		\EndForMod
		\State \Comment{Iterations of an inner loop may be executed concurrently with the loop above}
		\ForMod{$i$}{$\log_{2}{N} - 2$}{0} \Comment{Down-sweep traversal of the tree.}
		\ForModP{$j$}{$2^{i+1}$}{$N - 1$}{$2^{i+1}$}
		\State \Comment{Left child is visited without any computation.}
		\State $data[j+2^{i} - 1] = data[j - 1] \odot data[j+2^{i} - 1]$
		\EndForModP
		\EndForMod
	\end{algorithmic}
	\label{alg:brent_kung}
\end{algorithm}
The span of Brent--Kung prefix sum is bounded by a full traversal of the binary tree and a breadth--first search where root does not perform any computation
\begin{equation}
S_{B}(N) \: = \: 2\log_{2}{N} - 1
\end{equation}
Estimation of performed work is similar to Blelloch case. Brent--Kung strategy is work--efficient due to linear scaling of performed work
\begin{equation}
\begin{split}
W_{B}(N) &= \sum_{i=0}^{\log_{2}{N} - 1} \frac{N}{2^{i+1}} + \sum_{i=0}^{\log_{2}{N} - 2} (\frac{N}{2^{i+1}} - 1) \\
&= N - 1 + \sum_{i=0}^{\log_{2}{N} - 2} \frac{N}{2^{i+1}} - (\log_{2}{N} - 1) \\
&= N - 1 + \frac{N}{2}\frac{1 - (\frac{1}{2})^{\log_{2}{N} - 1}}{1 - \frac{1}{2}} - (\log_{2}{N} - 1) \\
&= N - 1 + N - 2 - (\log_{2}{N} - 1) \\ &= 2 \cdot N - \log_{2}{N} - 2
\end{split}
\end{equation}
\pagebreak
\subsection{Kogge--Stone}
This parallelization strategy is based on a Kogge--Stone parallel prefix adder, firstly proposed by Peter Kogge and Harold Stone in 1973\cite{Kogge:1973:PAE:1638607.1639095}. In 1986, Hillis and Steele\cite{Hillis:1986:DPA:7902.7903} described an application of the adder to PRAM model and the algorithm has been known under a different name \textit{Hillis--Steele parallel prefix sum}. This algorithm has also been discussed under a name \textit{recursive doubling algorithm}\cite{EGECIOGLU198995}. \\ Figure~\ref{fig:chapter_scan_kogge_stone_scheme} presents an example of parallel prefix sum with Kogge--Stone approach on eight image deformations. The strategy allows to compute results in $\log_{2}{N}$ steps and each step produces $2^{N} - 1$ partial results. In each iteration, workers apply the binary operator to its own result and a partial result obtained from one of the workers on their left. Listing~\ref{alg:kogge_stone} presents the algorithm. \\
Kogge--Stone produces an inclusive prefix sum. An obvious difference with the Blelloch prefix sum is a much larger amount of work performed in all steps. Furthermore, the huge work intensity requires excessive communication. Internally, the algorithm is synchronous and each level depends on results from the previous iteration. This particular feature makes it sensitive to deviations in execution time between different applications of the binary operator $\odot$. \\
Another characteristic feature of the algorithm is the presence of Write-After-Read anti-dependencies, where a worker should not overwrite the previous result until it has been accessed. A common technique to resolve this problem is \textit{double-buffering} where two distinct arrays are used to store partial results\cite{cudaBook}.
\begin{figure}
	\centering
	\includegraphics[width=\textwidth]{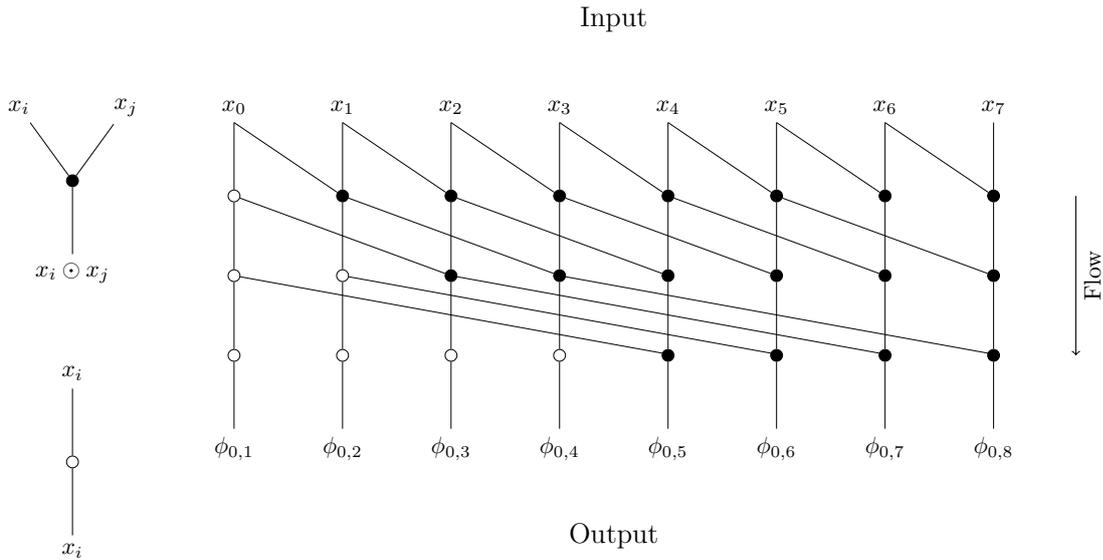}
	\caption{An example of Kogge--Stone a.k.a. Hillis--Steele parallel prefix sum on $8$ data elements. Results are produced in $3$ steps and $17$ applications of the binary operator.}
	\label{fig:chapter_scan_kogge_stone_scheme}
\end{figure}
\alglanguage{pseudocode}
\begin{algorithm}[H]
	\caption{A pseudocode for Kogge--Stone parallel prefix sum.}
	\begin{algorithmic}[1]
		\ForMod{$i$}{0}{$\log_{2}{N}-1$}
		\ForModPS{$j$}{$2(i + 1)$}{$N$}
		\State $data[j] = data[j - 2^{i}] \odot data[j]$
		\EndForModPS
		\EndForMod
	\end{algorithmic}
	\label{alg:kogge_stone}
\end{algorithm}
An estimation of time complexity for Kogge--Stone parallel prefix sum is trivial and the span is simply determined by the outer loop
\begin{equation}
S_{KS}(N) \: = \: \log_{2}{N}
\end{equation}
Work complexity is assessed by multiplying the number of steps with a number of active workers in each step
\begin{equation}
\begin{split}
W_{KS}(N) &=  \sum_{i=0}^{\log_{2}{N} - 1} N - 2^{i} \\ &= N \cdot \log_{2}{N} - \sum_{i=0}^{\log_{2}{N} - 1} 2^{i} \\ &=  N \cdot \log_{2}{N} - \frac{1 - 2^{log_{2}{N}}}{1 - 2} \\ &= N \cdot \log_{2}{N} - N + 1
\end{split}
\end{equation}
\subsection{Sklansky}
\label{chap:prefix_sum_sklansky}
This strategy is based on a first parallel prefix adder described in 1960 by Sklansky\cite{5219822}. This inclusive parallel prefix sum is similar to Kogge--Stone in work inefficiency and purely logarithmic span. The recursive nature of algorithm is visible on Figure~\ref{fig:chapter_scan_sklansky_scheme}. A divide--and--conquer approach splits the problem in half at each step, instating twice the same task for two halves of input data. \\
The strategy generates results in $\log_{2}{N}$ steps. According to a recent report\cite{CS2009PSSA}, their proposed algorithmic description for Sklansky prefix sum is the first iterative version of Sklansky prefix adder. We present a more verbose version of the algorithm on Listing~\ref{alg:sklansky}. In our opinion, it is easier to follow the flow of execution with a triple--nested loop. \\
Comparing to previously introduced prefix adders, Sklansky is the only one to have a non-constant number of \textit{fan-outs} i.e. outbound wires in a logical gate applying the operator. The example on Figure~\ref{fig:chapter_scan_sklansky_scheme} shows how a number of outputs in a node changes from two to four. Furthermore, the algorithm involves a constant number of tasks per each iteration which simplifies the mapping of work to hardware in implementations such as SIMD architectures.
\begin{figure}
	\centering
	\includegraphics[width=\textwidth]{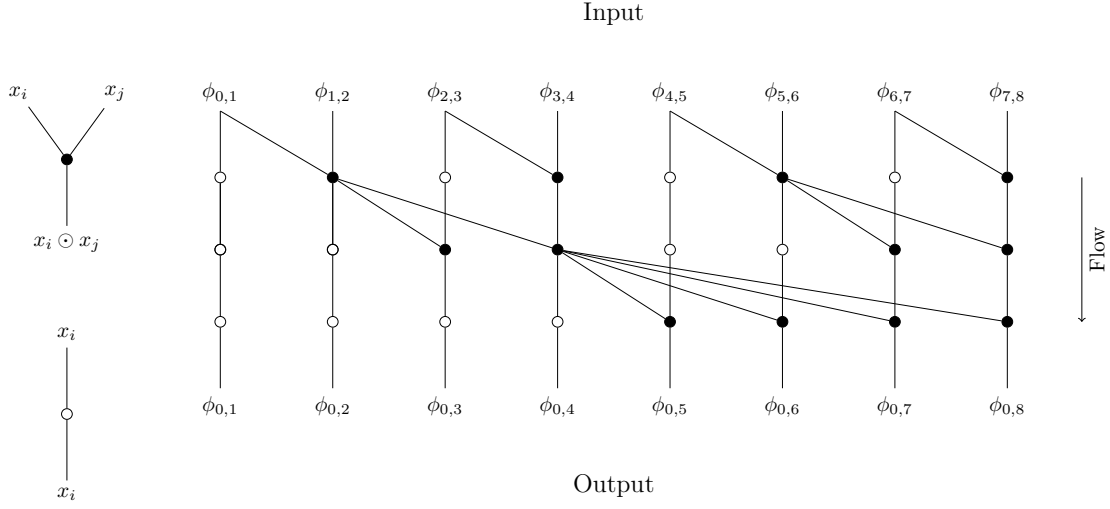}
	\caption{An example of Sklansky parallel prefix sum on $8$ image deformations. Results are produced in $3$ steps and $12$ applications of the binary operator.}
	\label{fig:chapter_scan_sklansky_scheme}
\end{figure}
\alglanguage{pseudocode}
\begin{algorithm}[H]
	\caption{A pseudocode for Sklansky parallel prefix sum.}
	\begin{algorithmic}[1]
		\ForMod{$i$}{0}{$\log_{2}{N}-1$}
			\LineComment{Iterate over sources of result from previous step.}
			\ForModP{$j$}{$2^{i} - 1$}{$N$}{$2^{i+1}$} 
			\LineComment{Iterate over destinations for result from previous step.}
				\ForModPS{$k$}{$0$}{$2^{i}$}
					\State $data[j+k+1] = data[j] \odot data[j+k+1]$
				\EndForModPS
			\EndForModP
		\EndForMod
	\end{algorithmic}
	\label{alg:sklansky}
\end{algorithm}
The span of the Sklansky prefix sum is given by a divide--and--conquer method needing $\log_2{N}$ steps for $N$ input elements
\begin{equation}
S_{SK}(N) \: = \: \log_{2}{N}
\end{equation}
Work complexity is straightforward as well. Each recursive call creates twice as many problems with a half of the original size, and therefore each level processes the same number of tasks
\begin{equation}
\begin{split}
W_{SK}(N) &= \frac{N}{2} \cdot \log_{2}{N}
\end{split}
\end{equation}
\section{Relation between span and work}

Introduced parallel prefix sum algorithms have different work complexities, but neither of them can match the purely linear complexity of a serial application. The intuition suggests that the sequential nature of prefix sum allows for parallelization only by performing more work but in parallel. This intuition has been formalized and certain bounds for the relation between span and work of a parallel prefix sum has been proven. These results have been described with terminology appropriate for prefix adder circuits, but semantics stay the same and we can apply directly these results to PRAM algorithms.

An important theorem has been proposed and proved for prefix circuits by Snir\cite{Snir:1986:DTP:8088.8091} in 1986. The theorem introduces a relation between size and depth of a prefix circuit. The former property describes the number of nodes inside the circuit and corresponds to work complexity $W(N)$ in a parallel algorithm. The latter represents a delay introduced by the circuit and it primarily depends on the critical path. Therefore, this attribute corresponds to span in parallel computation model.
\begin{theorem}
Let $x_{0}, x_{1}, \dots, x_{N-1}$ be a sequence of inputs and $f_{i}$ be prefix sums computed with $i$ first elements of input sequence. Let G be a prefix circuit that computes $f_{1}, \dots, f_{N-1}$, with a size $s(G)$ and depth $d(G)$. Then
\[ s(G) + d(G) \ge 2N -2 \]
\end{theorem}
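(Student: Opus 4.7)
The plan is to count the distinct partial sums that must appear as gates in $G$, and then combine this with the depth constraint. The starting point is the observation that $f_{N-1}$ depends on all $N$ inputs, so the subcircuit computing it forms a binary operator tree $T$ with $N$ leaves (one per input) and exactly $N-1$ internal gates, of depth at most $d(G)$.

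Next I would analyse the path $P$ in $T$ from the root down to the leaf $x_0$; let $\ell \le d(G)$ be its length, and label its internal nodes $v_0, v_1, \ldots, v_{\ell-1}$ from top to bottom. The sibling (right) subtree of $v_j$ computes a contiguous range product $x_{a_j, b_j}$ with $a_j \ge 1$, and one checks inductively from the bottom up that $v_j$ itself produces the prefix sum $f_{b_j} = x_{0, b_j}$, with indices satisfying $b_0 = N-1 > b_1 > \cdots > b_{\ell-1} \ge 1$. Hence the $\ell$ spine nodes of $T$ already deliver $\ell$ of the required outputs (including $f_{N-1}$) as by-products.

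The third step is to show that each of the remaining $N - 1 - \ell$ prefix sums $f_i$, with $i \in \{1, \ldots, N-1\} \setminus \{b_0, \ldots, b_{\ell-1}\}$, forces an additional gate of $G$ strictly outside $T$. Every internal node of $T$ outputs either one of the $\ell$ listed prefix sums or a range product $x_{a, b}$ with $a \ge 1$. Treating the inputs as formal symbols in the free magma on $\{x_0, \ldots, x_{N-1}\}$, a range product missing $x_0$ cannot coincide with any prefix sum, and distinct ranges never coincide; so the $N-1-\ell$ extra outputs require $N-1-\ell$ gates that are distinct from one another and from the $N-1$ internal nodes of $T$.

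Combining the two counts gives $s(G) \ge (N-1) + (N-1-\ell) = 2N - 2 - \ell \ge 2N - 2 - d(G)$, which rearranges to the desired inequality. The main obstacle is the distinctness argument: one must justify that no sharing in the DAG can let a single gate serve simultaneously as a spine computation and as an outside combiner for some $f_i$. The standard way around this, sketched above, is to interpret all computed values in the free magma on the inputs, where the required inequalities between formal products are obvious from the multisets of symbols and their bracketings; everything else is routine bookkeeping.
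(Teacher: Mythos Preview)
The paper does not supply its own proof of this theorem; it only refers the reader to Snir's original inductive argument and to the non-inductive proof of Zhu et al. So there is no in-paper argument to compare against. Your direct counting approach is correct and is in the spirit of the non-inductive proofs.

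Two points are worth tightening. First, the ambient structure should be the \emph{free semigroup} on $x_0,\ldots,x_{N-1}$, not the free magma: a prefix circuit is only required to be correct for associative operations, so ``gate $g$ outputs $f_i$'' means that $g$'s value in the free semigroup is the word $x_0\cdots x_i$, whereas in the free magma $f_i$ is not even a single element. Your distinctness claims (a range product with leftmost index $a\ge 1$ can never equal a prefix; distinct prefixes are distinct words) are exactly the right ones and hold already in the free semigroup. Second, the assertion that the sub-DAG feeding the output $f_{N-1}$ is literally a tree with $N-1$ gates is not automatic, since a DAG may share subcomputations; it does follow, however, from the same free-semigroup reasoning, because any gate reachable from that output along two distinct backward paths would force some letter to occur twice in the word $x_0x_1\cdots x_{N-1}$. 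With these two clarifications your bookkeeping $s(G)\ge (N-1)+(N-1-\ell)\ge 2N-2-d(G)$ goes through.
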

\begin{proof}
	The original proof can be found in\cite{Snir:1986:DTP:8088.8091}. Alternative proof not requiring induction can be found in\cite{Zhu:2006:CZP:1142155.1142162}.
\end{proof}
An immediate corollary of this theorem is that each gain in improving parallelism, which reduces critical path of the algorithm, has to be compensated by performing more work. The question remains whether the excessive work is justified by an improvement in depth. The concept of zero--deficiency measures if there exists a linear trade-off between size and depth:
\begin{definition}
	The deficiency of a prefix circuit is defined as
	\[ \mathrm{def}(G) = 2N - 2 - s(G) - d(G) \]
	A parallel prefix circuit is said to be of zero--deficiency if  $\mathrm{def}(G) = 0$
\end{definition}
A method for constructing zero--deficiency prefix sums has been proposed for depths in the range $2\log_{2}{N} - 2 \le d(G) \le N - 1$. In 2006 a lower bound for depth of zero--deficiency prefix circuit for a given $N$ was proven\cite{Zhu:2006:CZP:1142155.1142162}. Zero--deficiency prefix circuits do not exist below this boundary, including prefix circuits of minimal depth $\log_{2}{N}$, such as Sklansky or Kogge--Stone. Therefore, the most span--optimal parallel algorithm for prefix sum can not achieve a linear work complexity. 

A trivial example of zero-deficiency prefix sum is the serial algorithm with exactly $N - 1$ applications of a binary operator and $N - 1$ span. 

\section{Other work}
A hybrid parallel prefix strategy has been proposed by Han and Carlson\cite{6158699}, where Brent--Kung and Kogge--Stone prefix adders are merged into a single algorithm. The goal of a new layout is to leverage an optimal span of Kogge--Stone and a linear work complexity of Brent--Kung. The algorithm is parameterized by a non-zero constant $k$ which controls the balance between span and work complexity. Span is equal to $k + \log_{2}{N}$ and it has been proved\cite{Ha:2013:SWD:2553646.2553648} that a proper choice of $k$ allows to bound work complexity by $\mathcal{O}(N \cdot \log_{2}{N})$ or even $\mathcal{O}(N)$. Thus, Han--Carlson strategy achieves asymptotically optimal span and work complexity.

Ladner and Fischer\cite{Ladner:1980:PPC:322217.322232} proposed a general recurrence method of designing circuits for prefix adders. An interesting application of their method involves a prefix sum using Sklansky and Brent--Kung strategies to attain a minimum span and a slightly better work complexity\cite{1292373}\cite{Hinze04analgebra}. However, some literature describes Sklansky prefix sum under the name of Ladner--Fischer scan\cite{6970664}.



\section{Summary}

We have described several different algorithms that have been developed and researched for parallelization of prefix sum. The next chapter defines and explains the methodology for choosing a right parallel prefix strategy for the problem of image registration.

Table~\ref{table:chap_prefix_sum_summary} presents a comparison of discussed parallel prefix sum algorithms. Kogge--Stone with Hills--Steele remain a widely-used strategy due to its minimal span. On the other hand, Sklansky prefix adder does not seem to be as popular. Blelloch and Brent--Kung strategies are still highly influential and prevalent. Less popular prefix adders, such as Han--Carlson, Ladner--Fischer or Hockney--Jesshope have been recently researched for GPGPU architectures.
\begin{table}[h!]
\centering
	\begin{tabular}{cccc}
		\hline
		Name 		& Type		& Span						& Work \\ \hline \noalign{\smallskip}
		Sequential  & Inclusive & $N - 1$					& $N-1$ \\
		Blelloch 	& Exclusive	& $2 \cdot \log_{2}{N}$		& $2(N-1)$ \\
		Brent--Kung	& Inclusive	& $2 \cdot \log_{2}{N} - 1$	& $2 \cdot N - \log_{2}{N} - 2$ \\
		Kogge--Stone& Inclusive	& $\log_{2}{N}$				& $N \cdot \log_{2}{N} - N + 1$ \\
		Sklansky	& Inclusive	& $\log_{2}{N}$				& $\frac{N}{2} \cdot 		\log_{2}{N}$ \\ 
		&  &  &  \\ 
		&  &  &  \\ 
		&  &  &  \\
	\end{tabular}
	\caption{Comparison of major strategies for parallel prefix sum.}
	\label{table:chap_prefix_sum_summary}
\end{table}

\chapter{Distributed prefix sum}
\label{chap:distr_prefix_sum}


In this chapter, we consider a distributed implementation of an inclusive parallel prefix sum described in chapter~\ref{chap:prefix_sum}. Presented strategies have been developed to attain a best theoretical speedup when executing in a cluster environment. Furthermore, we demonstrate how the distributed prefix sum can be implemented within MPI programming model. \\
Algorithms accept input data of length $N$ and operate on $P$ workers with separate address spaces. A worker is expected to obtain multiple data elements. We use the terms \textit{worker} and \textit{process} interchangeably. For MPI--based implementations we use the term rank as well. Workers are allocated in a one--dimensional grid and for a worker with index $I$, we use terms \textit{successor} and \textit{right neighbor} for a worker with index $I+1$, if it has one. Similarly, names \textit{predecessor} and \textit{left neighbor} are used interchangeably to refer to a worker with index $I-1$.

\section{General strategy}

We have seen in the previous chapter that the span--optimal solution scales logarithmically with the number of workers when each one is responsible for one data element. Logarithmic complexity may be desired for a time complexity of a serial algorithm, but in a parallel algorithm it prevents any major improvements by spawning more workers on a larger set of cluster nodes. In a distributed setting, we expect to have significantly fewer workers than data. Hence, it is preferable to reduce the intra--process part to $\log_{2}{P}$. Thus, this distributed stage requires one input value per a worker and a reduction step is required to transform the input data from length $N$ to length $P$.\\
\begin{figure}
	\centering
	\includegraphics[width=\textwidth]{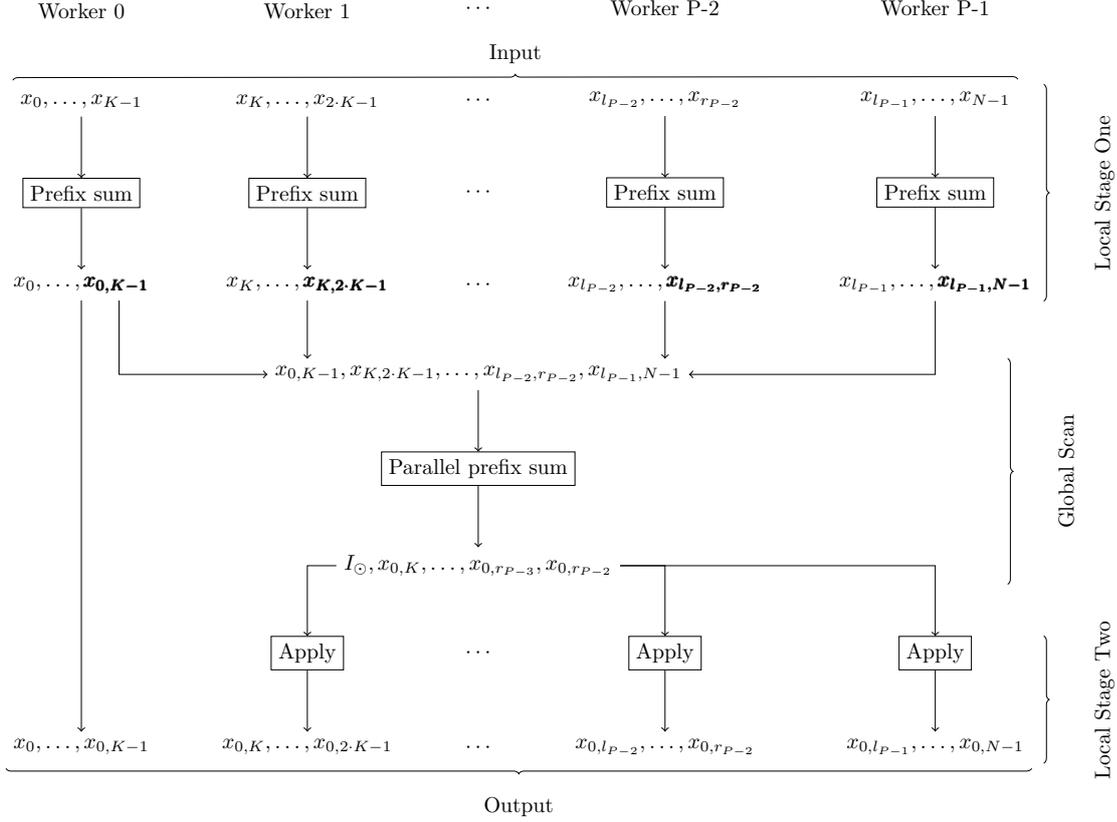}
	\caption{The general strategy for distributed prefix sum of $N$ elements on $P$ workers. An evenly distributed data across workers is passed to a local sequential prefix sum and the last computed value becomes an input to the global stage. Results from a global parallel prefix sum are applied on each worker except the first one.}
	\label{fig:chapter_distr_scan_general_scheme}
\end{figure}
We name the first part of the general strategy a \textit{local stage one}. It should accept the whole sequence of input data and end with a one value per process. For the purpose of this description, we do not assume any a priori knowledge which could suggest a specific data distribution policy. Without any hints on the actual running time of the binary operator on different operands, the safest choice is to split data as equally as possible over all workers. Each process is assigned $K \: = \: \frac{N}{P}$ input elements, if $P$ evenly divides $N$. In other case, $K$ shall be equal to $\left \lfloor \frac{N}{P} \right \rfloor + 1$, first $N \bmod P$ workers are assigned $K$ elements, and the rest obtains $K - 1$ data elements. For the sake of simplicity, we assume an evenly distributed data. \\
The logical distribution of work across processors follows a 1D grid where $I$-th worker is responsible for $K$ input elements from $x_{K \cdot I}$ to $x_{K \cdot (I + 1) - 1}$. To simplify notation, helper variables $l_{I}$ and $r_{I}$ are introduced to store left and right boundary for a worker, with $l_{I}$ equal to the index of first data element $K \cdot i$ and $r_{I}$ equal to the index of last data element $K \cdot (i + I) - 1$. Data layout is depicted on Figure~\ref{fig:chapter_distr_scan_general_scheme}.

The first local stage is presented on lines 1--3 in Listing~\ref{alg:distributed_parallel}. Each worker performs performs independently a sequential prefix sum on the assigned chunk of data $x_{l_{i}}, x_{l_{i} + 1}, \dots, x_{r_{i}}$. Local data is overwritten with partial results and the last item is a reduction of all $K$ data items $\odot_{i=0}^{K-1} x_{ l_{I} + i}$. \\
Span is estimated as for a serial prefix sum
\begin{align}
S_{LS1}(N, P) &= S_{S}(\frac{N}{P}) = \frac{N}{P} - 1
\end{align}
Work complexity is estimated as $P$ workers performing a serial prefix sum
\begin{equation}
\begin{split}
W_{LS1}(N, P) &= P \cdot S_{S}(\frac{N}{P}) \\ &= N - P
\end{split}
\end{equation}
\alglanguage{pseudocode}
\begin{algorithm}[H]
	\caption{Distributed parallel prefix sum of $N$ data elements on worker $I$.}
	\begin{algorithmic}[1]
		\ForMod{$i$}{$1$}{$K - 1$}
		\State $data[i] \gets data[i - 1] \odot data[i]$ \Comment{Local Stage One}
		\EndForMod
		
		\State $excl\_scan \gets parallel\_scan(data[K - 1])$ \Comment{An exclusive prefix sum}
		
		\If{$I > 0$}
			\ForMod{$i$}{$0$}{$K - 1$}\Comment{Local Stage Two}
			\State $data[i] = excl\_scan \odot data[i]$ 
			\EndForMod
		\EndIf
	\end{algorithmic}
	\label{alg:distributed_parallel}
\end{algorithm}


The second stage, a global parallel scan, computes a prefix sum over local reductions. After the parallel scan, processor $I$ should receive a value which allows combining its local results $\odot_{i=0}^{j} x_{l_{I}+j}$ with a reduction of all values assigned to preceding workers $0, 1, \dots, I - 1$. Consequently, the global prefix sum has to produce value $\odot_{i=0}^{r_{I-1}} x_{i}$ for worker $I$ and the scan should be exclusive. Some of the proposed algorithms are by default inclusive, but they can be applied here without modifications changing their behavior, as described in section~\ref{chap:distr_prefix_sum_incl_excl}. \\
The last stage, presented on lines 5--9 in algorithm listing, operates again locally and independently from other processes. As soon as a result from the parallel stage has arrived, it is applied to each data item, and local results are transformed into partial results for a global prefix sum. The corner case here is the first worker who does not have any predecessors, and it does not perform any computation after first local stage.\\
This stage is a single loop with $K$ iterations, and therefore span and work analysis are trivial:
\begin{align}
S_{LS2}(N, P) &= \frac{N}{P} = S_{LS1}(N, P) + 1
\end{align}
\begin{equation}
\begin{split}
W_{LS2}(N, P) &= (P - 1) \cdot \frac{N}{P} \\ &= N - \frac{N}{P}
\end{split}
\end{equation}
We summarize the strategy by combining estimations for local stages and an unknown span $S_{GS}$ and work $W_{GS}$ of a global scan which depends only on the number of workers, not on the input size. The span and work for a distributed scan is given as follows
\begin{equation}
\begin{split}
S_{DS}(N, P) &= S_{LS1}(N, P) + S_{GS}(P) + S_{LS2}(N,P) \\
&= 2 \cdot \frac{N}{P} - 1 + S_{GS}(N, P)
\end{split}
\end{equation}
\begin{equation}
\begin{split}
W_{DS}(N, P) &= W_{LS1}(N, P) + W_{GS}(P) + \cdot W_{LS2}(N,P) \\
&= 2 \cdot N - P - \frac{N}{P} + W_{GS}(N, P)
\end{split}
\end{equation}
We must remark that this analysis of critical path is possible only in the condition of an even distribution of data. In the opposite case, a simple summation of the span for two local stages might yield an incorrect result if critical paths for those stages are provided by different workers. An example of such situation may be a prefix sum where $N \bmod P = 1$. There, the span of the first stage is given by the first worker who has one more data element than other processes, but it is inactive in the second local stage.

This general strategy attempts to minimize the asymptotically logarithmic global stage and perform locally as much computation as possible. The speedup of a local stage should scale linearly with an increase in a number of workers, and it is expected that for large values of $P$ and small values of $K$, the global part is going to dominate the runtime of application because of its poor scaling. \\
Moving work from global to local stages may improve the runtime not only because of a better scalability of local stage. For all parallel prefix sum algorithms introduced in chapter~\ref{chap:prefix_sum}, it holds that for the first step, each worker has to receive a value from its left neighbor. In this strategy, this value is the last result computed in the local stage which creates a Read--After--Write (RAW) dependency of the global scan on the local reduction stage. Any computational imbalance in the latter may influence the former, and the flow dependency increases the negative influence of time deviations on total runtime. Besides that, global scan requires sending partial results after each iteration which makes it even more sensitive to variations in execution time of the binary operator. Local stages are free of those dependencies. 

\subsection{Scan versus reduce}
\label{chap:distr_prefix_sum_alternative}

The general strategy presented above is not the only possible way of organizing work in a distributed prefix sum. One can notice that the global stage requires only the last value of local prefix sum which is also a result of performing a reduction on input data. Hence, preparing input for global stage does not require storing prefix sum, and those intermediate results can be recomputed in second local stage with a little cost. \\
After the global stage, a worker operates on received result from an exclusive scan and unmodified input data. To transform an input value $x_{l_{I} + j}$ to $x_{0} \odot x_{1} \odot \dots \odot x_{l_{I} + j}$, scan result is merged with the first element $x_{l_{I}}$ and a new partial result $x_{0, l_{I}} = x_{0, r_{I-1}} \odot x_{l_{I}}$ is computed. Then, a sequential prefix sum would propagate changes from global scan and desired results are computed as $x_{0, l_{I} + j} = x_{0, l_{I}} \odot (x_{l_{I} + 1} \odot x_{l_{I}+2} \odot \dots \odot x_{l_{I} + j})$. Alternative strategy is presented in Figure~\ref{fig:chapter_distr_scan_alternative_scheme} and in Listing~\ref{alg:distributed_alternative_parallel}.
\begin{figure}
	\centering
	\includegraphics[width=\textwidth]{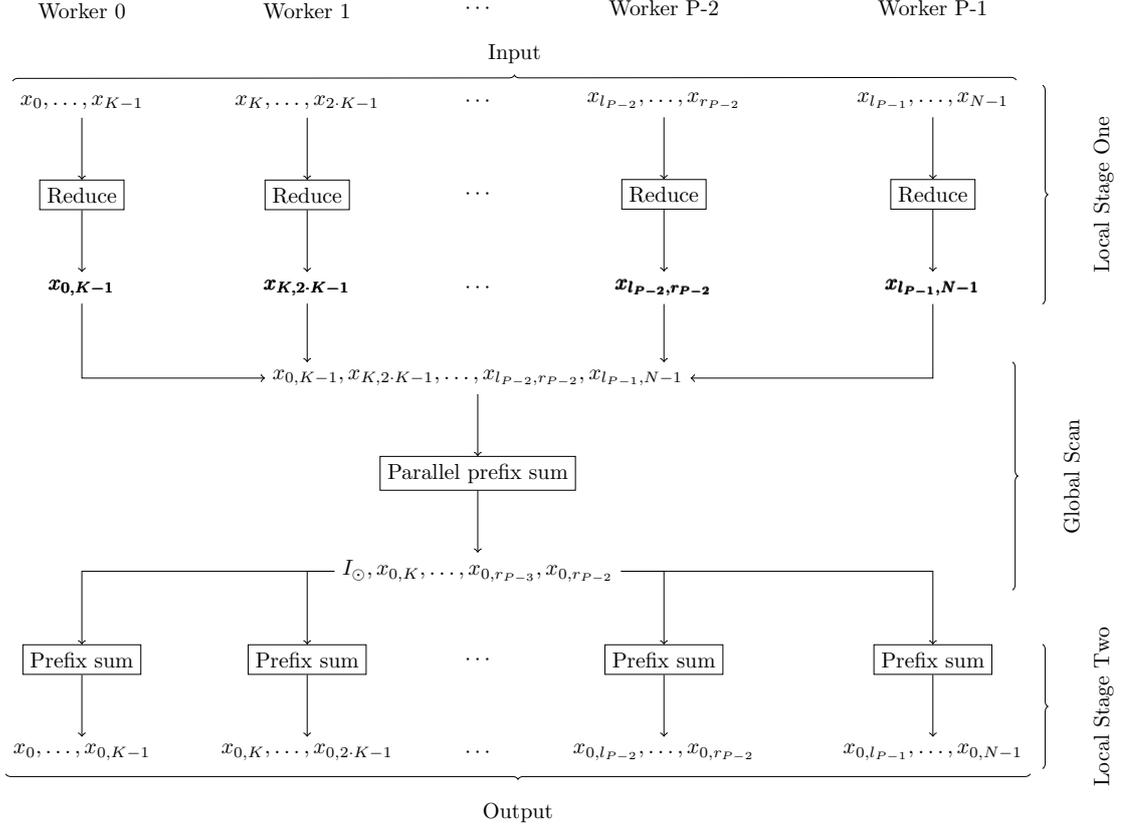}
	\caption{An alternative strategy of a distributed prefix sum of $N$ elements on $P$ workers. A local prefix sum is replaced by a reduction generating just one value as a result. In the second stage, the global scan value is applied to the first element of input data and a sequential prefix sum is performed.}
	\label{fig:chapter_distr_scan_alternative_scheme}
\end{figure}
An alternative second local stage performs the same number of loop iterations as the old one but one additional application of binary operator is necessary
\begin{align}
S_{ALS2}(N, P) &= \frac{N}{P} + 1
\end{align}
More work is performed because the worker with index $0$ is active in all stages, hence 
\begin{equation}
\begin{split}
W_{ALS2}(N, P) &= P \cdot (\frac{N}{P} + 1) \\ &= N + P
\end{split}
\end{equation}
And the span and work for an alternative distributed scan is as follows
\begin{equation}
\begin{split}
S_{ADS}(N, P) &= S_{LS1}(N, P) + S_{GS}(P) + S_{ALS2}(N,P) \\
&= 2 \cdot \frac{N}{P} + S_{GS}(N, P)
\end{split}
\end{equation}
\begin{equation}
\begin{split}
W_{ADS}(N, P) &= W_{LS1}(N, P) + W_{GS}(P) + W_{ALS2}(N,P) \\
&= 2 \cdot N + W_{GS}(N, P)
\end{split}
\end{equation}
\alglanguage{pseudocode}
\begin{algorithm}[H]
	\caption{An alternative distributed parallel prefix sum of $N$ data elements on worker $I$.}
	\begin{algorithmic}[1]
		\State $red \gets data[0]$
		\ForMod{$i$}{$1$}{$K - 1$}
		\State $red \gets red \odot data[i]$ \Comment{Local Stage One}
		\EndForMod
		
		\State $excl\_scan \gets parallel\_scan(red)$ \Comment{An exclusive prefix sum}
		
		\State $data[0] = excl\_scan \odot data[0]$ \Comment{Apply scan result to first value}
		\ForMod{$i$}{$1$}{$K - 1$}\Comment{Local Stage Two}
		\State $data[i] \gets data[i - 1] \odot data[i]$
		\EndForMod
	\end{algorithmic}
	\label{alg:distributed_alternative_parallel}
\end{algorithm}
\subsection{Inclusive global scan}
\label{chap:distr_prefix_sum_incl_excl}

In both strategies, an exclusive partial sum is required to start processing the second local stage. Hence, an exclusive prefix sum is preferred as an algorithm for the global scan. This, however, does not exclude the possibility of using an inclusive scan. All exclusive results are already computed, but they are not placed correctly on workers. Thus, an additional round of communication is necessary. A worker $I$ sends its result to its successor $I+1$, if it has one, and receives a new result from the predecessor $I-1$. \\
A downside of this solution is that additional communication forces each worker to wait for its left neighbor and adds another flow dependency. Sadly, many prefix sum algorithms are inclusive by default. However, it is possible to benefit from this situation by reducing the computation cost in last local stage. An inclusive result for $x_{0, r_{I}}$ is exactly the result of last loop iteration of the second local stages. Thus, one can directly assign this value and skip one loop iteration. Algorithm~\ref{alg:general_strategy_inclusive_scan} formally defines the case for an inclusive global scan. 
\alglanguage{pseudocode}
\begin{algorithm}[H]
	\caption{The general strategy with an inclusive scan. The inclusive result replaces the last loop iteration in the second local stage.}
	\begin{algorithmic}[1]
		\ForMod{$i$}{$1$}{$K - 1$}
		\State $data[i] \gets data[i - 1] \odot data[i]$ \Comment{Local Stage One}
		\EndForMod
		
		\State $incl\_scan \gets parallel\_scan(data[K - 1])$ \Comment{An exclusive prefix sum}
		\State $excl\_scan \gets blocking\_receive(I - 1)$ \Comment{Receive from left neighbor}
		\If{$I > 0$}
		\ForMod{$i$}{$1$}{$K - 2$}\Comment{Local Stage Two}
		\State $data[i] = excl\_scan \odot data[i]$ 
		\EndForMod
		\State $data[K - 1] \gets incl\_scan$
		\EndIf
	\end{algorithmic}
	\label{alg:general_strategy_inclusive_scan}
\end{algorithm}
\pagebreak
The span of a second local stage combined with an inclusive global scan is reduced by a constant factor. Nevertheless, this small improvement may be reduced or even eliminated by an additional synchronization after the global scan
\begin{equation}
\begin{split}
S_{ILS2}(N, P) &= S_{LS2}(N, P) - 1 \\
&= \frac{N}{P} - 1
\end{split}
\end{equation}
This optimization may be explored for exclusive prefix sums as well, especially if non--blocking communication is available. Scan result from a successor may reduce the computation in the second local stage by one loop iteration. Non--blocking communication can be applied to avoid synchronization because this value is not required until the second local stage is finished. Algorithm~\ref{alg:general_strategy_exclusive_scan} presents the application of optimization to an exclusive scan.
\begin{algorithm}[H]
	\caption{An optimized general strategy with an exclusive scan. Function $probe\_receive$ is a simplified notation for a function intended to return a boolean true value only if a specific message has been received. In an actual MPI implementation, \code{MPI\_Test} or \code{MPI\_Wait} may be used.}
	\begin{algorithmic}[1]
		\ForMod{$i$}{$1$}{$K - 1$}
		\State $data[i] \gets data[i - 1] \odot data[i]$ \Comment{Local Stage One}
		\EndForMod
		
		\State $excl\_scan \gets parallel\_scan(data[K - 1])$ \Comment{An exclusive prefix sum}
		\If{$I > 0 \And I < P - 1$} \Comment{Last worker does not have a successor}
		\State $incl\_scan \gets non\_blocking\_receive(I + 1)$ \Comment{Receive from neighbor}
		\EndIf
		\If{$I > 0$}
		\ForMod{$i$}{$1$}{$K - 2$}\Comment{Local Stage Two}
		\State $data[i] = excl\_scan \odot data[i]$ 
		\EndForMod
		\If{$(I > 0 \And I < P - 1) \And probe\_receive(incl\_scan)$}
		\State $data[K - 1] \gets incl\_scan$
		\Else
		\State $data[K - 1] \gets excl\_scan \odot data[K-1]$
		\EndIf
		\EndIf
	\end{algorithmic}
	\label{alg:general_strategy_exclusive_scan}
\end{algorithm}
We do not expect this change to reduce the theoretical span because it does not apply to the last worker which may leave the critical path unaffected. However, it is worth considering this small improvement in applications where differences in computation time lead to large imbalances. There, it is not uncommon for processes to finish much later than the last one, even if lengths of their computation paths are exactly the same. 

\section{Related work}

Literature review reveals that an approach similar to the general strategy has been evaluated for GPGPU architectures. Our general strategy is known there under the name \textit{scan--then--propagate}\cite{nvidiaScan}. This strategy has also been presented as an algorithm for prefix sum where a limited number of processors is available\cite{EGECIOGLU198995}. \\
The idea behind an alternative strategy has been presented in the algorithm for vectorization of prefix sum on CRAY Y-MP by Chatterjee et. al.\cite{Chatterjee:1990:SPV:110382.110597}. In addition, this approach has been described as the \textit{reduce--then--scan} strategy for GPGPU architectures\cite{Dotsenko:2008:FSA:1375527.1375559}. There, it may outperform the general strategy because of a smaller global memory footprint in reduction phase. \\
\textit{Pipelined} binary trees have been proposed for a distributed implementation of MPI scan collective\cite{Sanders2006}. Later, the performance of prefix sum in message--passing systems has been improved by exploiting a bidirectional communication\cite{Sanders2007}\cite{SANDERS2009581}. This research has been focused on reducing the communication cost and improving bandwidth. Furthermore, only simple memory--bound operators have been evaluated. As explained in the next section, the requirements for a distributed prefix sum in image registration problem are quite different.

\section{Registration problem}

In the previous chapter, we have proved that the problem of image registration can be represented as a prefix sum with the function \textbf{B} as a binary operator. This function has several important properties which make our problem much more specific and different from applications and case studies analyzed in the literature. \\
\begin{figure}
	\centering
	\includegraphics[width=\textwidth]{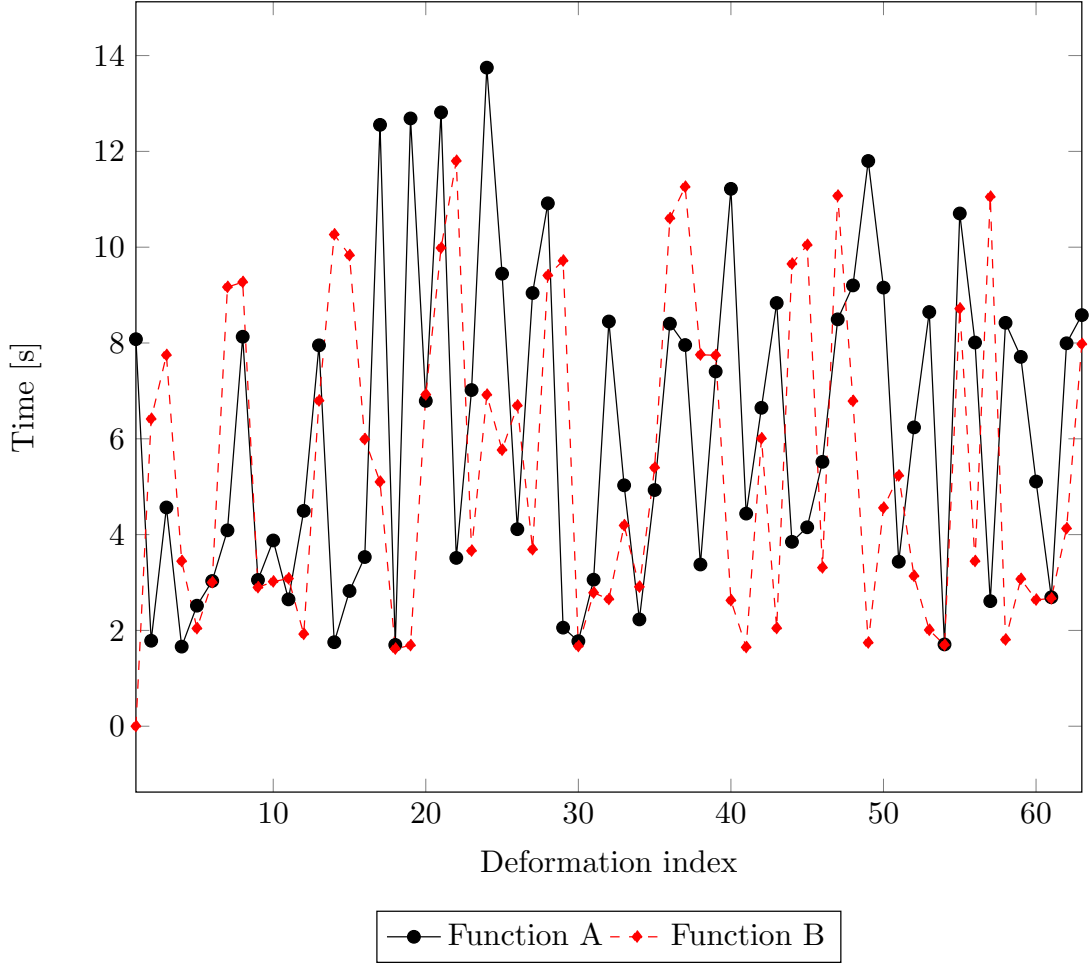}
	\caption{An example of serial registration for 64 images. The function \textbf{A} has been applied 64 times to create neighbor deformations $\phi_{i, i + 1}$ and a serial prefix sum applied function \textbf{B} 63 times to generate final deformations $\phi_{0, i}$.}
	\label{fig:chapter_distr_scan_execution_time}
\end{figure}
We begin with the actual cost of applying the operator. The simplest case found in the literature, which happens to be the one most frequently evaluated, is an integer addition which should not take more than one CPU cycle on modern processors. More complex examples still involve relatively cheap operations, such as polynomial evaluation with floating--point multiplication and addition or summed area table where the binary operator performs multiple additions. \\
As a result, parallel prefix sum algorithms tend to be optimized for memory--bound applications with a rather low execution time of the operator. Image registration does not fit into this category. Figure~\ref{fig:chapter_distr_scan_execution_time} presents the execution time for a serial registration process. Each single image registration is much more computationally expensive and the actual execution time is of a completely different order of magnitude than a simple integer of floating--point number operation. It is very likely that different parallelization strategies may be required for a prefix sum with an operator taking several seconds to compute a single result.

Each non--rigid deformation stores only three floating--point values and the cost of sending this amount of data between cluster nodes is dominated by the latency, not the bandwidth. As a matter of fact, each execution of the registration function requires corresponding image data which is stored on the disk. We assume that each worker has access to each image file through usual and serial I/O operations. \\
The cost of sharing data between workers is negligible when compared with the computation time. Because of this difference, our problem will not benefit from prefix sum algorithms developed for message--passing systems which concentrated on minimizing the number of communication cycles.

In contrast to operations with a deterministic execution time, here in both functions \textbf{A} and \textbf{B} the actual computation cost is not only unpredictable but highly variant. Due to the iterative nature of problem solved by the two functions, as explained in Chapter~\ref{chap:image_registration}, we can not foresee for a given input data how many iterations are necessary to reach a stopping criterion. Figure~\ref{fig:chapter_distr_scan_preprocessing_imbalance} presents execution times of the first local stage for different workers. The black line represents an average computation time across workers. This scenario is not feasible, but it allows us estimating the delay introduced by an unequal load balance. In the first iteration of the global stage, workers are required to wait for a result from its predecessor, and it is not unusual that this idle time can be as large as the total runtime of local stage.\\
\begin{figure}
	\centering
	\includegraphics[width=\textwidth]{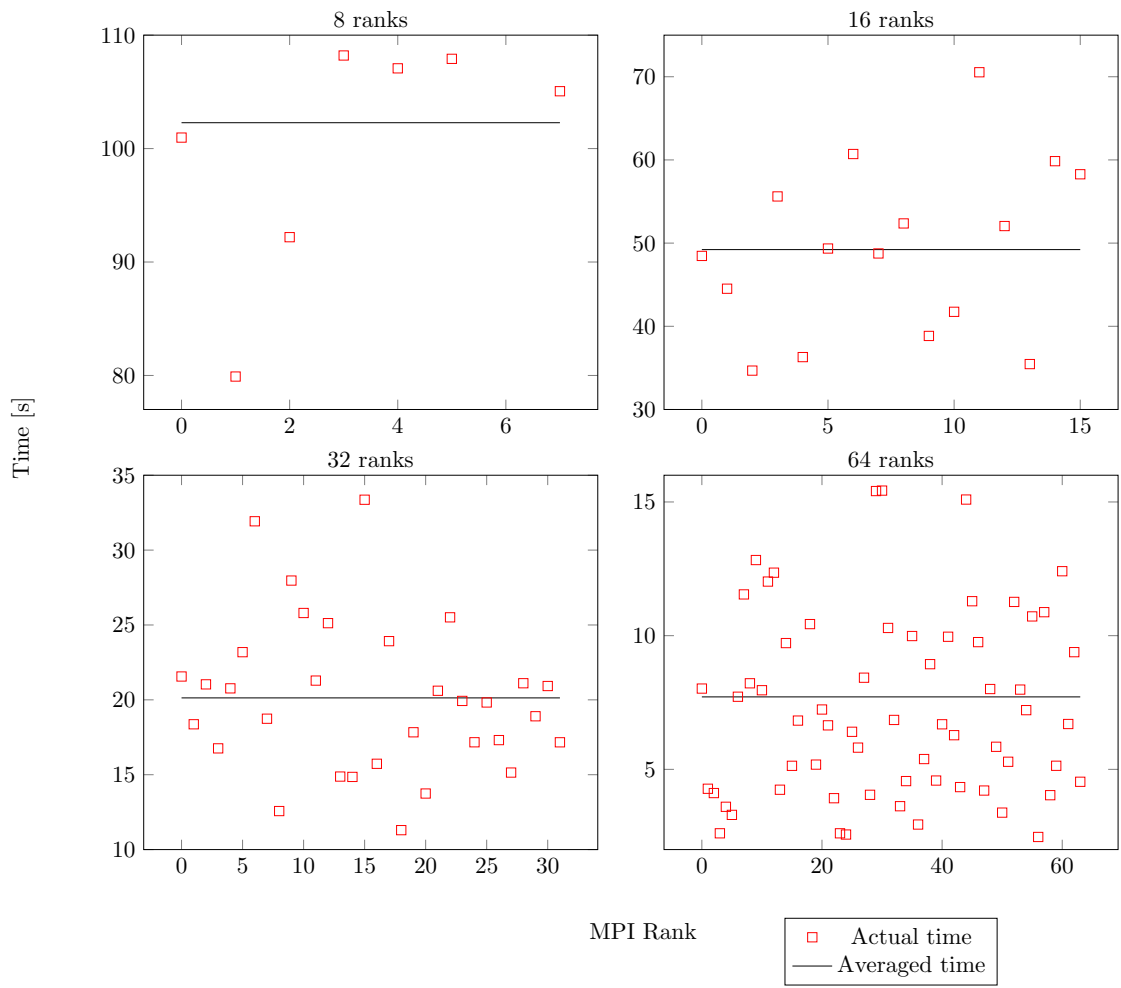}
	\caption{A comparison of the execution time of the first local stage for $N = 128$ image registrations within an MPI implementation. Four cases are presented, with $P$ values varying from 8 to 64. Red squares represent execution time for each worker and the black line represents a theoretical mean, an ideal distribution of work in an imaginary case where all operator applications take the same amount of time.}
	\label{fig:chapter_distr_scan_preprocessing_imbalance}
\end{figure}
\newpage
Regarding actual implementation, our operator depends on external objects supplying data structures such as multiple levels of a multigrid. Since constructing these objects in each application of the operator adds an undesired overhead, it is preferred to implement an operator able to capture external objects. 
We summarize the characteristics of our problem in several suggestions. We have used these pieces of advice to choose which parallel prefix algorithms, strategies, and optimizations may be interesting to our problem.
\begin{itemize}
	\item \textbf{Prefer replacing computation with communication} \\
	In contrast to usual design principles for prefix sum in a distributed environment, we do not optimize to reduce the latency of data transmission -- quite the opposite, computational intensiveness of our application suggest that we should prefer sharing partial results between workers if it allows reducing the amount of work to perform.
	\item \textbf{Do more work but in parallel} \\
	A lot of research have been done to design prefix sums with a logarithmic span and optimal, linear work complexity. For image registration, the most promising algorithms are the one who attains a minimum span with a non--linear work complexity.
	\item \textbf{Reduce dependencies when computation time is not predictable} \\
	There is a very little we could do with work distribution when no a priori knowledge on time imbalance is present. We can, however, look for parallel algorithms with the minimum number of flow dependencies. 
\end{itemize}

\section{MPI implementation}

In this section, we present an MPI implementation of the general strategy combined with various algorithms for a global parallel prefix sum. Work, span and speedup estimations are provided.

\subsection{MPI scan}

MPI defines two functions to perform parallel prefix sum, \code{MPI\_Scan} for an inclusive and \code{MPI\_Exscan} for an exclusive prefix sum. Besides that, non--blocking alternatives \code{MPI\_Iscan} and \code{MPI\_Iexscan} have been added in MPI 3.0. The C declaration of \code{MPI\_Scan} function is presented on Listing~\ref{lst:scan_interface}.
\begin{lstlisting}[
	caption={An interface of MPI function for an inclusive parallel prefix sum.},
	label={lst:scan_interface}
]
int MPI_Scan(const void *sendbuf, void *recvbuf, int count, MPI_Datatype datatype, MPI_Op op, MPI_Comm comm);
\end{lstlisting}
The operation requires all ranks to pass the same arguments for \textit{count}, \textit{datatype} and \textit{comm}. Input data is provided in \textit{sendbuf} and results are written in \textit{recvbuf}. A user--defined operator is assumed to be associative and can be declared as commutative to enable selection of more optimized algorithms. A C prototype of an operator is presented on Listing~\ref{lst:user_defined_op}. 
\begin{lstlisting}[
caption={An interface of MPI user--defined operator for reducion function.},
label={lst:user_defined_op}
]
void MPI_User_function(void* invec, void* inoutvec, int *len, MPI_Datatype *datatype);
\end{lstlisting}
\begin{figure}
	\centering
	\includegraphics[width=\textwidth]{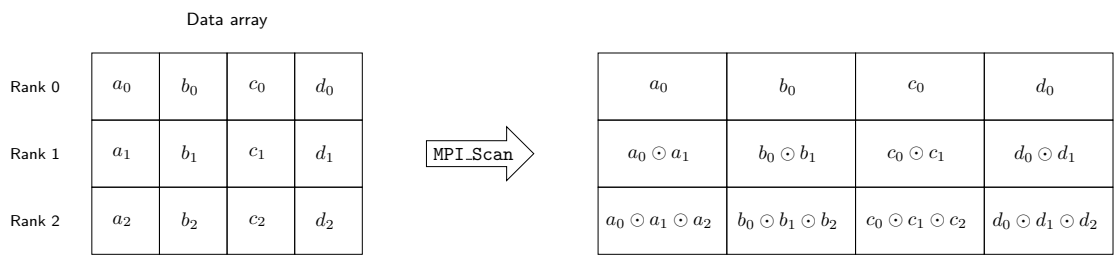}
	\caption{An example of executing \code{MPI\_Scan} on three ranks with four data elements per rank. After the execution, each array element stores a result corresponding to one of four different scans.}
	\label{fig:chapter_distr_scan_mpi_scan}
\end{figure}
The function is expected to apply $len$ times the operation $inoutvec[i] = invec[i] \odot inoutvec[i]$. The scan operator is applied elementwise to each element of the input buffer. Consequently, for $count$ data elements passed to an MPI scan on each rank, $count$ different prefix sums are computed. Given this limitation, existing MPI algorithms can not be used outside the general strategy for parallel prefix sum unless there is exactly one data item per rank. Otherwise, a specific type of segmented prefix sum is computed with only one data item per scan allowed on each rank. Figure~\ref{fig:chapter_distr_scan_mpi_scan} demonstrates this problem. MPI literature does not describe the problem in detail, nor it proposes any idea alternative to already described general scheme.\\
The MPI specification does not put performance requirements for an implementation. We provide a brief overview of algorithms present in two evaluated MPI implementations.

\subsubsection{OpenMPI}

OpenMPI documentation does no provide any information on actual implementation. An investigation of current source code\cite{OpenMPI} revealed an \code{MPI\_Scan} implementation in \textit{ompi/mca/coll/basic/coll\_basic\_scan.c} and an \code{MPI\_Exscan} in \textit{ompi/mca/coll/basic/coll\_basic\_exscan.c}. Both have been implemented with a fully serial algorithm where an MPI rank receives a value from its predecessor, computes a new value and sends it to its successor.\\
\newpage
The same algorithm can be found in an implementation of \code{MPI\_Iscan} in \textit{ompi/mca/coll/libnbc/nbc\_iscan.c} and \code{MPI\_Iexscan} in \textit{ompi/mca/coll/libnbc/nbc\_iexscan.c}.

\subsubsection{IntelMPI}

IntelMPI documentation shortly mentions types of algorithms implemented for the prefix sum\cite{IntelMPIRef}. The selection of algorithm can be manipulated by setting environment variable \code{I\_MPI\_ADJUST\_SCAN} or \code{I\_MPI\_ADJUST\_EXSCAN} with an integer value corresponding to algorithm selection. \\
For an inclusive scan, two algorithms are offered to the user: \textit{Partial results gathering} and \textit{Topology aware partial results gathering}. An exclusive scan can be computed with two algorithms, the first one is the same as in inclusive scan and the other one is called \textit{Partial results gathering regarding layout of processes}.

Unfortunately, the documentation does not describe algorithms in detail and the available information is limited only to these names.

\subsection{Serial}

With an exclusive sequential global prefix sum in the global stage, the span of a distributed scan becomes
\begin{equation}
\begin{split}
S_{DS}(N, P) &= S_{LS1}(N, P) + S_{GS}(P) + S_{LS2}(N,P) \\
&= \frac{N}{P} - 1 + P - 2 + \frac{N}{P} \\
&= 2 \cdot \frac{N}{P} + P - 3
\end{split}
\end{equation}
Then, the theoretical speedup is given as
\begin{equation}
\begin{split}
SP_{Serial}(N,P) &= \frac{N - 1}{2 \cdot \frac{N}{P} + P - 3}
\end{split}
\end{equation}
This term reaches its maximum value when denominator reaches its minimum. Removing constants simplifies the term to minimize to
\begin{align}
\frac{2N}{P} + P
\end{align}
and as the literature suggests\cite{McCool:2012:SPP:2385466}, it may be interpreted as doubled arithmetic mean of $\frac{2N}{P}$ and $P$. Arithmetic mean is bounded from below by the geometric mean
\begin{equation}
\begin{split}
\frac{\frac{2N}{P} + P}{2} &\geq \sqrt{\frac{2N}{P} \cdot P} \\
\end{split}
\end{equation}
Arithmetic and geometric means are equal if and only if both operands are equal which implies that for a fixed N the minimum is obtained for
\begin{equation}
\begin{split}
P &= \frac{2N}{P} \\
P &= \sqrt{2N}
\end{split}
\end{equation}
This result proves that for any $N$ the best speedup is obtained for a number of processors much smaller than $N$. Thus, the maximum speedup is
\begin{equation}
\begin{split}
\mathop{\boldsymbol\max}_{P=1}^{P=N} SP_{Serial}(N, P) &= \frac{N - 1}{\frac{2N}{\sqrt{2N}} + \sqrt{2N} - 3} \\
&= \frac{ \sqrt{2N} \cdot (N - 1) }{4N - 3\sqrt{2N}}
\end{split}
\end{equation}
As we can see, even for quite large input the attainable speedup is very low, no matter how many processor cores are available. The limit on scalability and very poor upper bound on speedup make this solution not attractive for a distributed implementation.

\subsection{Blelloch}

With a double sweep of a tree, the span is equal to
\begin{equation}
\begin{split}
S_{DS}(N, P) &= S_{LS1}(N, P) + S_{GS}(P) + S_{LS2}(N,P) \\
&= \frac{N}{P} - 1 + 2 \cdot \log_{2}{P} + \frac{N}{P} \\
&= \frac{2N}{P} - 1 + 2 \cdot \log_{2}{P}
\end{split}
\end{equation}
and the theoretical speedup of distributed scan becomes
\begin{equation}
\begin{split}
SP_{Blelloch}(N, P) &= \frac{N - 1}{\frac{2N}{P} - 1 + 2 \cdot \log_{2}{P}}
\end{split}
\end{equation}
The maximum of this function is found by minimizing the denominator
\begin{equation}
\begin{split}
f(P) &= \frac{2N}{P} - 1 + 2 \cdot \log_{2}{P} \\
\frac{df}{dP} &= - \frac{2N}{P^2} + \frac{2}{P\ln{2}} \\ &= \frac{2P - 2N\ln{2}}{P^2\ln{2}}
\end{split}
\end{equation}
The derivative reaches zero when
\begin{align}
P_{0} &= N\ln{2}
\end{align}
To find out whether it is a maximum or minimum, second derivative is tested
\begin{equation}
\begin{split}
\frac{d^2f}{dP^2} |_{P=N\ln2} &= \frac{4N}{P^3} - \frac{2}{P^2\ln2}|_{P=N\ln2} \\ &= \frac{4N\ln2 - 2P}{P^3\ln2}|_{P=N\ln2} \\ &= \frac{2N\ln2}{N^3\ln^42}
\end{split}
\end{equation}
Obviously, the second derivative is greater than zero for any positive value of $N$. Hence, $P_{0}$ is a local minimum of $f(P)$ and a local maximum of $S$.
The important result here is that critical point is always greater than N, implying that for practical values of $P$ it is always possible to improve the result by adding more processors until the number of processors reaches the number of data elements.

Work performed at all stages is equal to
\begin{equation}
\begin{split}
W_{DS}(N, P) &= 2\cdot N - P - \frac{N}{P} + W_{GS}(P) \\
&= 2 \cdot N - \frac{N}{P} - P + 2\cdot(P - 1) \\
&= 2 \cdot N - \frac{N}{P} + P - 2
\end{split}
\end{equation}
which proves that work complexity is linear in both input size and number of workers.

\subsection{Kogge--Stone}

With an inclusive scan, we apply the optimization and the span of distributed scan becomes
\begin{equation}
\begin{split}
S_{KS}(N, P) &= S_{LS1}(N, P) + S_{GS}(P) + S_{ILS2}(N,P) \\
&= \frac{N}{P} - 1 + P - 2 + \frac{N}{P} - 1\\
&= \frac{2N}{P} - 2 + \log_{2}{N}
\end{split}
\end{equation}
The analysis of speedup is similar to Blelloch algorithm, with the derivative
\begin{align}
\frac{df}{dP} &= \frac{2P - N\ln{2}}{P^2\ln{2}}
\end{align}
Here the local maximum is reached for
\begin{align}
P_{0} &= 2 \cdot N \ln{2}
\end{align}
Which is, again, an impractical value. Work performed at all stages is
\begin{equation}
\begin{split}
W_{DS}(N, P) &= 2\cdot N - P - \frac{N}{P} + W_{GS}(P) \\
&= 2 \cdot N - \frac{N}{P} - P + P \cdot \log_{2}{P} - P + 1 \\
&= 2 \cdot N - \frac{N}{P} + P \cdot (\log_{2}{P} - 2) + 1
\end{split}
\end{equation}
which proves that work complexity is linear in input size but $\mathcal{O}(n\log{}n)$ in number of workers.  

\subsection{Sklansky}

Sklansky parallel prefix adder has the same span as Kogge--Stone algorithm. Therefore, attainable speedup is exactly the same. The work complexity is slightly different but asymptotically equal to work performed by Kogge--Stone prefix sum
\begin{equation}
\begin{split}
W_{DS}(N, P) &= 2\cdot N - P - \frac{N}{P} + W_{GS}(P) \\
&= 2 \cdot N - \frac{N}{P} - P + \frac{P}{2} \cdot \log_{2}{P}  \\
&= 2 \cdot N - \frac{N}{P} + P \cdot (\frac{\log_{2}{P}}{2} - 1)
\end{split}
\end{equation}

\subsection{Summary}

We select for evaluation several instantiations of our general strategy
\begin{itemize}
	\item \textbf{MPI--based} \\
	Global scan implementation provided by the MPI library. Evaluate inclusive and exclusive variants against OpenMPI and different implementations of the scan in IntelMPI.
	\item \textbf{Blelloch}
	\item \textbf{Kogge--Stone, Sklansky}
\end{itemize}
\begin{figure}
	\centering
		\includegraphics[width=0.6\textwidth]{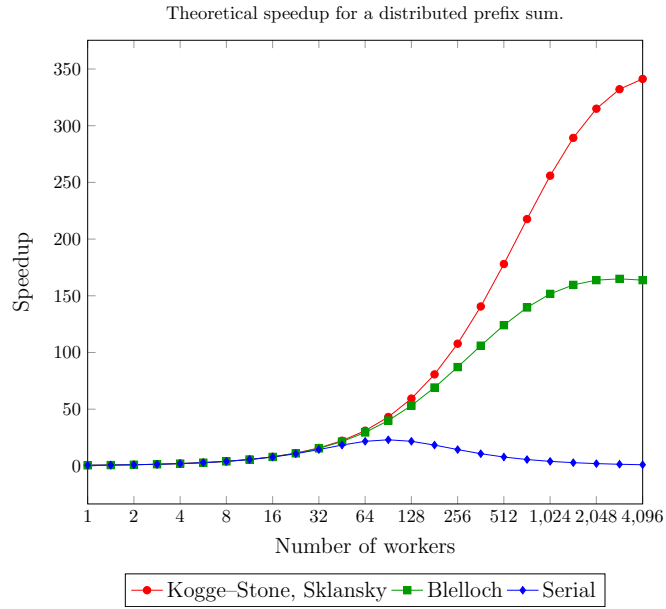}
	\caption{Theoretical speedup attainable by the general strategy with various implementation of the global stage. The horizontal axis is logarithmic with a base of two.}
	\label{fig:chapter_distr_scan_variants_comparison}
\end{figure}\begin{figure}
\centering
\includegraphics[width=0.6\textwidth]{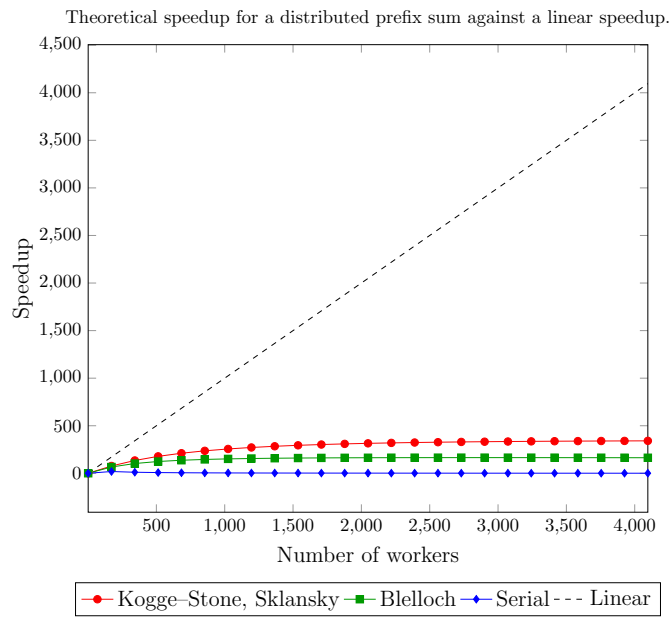}
\caption{A comparison of an ideal linear speedup against the theoretical speedup attainable by the general strategy with various implementation of the global stage.}
\label{fig:chapter_distr_scan_variants_comparison_linear}
\end{figure}
Figure~\ref{fig:chapter_distr_scan_variants_comparison} presents a theoretical prediction of attained speedup for selected variants. Clearly, the small constant factor in a span between Blelloch and Kogge--Stone prefix sums may lead to a large difference. Figure~\ref{fig:chapter_distr_scan_variants_comparison_linear} compares selected strategies against a linear scaling to demonstrate that even with an ideal implementation, the parallel prefix sum is a problem which can not be efficiently parallelized on a large number of processor cores.

We expect to find answers for key questions:
\begin{itemize}
	\item How these algorithms perform with a computationally intensive operator?
	\item How load imbalance influences different algorithms? Does the theoretical promise of scalability holds up? 
	\item Is there a gain in selecting an algorithm with a slightly better span but noticeably worse work complexity?
	\item What is the quality of prefix sum provided by MPI implementations?
	\item How does an inclusive and exclusive MPI algorithm perform in the general strategy?
\end{itemize}


\chapter{Results}
\label{chap:results}

This chapter presents and analyses evaluation of parallel prefix sum algorithms for image registration. Experiments have been performed on cluster nodes consisting of Intel Xeon E5-2680 v2 CPUs. Each node contains two ten--core processors with frequency of 2.80 \si{\GHz} and 3.60 \si{\GHz} in TurboBoost mode. Our compiler of choice is GCC in version 5.3.0. MPI libraries selected for evaluations are OpenMPI in version 1.10.4 and IntelMPI in version 2017.1. Each measurement has been performed with the help of \code{MPI\_Wtime} function, preceded by an \code{MPI\_Barrier} as it is advised in the MPI standard\cite{MPISpec}. Timings have been averaged over five repetitions and the standard deviation has been calculated to ensure the quality of results. \\
A comparison of different parallel prefix strategies in terms of strong and weak scalability is presented in sections~\ref{chap:results_strong} and~\ref{chap:results_weak}, respectively. The section~\ref{chap:results_mpi} provides a detailed view of MPI performance in the parallel prefix sum problem. In the section~\ref{chap:results_alternative}, a brief comparison of the general and alternative strategy is provided. Results of experimental shared--memory parallelization of the operator are presented in section~\ref{chap:results_hybrid}.

\section{Strong scaling}
\label{chap:results_strong}
In this section, we investigate the parallel performance of a distributed prefix sum when the number of processor cores is increased from 1 to 512 and the problem size is fixed. Thus, the size of work chunk for each worker is decreased. An ideal linear speedup is not possible to obtain in our problem because of an increased workload in the parallel execution. For the comparison, we select results from following parallel prefix sums
\begin{itemize}
	\item \textbf{Kogge--Stone, Sklansky} \\
	our implementation of work--inefficient prefix adders with a logarithmic span
	\item \textbf{Blelloch} \\
	our implementation of a classical work--efficient scan
	\item \textbf{OpenMPI} \\
	an inclusive MPI collective function \code{MPI\_Scan}
	\item \textbf{IntelMPI} \\
	an exclusive MPI collective \code{MPI\_Exscan} with the algorithm \textit{Partial results gathering regarding layout of processes}
\end{itemize}
For MPI libraries, we have selected the best performing implementation for this comparison. A detailed overview is presented in section~\ref{chap:results_mpi}. \\
The speedup is measured as a ratio of serial and parallel execution time. For all results, the standard deviation of measured speedup $\mathcal{SP}$ has not exceeded $0.1$. Thus, we can safely compare various algorithms even when the difference in speedup is rather small. A standard deviation of speedup $\mathcal{SP}$ is given as a 
\begin{align}
\sigma_{\mathcal{SP}} &= \mathcal{SP} \sqrt{ (\dfrac{\sigma_{t_{1}}}{t_{1}})^2 + (\dfrac{\sigma_{t_{2}}}{t_{2}})^2} 
\end{align}
where times $t_{1}$ and $t_{2}$ with their respective deviations $\sigma_{t_{1}}$ and $\sigma_{t_{2}}$ are measured runtime of a serial and parallel execution.  
\begin{figure}[htb]
	\centering
	\includegraphics[width=\textwidth]{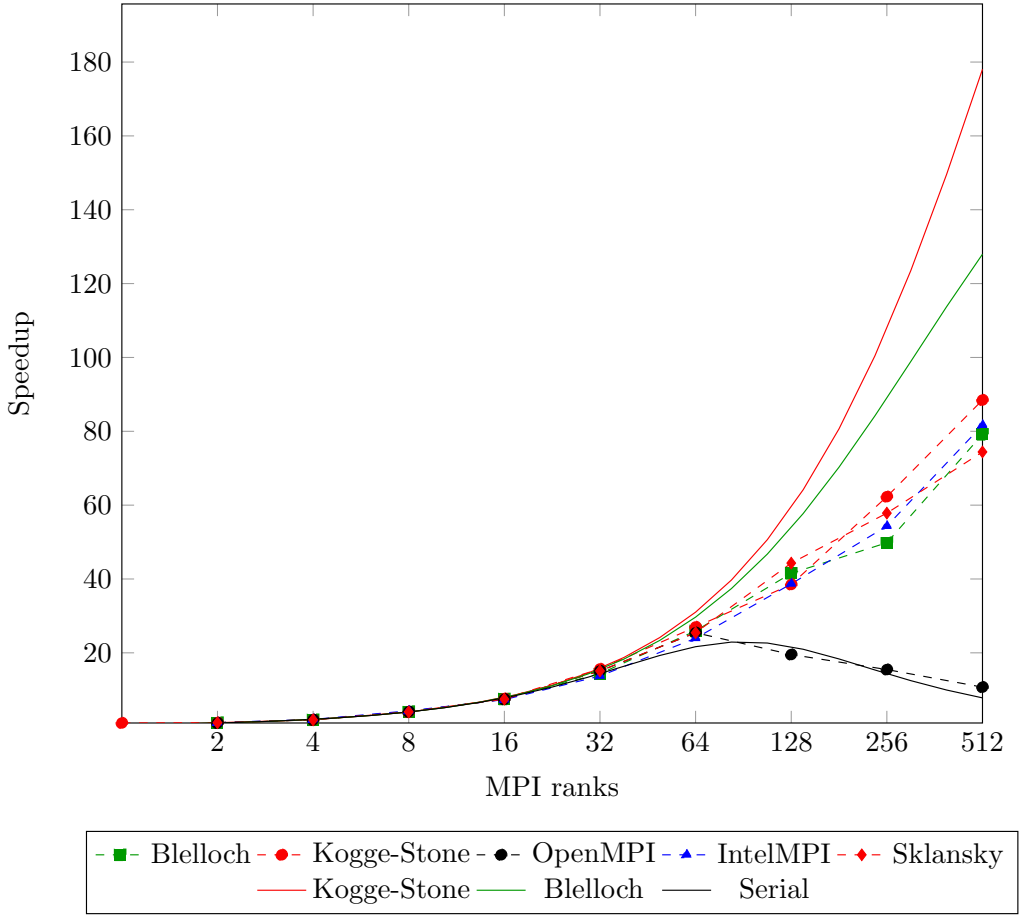}
	\caption{Speedups of different implementations of a distributed prefix sum, plotted as dashed lines, are compared with the theoretical upper bounds, presented with solid lines. The horizontal axis is logarithmic with a base of two.}
	\label{fig:chapter_results_strong_4096}
\end{figure}

Figure~\ref{fig:chapter_results_strong_4096} presents the image registration of 4096 frames. We compare our implementations against the theoretical upper bound, with Sklansky and Kogge--Stone implementations having the highest bound due to the lowest span of the	 global stage. We expect that an ideal implementation should be able to achieve such speedup for problems with a low variation in execution time for the binary operator of prefix sum.\\
We do not observe any significant difference in experiments with less than $32$ MPI ranks, where each process has a large work chunk and the cost of the global stage is relatively low. For larger runs, the best performance is attained by a Kogge--Stone implementation. Interestingly enough, the Sklansky implementation performs worse even though both algorithms have the same span. Moreover, the Sklansky prefix sum has a lower work complexity which should indicate a better performance. A possible explanation of this phenomena may be the non--constant \textit{fan--out} of Sklansky algorithm, as explained in section~\ref{chap:prefix_sum_sklansky}. In both algorithms, $\frac{N}{2}$ workers are active in the last iteration but their dependencies are different. In the Kogge--Stone prefix sum, each active worker depends on a result from a different worker. In the Sklansky algorithm, all active workers have to wait for a result from the same worker with index $\frac{N}{2} - 1$. Therefore, a significant delay on this worker has a much serious influence on the total performance. The work--efficient Blelloch scan performs slightly worse than a Kogge--Stone approach. A large work complexity seems to not be a problem in the image registration, as long as the span is minimal. \\
We do not have any expectations about the performance of an exclusive IntelMPI implementation because of a lack of details about the algorithm in the documentation. It behaves similarly to the Blelloch prefix sum, which might suggest that the tree--based scan has been implemented as \code{MPI\_Exscan}. At the same time, the source code analysis of OpenMPI suggested a serial implementation of the scan operation and measurements confirm this hypothesis. This implementation performs very poorly and it is simply unable to scale beyond a certain upper limit. \\
\begin{figure}[htb]
	\centering
	\includegraphics[width=\textwidth]{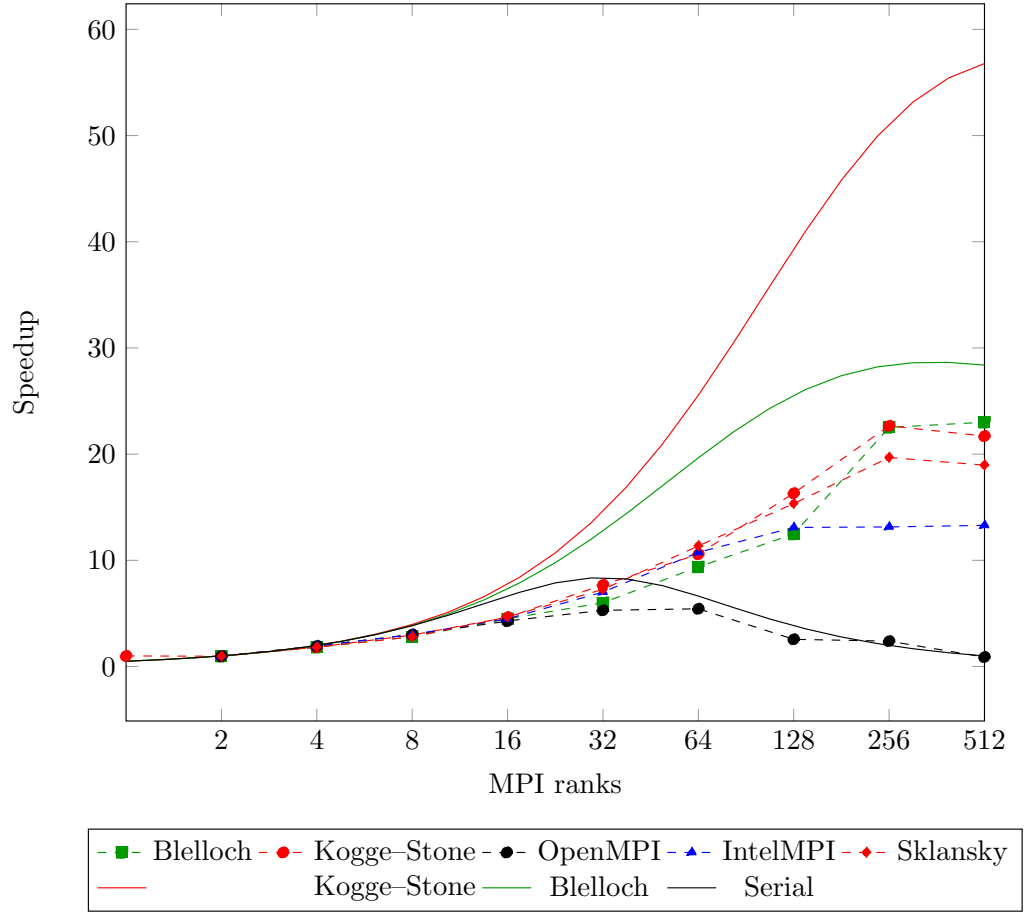}
	\caption{Speedups of different implementations of a distributed prefix sum, plotted as dashed lines, are compared with the theoretical upper bounds, presented with solid lines. The horizontal axis is logarithmic with a base of two.}
	\label{fig:chapter_results_strong_512}
\end{figure}
Results of a smaller experiment are presented in Figure~\ref{fig:chapter_results_strong_512}. Here, the execution time is even more dominated by the global stage, because of a much smaller workload in local stages. A corner case is $P = 512$ where each worker obtains only one data element, and no computation is done in local phases. Three implementations - Kogge--Stone, Sklansky and IntelMPI - attain the best speedup for $P = 256$. Surprisingly, the Blelloch algorithm improves even for $P = 512$. Although this result proves that the work--efficient Blelloch algorithm can perform better in some scenarios than a span--optimal Kogge--Stone algorithm, we note that this is a pathological case of $N = P$. Thus, Kogge--Stone is a preferable choice for real--world applications. \\

It is not surprising that the efficiency of all algorithms drops quickly on a larger number of MPI ranks. An increased number of workers means that more iterations are performed in the global stage and each one involves a blocking receive of results from another worker. This synchronization increases the likelihood of a rank idle waiting due to an unequal distribution of workload.
\newpage
\section{Weak scaling}
\label{chap:results_weak}
In the case of weak scaling, the performance is measured for a fixed amount of work performed by each worker. In contrary to strong scaling, where the algorithm is stressed to utilize all available parallelism, weak scaling answers the question: is the algorithm able to solve a bigger problem without a decrease in efficiency, when more hardware is available? For an ideal linear scaling, the algorithm should be able to match an available parallelism of more processors with an increased size of the problem and the execution should stay constant. \\
In a distributed prefix sum, the span is given in a general form
\begin{align}
S(N, P) &= \dfrac{2N}{P} + C_{1}\log_{2}{P} + C_{2}
\end{align}
for some constants $C_{1}, C_{2} \in \mathbb{N}$. For an equal increase in the problem size and the number of processing elements, the span of a prefix sum is given as
\begin{align}
S(2N, 2P) &= \dfrac{4N}{2P} + C_{1}\log_{2}{2P} + C_{2} \\ &= \dfrac{2N}{P} + \log_{2}{P} + C_{2} + C_{1}
\end{align}
Thus, we can not expect that the execution time stays constant. Nevertheless, as long as the amount of work per worker is relatively high, we should observe that the increase in span does not cause a significant growth of execution time. \\
\begin{figure}[htb]
	\centering
	\includegraphics[width=\textwidth]{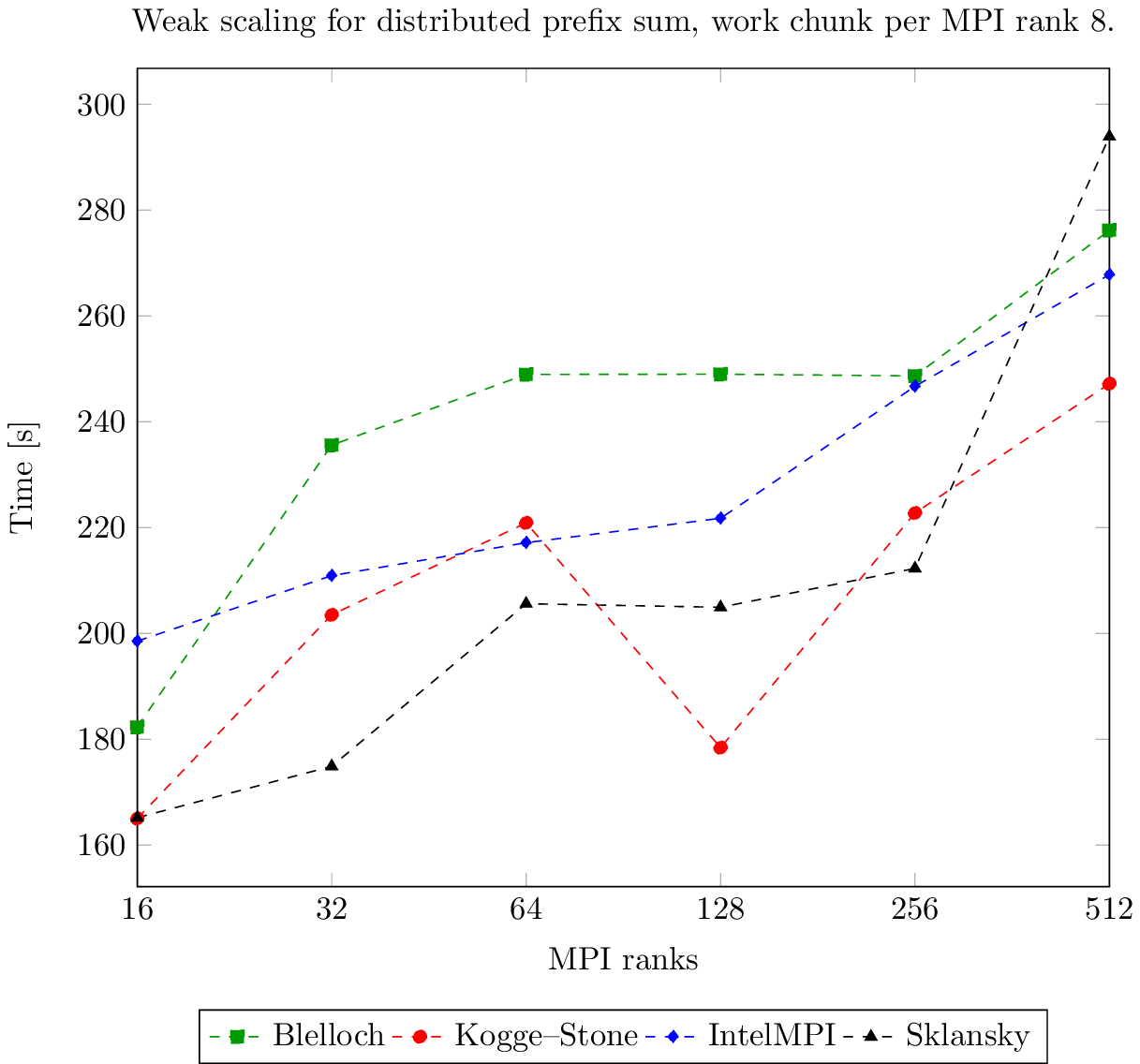}
	\caption{The smallest and largest experiments have been performed with 128 and 4096 images, respectively. The OpenMPI solution is excluded from this plot due to comparatively large values of execution time. The horizontal axis is logarithmic with a base of two.}
	\label{fig:chapter_results_weak_small}
\end{figure}
\begin{figure}[htb]
	\centering
	\includegraphics[width=\textwidth]{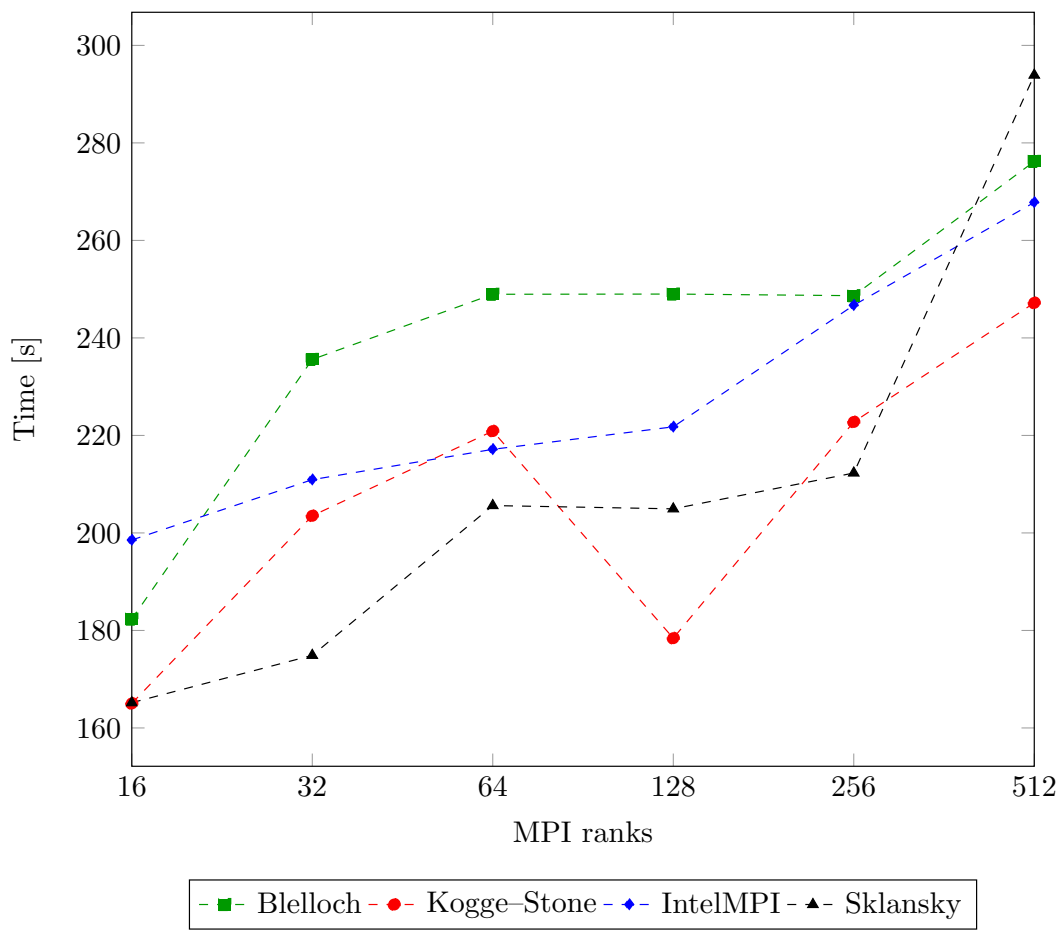}
	\caption{The smallest and largest experiments have been performed with 128 and 4096 images, respectively. The OpenMPI solution is excluded from this plot due to comparatively large values of execution time. The horizontal axis is logarithmic with a base of two.}
	\label{fig:chapter_results_weak_large}
\end{figure}
We begin with considering the case $\frac{N}{P} = 8$. The smallest and largest experiments have been performed with 128 and 4096 images, respectively. Results are presented on Figure~\ref{fig:chapter_results_weak_small}. The IntelMPI implementation has the lowest increase in time of $34\%$ but it has been the slowest solution at the beginning. Kogge--Stone and Blelloch perform similarly with $50\%$ and $51.6\%$ increase, respectively. Sklansky algorithm outperforms all competitors most of the time but the sudden decrease in the least measurement gives it the worst result - $77\%$ increase in execution time. On average, the increase in execution time is approximately equal to $42.8\%$. The OpenMPI solution is excluded from this analysis due to a large difference between its execution time and all other algorithms. It is sufficient to say that the runtime increases from 196 to 2041 seconds i.e. by $941\%$. \\ The analysis reveals more interesting information about the algorithms. Although Kogge--Stone provided the best performance in large runs, it is the only algorithm to not have a monotonic increase in execution time\footnote{The Sklansky algorithm exhibits a small decrease between $64$ and $128$ ranks but the decrease is lower than the standard deviation of measurements.}. The significant increase from $16$ to $64$ ranks, followed by a sharp decrease on $128$ ranks is a very unusual behavior. The variance of measurements is relatively low and we have found no reason to doubt the quality of conducted experiments. A possible explanation may be a huge sensitivity to load imbalance, causing a serious performance degradation for specific distributions of work.

Another results are presented on Figure~\ref{fig:chapter_results_weak_large}. An experiment with the larger amount of work per rank $\frac{N}{P} = 32$ is expected to perform better because a constant increase should have less effect when the span is dominated by $\frac{2N}{P}$. All algorithms show a high increase in execution time from $4$ to $8$ ranks and a slow and rather stable increase for other values. Although Blelloch is the slowest algorithm on $128$ images, its $14.3\%$ increase in execution time is the lowest one, followed by $27.4\%$ increase for Sklansky, $29.3\%$ for the IntelMPI implementation and $34.4\%$ for the Kogge--Stone algorithm. The average $28\%$ increase can not be compared with a $165\%$ increase for OpenMPI.

We observe that some image registration problem algorithms are not able to match theoretical bounds. For Kogge--Stone and Sklansky algorithms, the increase is much higher than theoretical limits of $27.7\%$ and $7.8\%$ increase, obtained by dividing the total increase with a span for the first measurement. This time, the Kogge--Stone algorithm exhibits a sudden increase in execution time at the last stage. On the other hand, Blelloch performs better than theoretical limits of $52.6\%$ and $14.9\%$.
\section{MPI implementation}
\label{chap:results_mpi}
In this section, we compare the performance of a distributed image registration with the global stage wrapped over a built--in MPI parallel scan. First, we present various implementations in the IntelMPI library. Then, we measure the performance of OpenMPI scan and compare it against the best available IntelMPI scan.
\subsection{IntelMPI}
\begin{figure}[htb]
	\centering
	\subfloat[Inclusive]{%
		\includegraphics[width=\dimexpr0.5\textwidth]{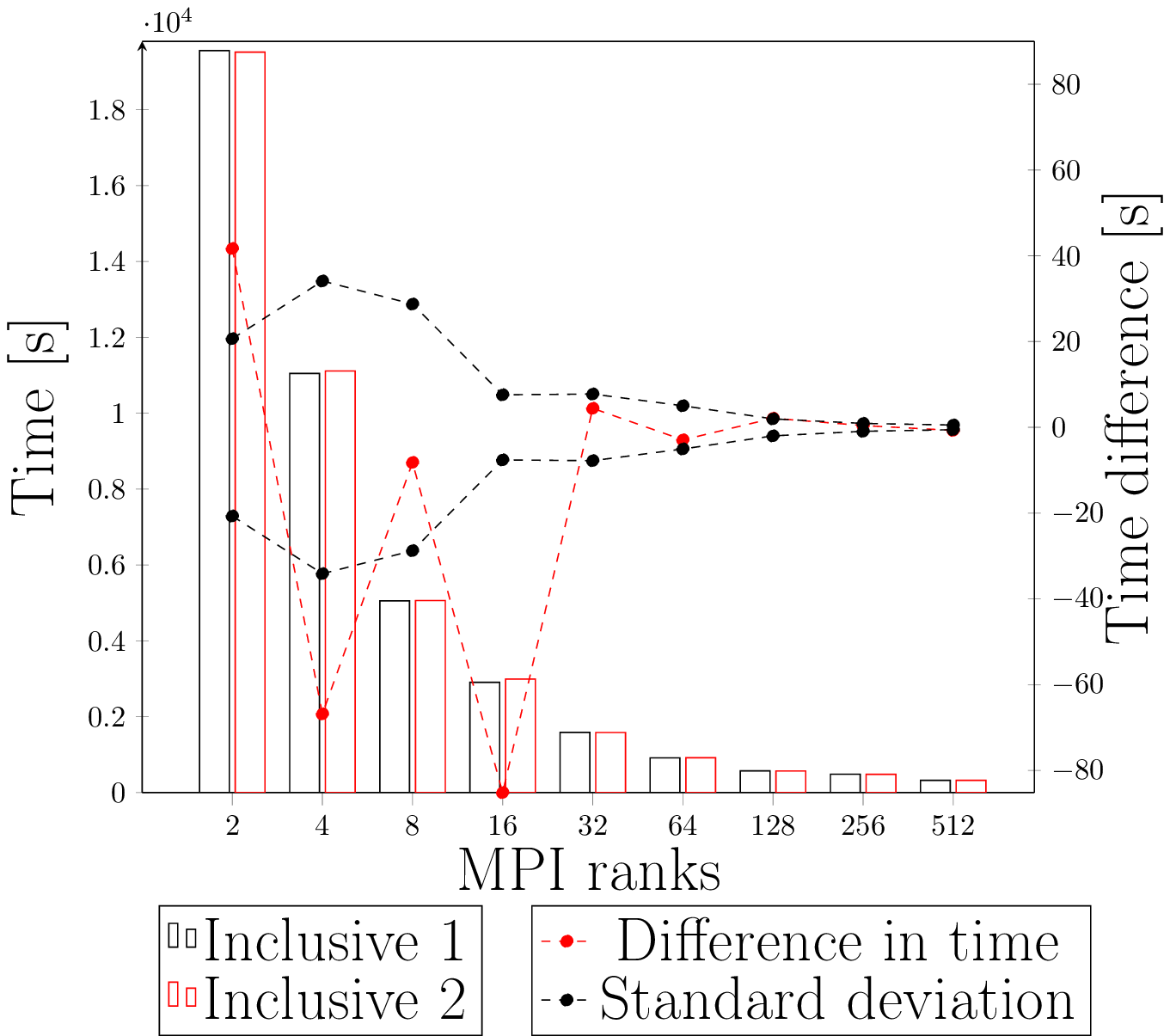}\label{fig:chapter_results_weak_large_intelmpi_incl}}\hfill
	\subfloat[Exclusive]{%
		\includegraphics[width=\dimexpr0.5\textwidth]{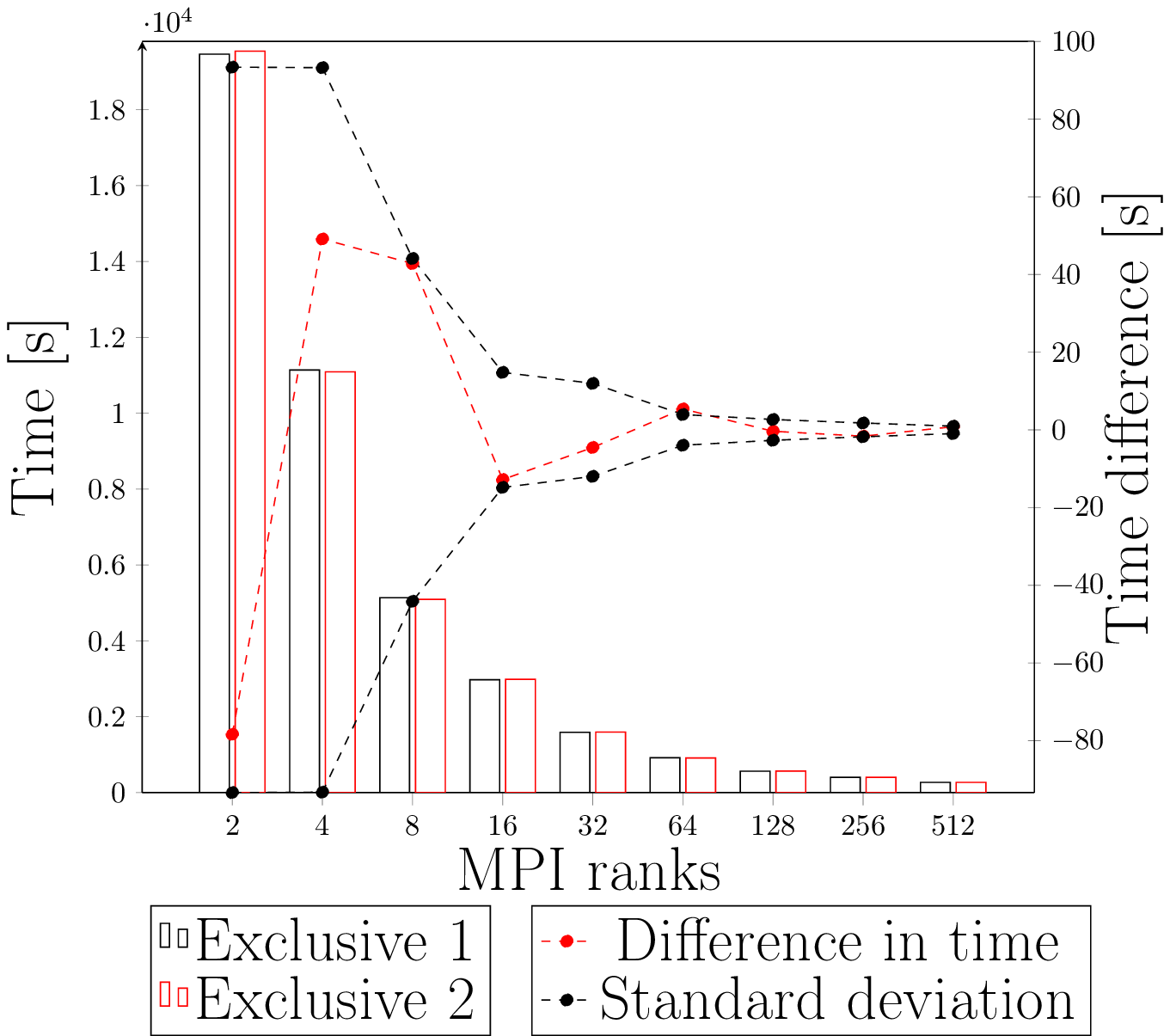}\label{fig:chapter_results_weak_large_intelmpi_excl}}\\
	\subfloat[Inclusive vs exclusive]{%
		\includegraphics[width=\dimexpr0.5\textwidth]{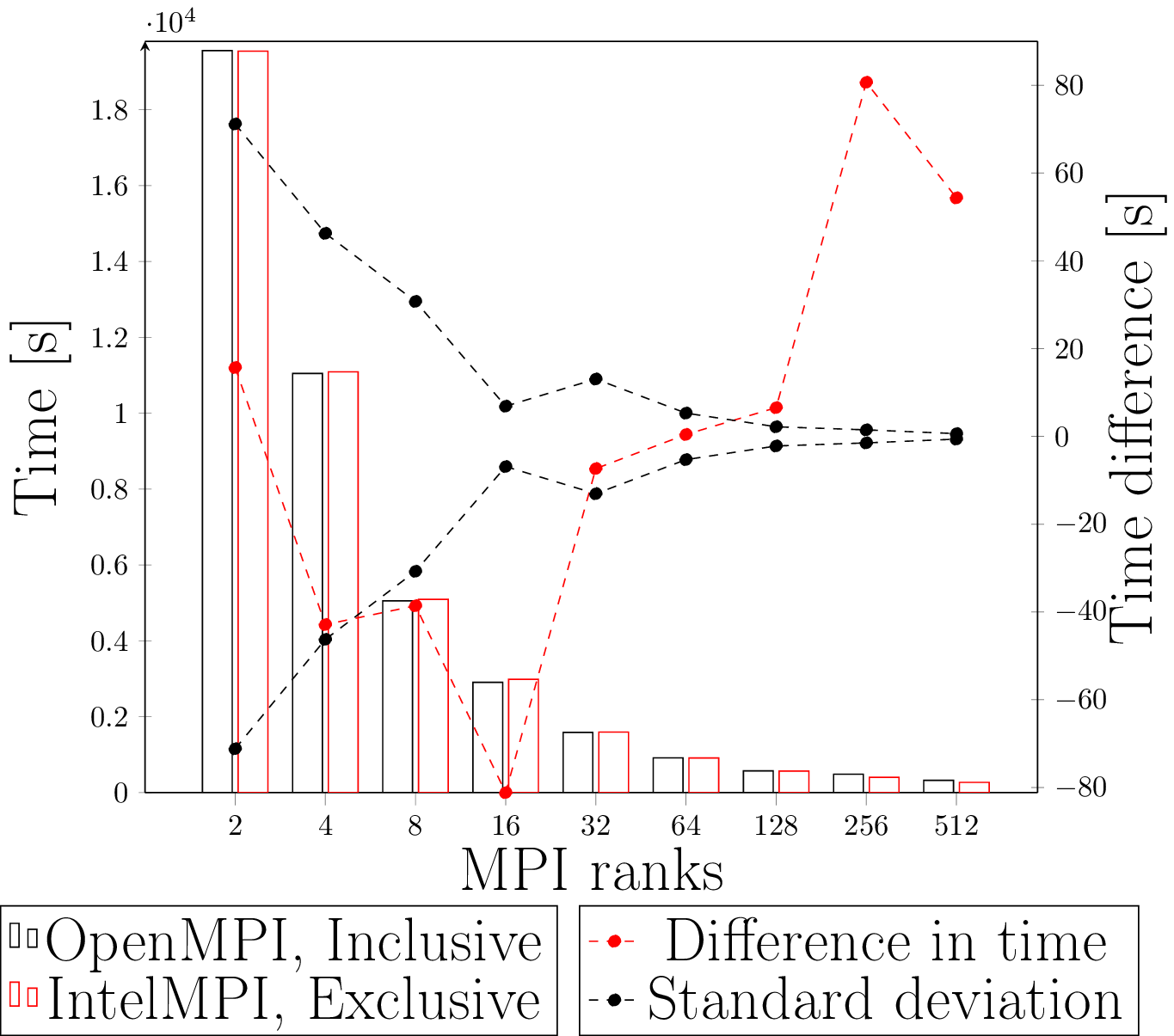}\label{fig:chapter_results_weak_large_intelmpi_comparison}}
	\caption{An IntelMPI--based implementation of a distributed prefix sum for $N = 4096$. The left vertical axis corresponds to the bar plot presenting the execution time for  algorithms. The right vertical axis corresponds to the plot of a difference in execution time between the first and second algorithm. The standard deviation is plotted with a reflection over the horizontal axis.}  \label{fig:chapter_results_weak_large_intelmpi}
\end{figure}
Inclusive and exclusive IntelMPI algorithms are compared on Figures~\ref{fig:chapter_results_weak_large_intelmpi_incl} and~\ref{fig:chapter_results_weak_large_intelmpi_excl}, respectively. Interestingly, the inclusive \textit{partial results gathering} performs better than its \textit{topology--aware} counterpart for runs which do not span across nodes. For computations requiring intra--node communication, the difference is lower than a combined standard deviation. On the other hand, results with an exclusive MPI algorithm are much less clear, since inter--node results are not consistent, and they are too close to each other to justify a verdict. However, we prefer the second algorithm because of a slightly better stability of measurements. \\
The figure~\ref{fig:chapter_results_weak_large_intelmpi_comparison} compares a default implementation of an inclusive scan and the \textit{topology--aware} implementation of the exclusive scan. The inclusive implementation provides a lower runtime when the number of MPI ranks is low. There, the difference in execution time is only an insignificant percentage of the total execution time. The exclusive scan surpasses the other algorithm on $128$, $256$ and $512$ processor cores where this improvement has a noteworthy influence on the performance.
\subsection{OpenMPI vs IntelMPI}
\begin{figure}[htb]
	\centering
	\subfloat[OpenMPI]{%
		\includegraphics[width=\dimexpr0.5\textwidth]{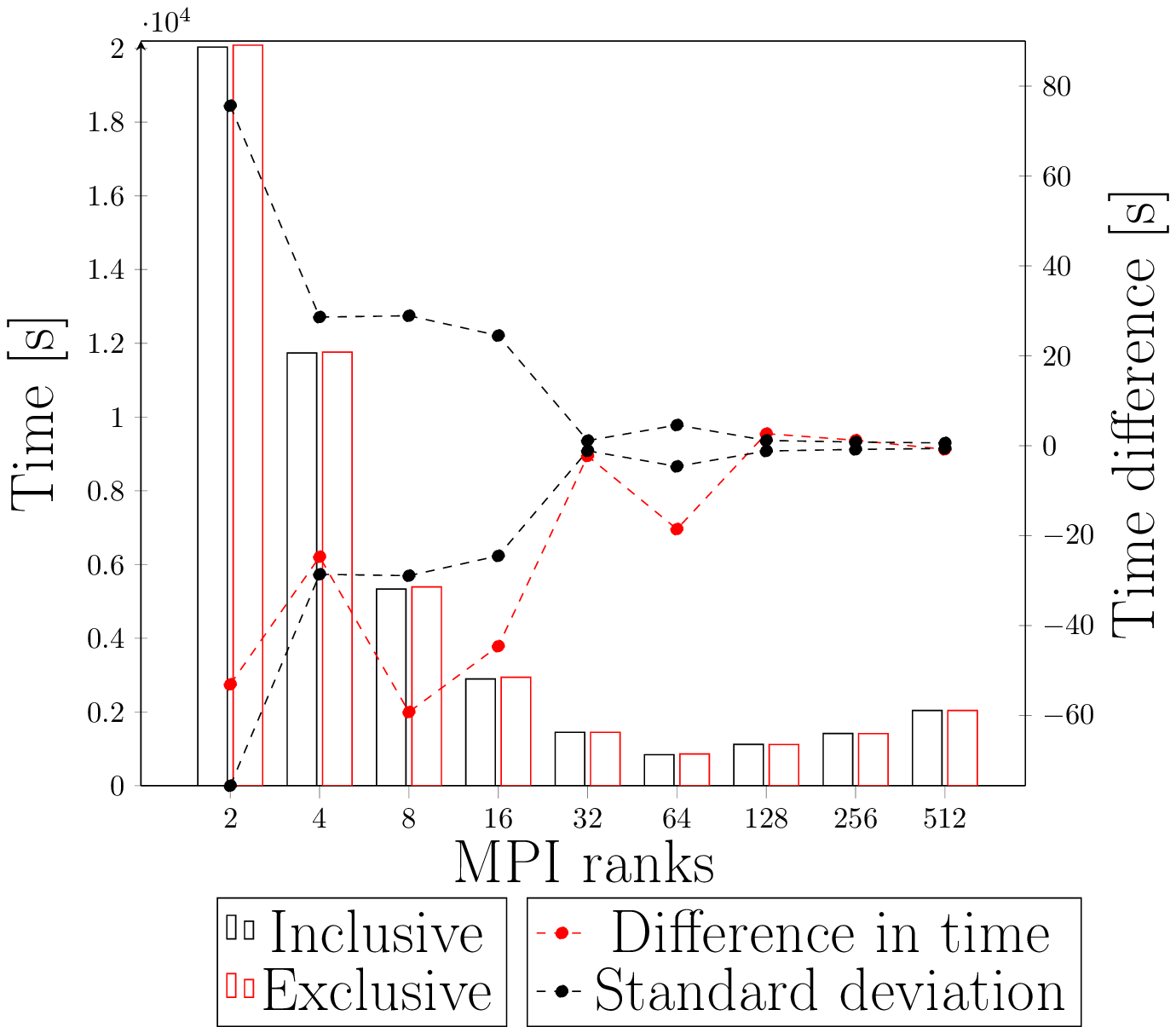}\label{fig:chapter_results_openmpi}}\hfill
	\subfloat[Comparison]{%
		\includegraphics[width=\dimexpr0.5\textwidth]{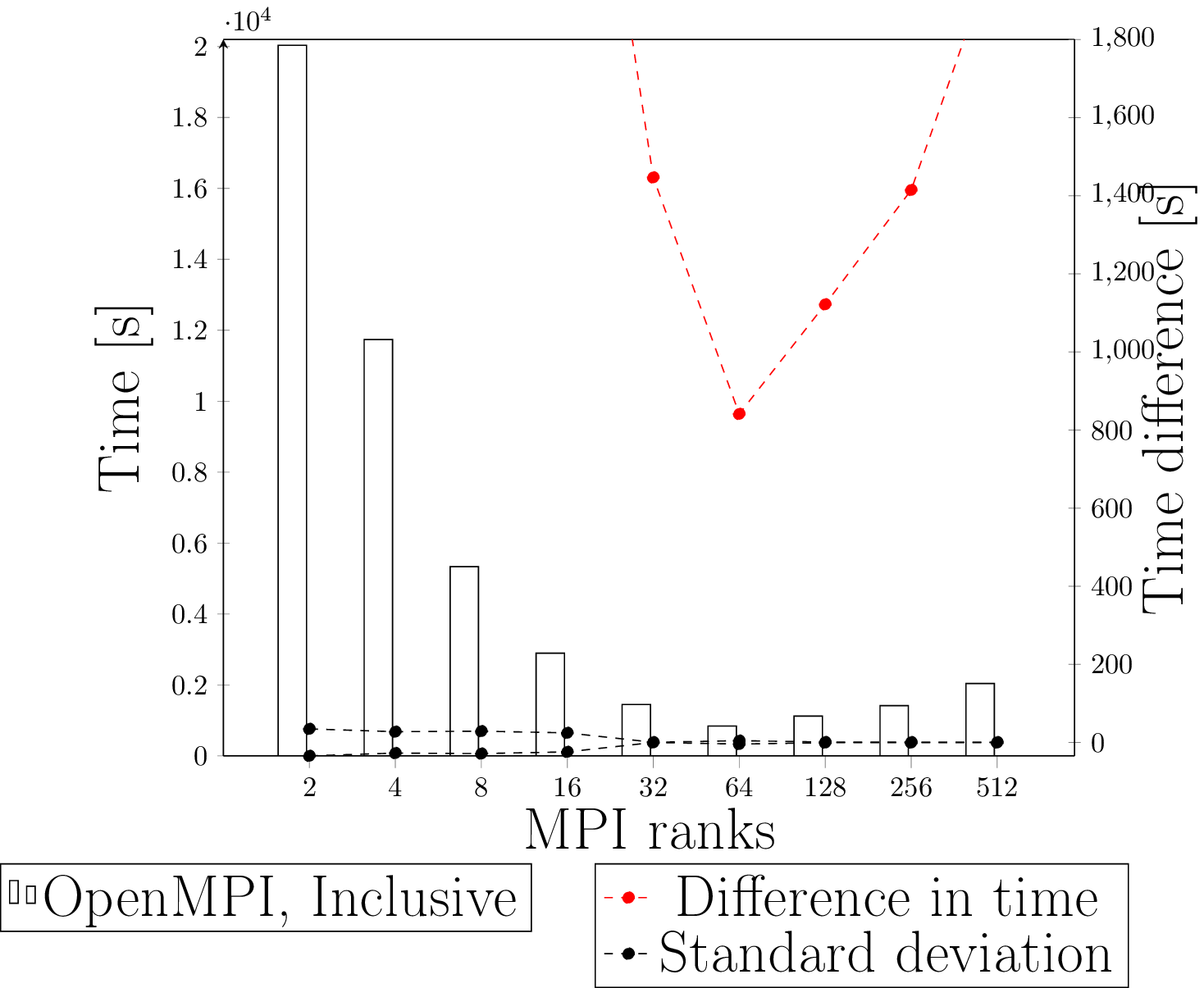}\label{fig:chapter_results_openmpi_intelmpi}}
	\caption{An MPI--based implementation of a distributed prefix sum for $N = 4096$. The left vertical axis corresponds to the bar plot presenting the execution time for  algorithms. The right vertical axis corresponds to the plot of a difference in execution time between the first and second algorithm. The standard deviation is plotted with a reflection over the horizontal axis.}  \label{fig:chapter_results_weak_large_intelmpi}
\end{figure}
Figure~\ref{fig:chapter_results_openmpi} compares the performance of two scan implementations in OpenMPI. A negative difference suggests that the inclusive prefix sum tends to perform better for all runs utilizing less than $128$ cores. Later, the difference is too small to differentiate between them. We have not found any explanation why the exclusive prefix sum performs worse in a serial implementation. We have decided to use the inclusive scan for comparisons with other prefix sums. \\
Finally, we compare best--performing implementations for IntelMPI and OpenMPI. Although the Figure~\ref{fig:chapter_results_openmpi_intelmpi} clearly shows that IntelMPI provides a superior performance for executions spanning multiple cluster nodes, results for $16, 32$ and $64$ ranks are surprising. Again, a likely explanation is that multiple iterations of a parallel scan intensify delays created by ab unequal computation time between workers. 

\section{Alternative strategy}
\label{chap:results_alternative}
Section~\ref{chap:distr_prefix_sum_alternative} introduced a formulation of the distributed prefix sum problem alternative to \textit{scan--then--apply} general strategy. Using the notation from the previous chapter, we can express the default strategy in terms of how result $x_{i}$, located on worker $I$, is computed 
\begin{align}
x_{0, i} &= \overbrace{\underbrace{x_{0, l_{I}}}_{\text{\makebox[0pt]{Global scan}}} \: \odot \: \underbrace{( (x_{l_{I}} \odot x_{l_{I} + 1} \odot \dots \odot x_{i-1}) \odot x_{i} )}_{\text{First local stage}}}^{\text{Second local stage}}
\end{align}
The final result is obtained by two applications of operator $\odot$ to $x_{i}$. On the other hand, the \textit{reduce--then--scan} strategy achieves the same goal with a single application of the binary operator. Result from the first local stage is passed only to the global prefix sum
\begin{align}
x_{0, i} &= \overbrace{(\underbrace{x_{0, l_{I}}}_\text{\makebox[0pt]{Global scan}} \: \odot \:x_{l_{I}} \odot x_{l_{I} + 1} \odot \dots \odot x_{i-1}) \odot x_{i} )}^{\text{Second local stage}}
\end{align}
Within the context of image registration, the approximately associative operator with an unpredictable runtime creates a possibility where a different ordering of execution may lead to improved initial guesses for consecutive calls to function $\mathbf{B}$. Thus, we have to compare these two strategies to find out which one is better suited for this task.
\begin{figure}[htb]
		\centering
	\includegraphics[width=0.8\textwidth]{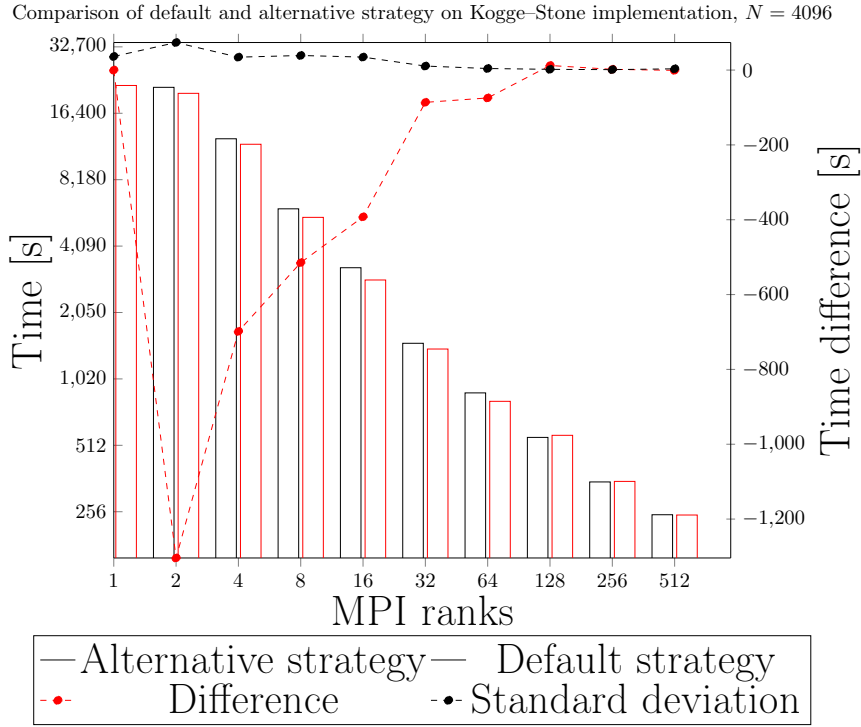}
	\caption{The left vertical axis corresponds to runtime for two strategies plotted as bars. The right vertical axis corresponds to a difference between the default and alternative strategy, presented with the standard deviation of difference. Both vertical and the left horizontal axes are logarithmic with base two.}
	\label{fig:chapter_results_altern}
\end{figure}
\begin{figure}[htb]
	\centering
	\includegraphics[width=0.7\textwidth]{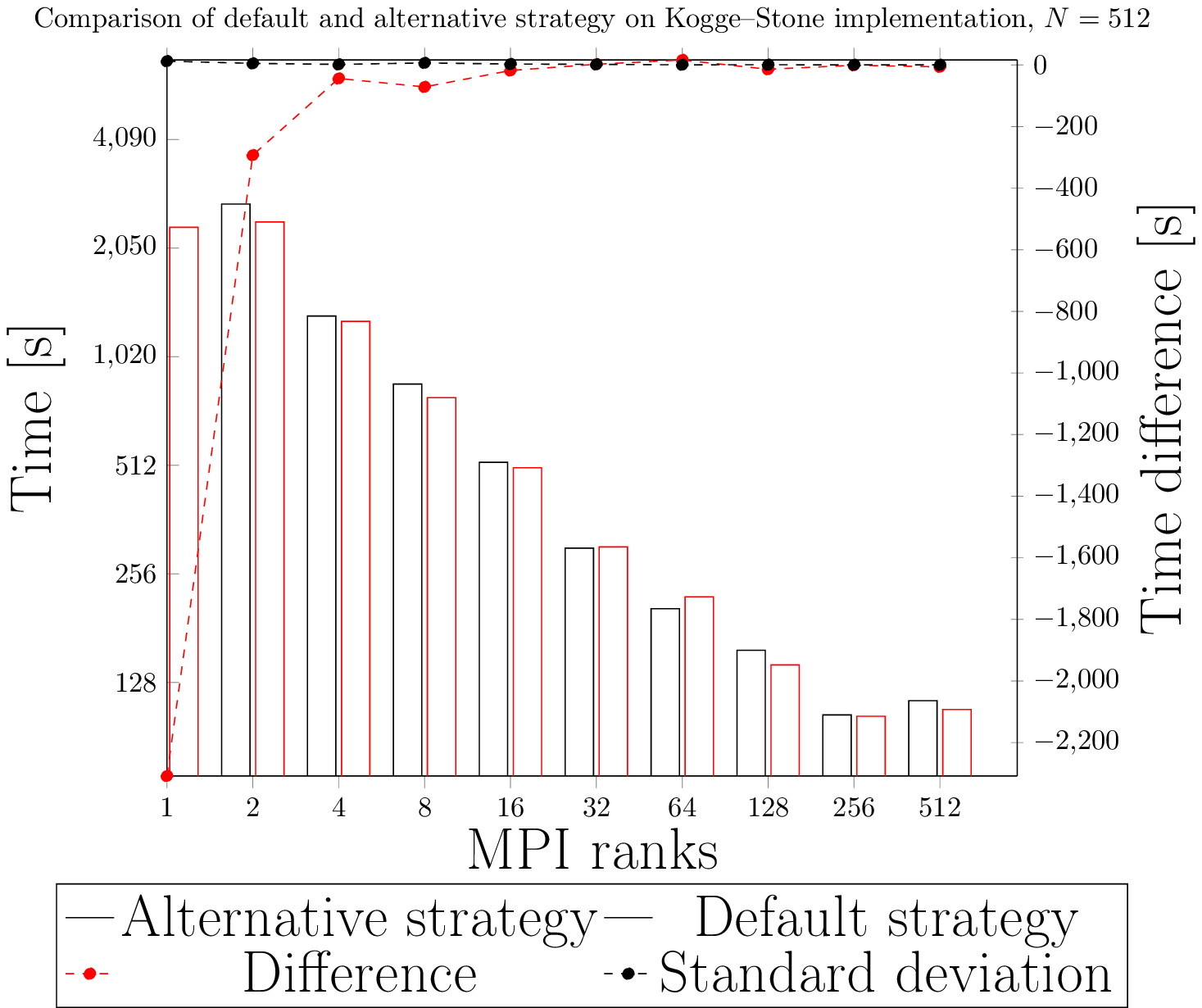}
	\caption{The left vertical axis shows runtime for two strategies plotted as bars. The right vertical axis displays a difference between the default and alternative strategy, presented with the standard deviation of difference. Both vertical and the horizontal ax are logarithmic with base two.}
	\label{fig:chapter_results_alternative_512}
\end{figure}
Figure~\ref{fig:chapter_results_altern} presents the execution time of two strategies for the distributed image registration of 4096 frames, with a Kogge--Stone implementation of global prefix sum. For all measurements, up to the 256 cores, the default strategy performs better but the difference is monotonically decreasing. The alternative strategy is slightly faster than the default strategy for $P = 128$. We have seen in the section~\ref{chap:results_weak} that the default strategy with Kogge--Stone global scan exhibits a strange drop in performance for $N = 4096$ and $P = 128$. In the execution on 512 cores, the difference becomes smaller than the combined standard deviation.\\
To get a better picture, we analyze as well a measurement with a smaller data chunk per worker. Results presented on Figure~\ref{fig:chapter_results_alternative_512} are not drastically different from the previous experiment. \\
We have not found any indication that investigating the alternative strategy may improve the algorithm's performance.


\section{Multithreading}
\label{chap:results_hybrid}
To overcome the poor scalability of prefix sum on a large number of cores, we investigated the possibility of a hybrid parallelization. Instead of allocating $P$ MPI ranks performing the distributed prefix sum, one could allocate $\frac{N}{2}$ or even $\frac{N}{4}$ ranks with two of four threads per rank. There, a better performance could be achieved by utilizing hardware to parallelize the image registration process.\\
A performance analysis of the image registration revealed two parallelizable functions which are responsible for approximately two-thirds of the execution time. According to Amdahl's law, it should allow for a speedup of $1.67$ times on two threads and $3$ times on four threads. \\
An experimental result for with GOMP, a GNU implementation of OpenMP for GCC compiler, is presented in Figure~
\ref{fig:chapter_results_gomp}. We observe that a shared--memory parallelization of the operator is beneficial on computations on a large number of cores and with small chunk of work per MPI rank, even if the speedup of operator parallelization is not linear.\\
An investigation of OpenMP runtime for Intel compiler did not reveal significantly different results.
\begin{figure}[htb]
	\centering
	\includegraphics[width=\textwidth]{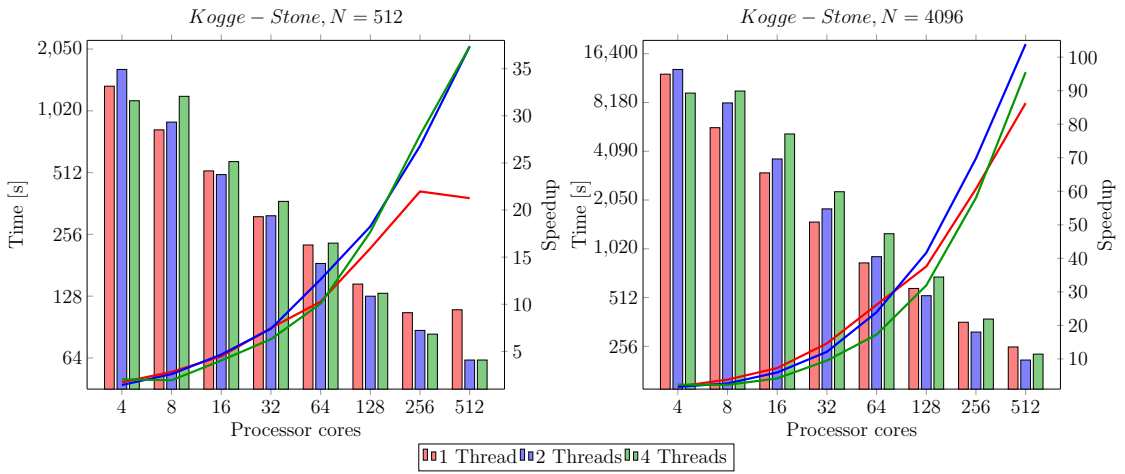}
	\caption{The left vertical axis corresponds to runtime of a hybrid parallelization, plotted as bars. The right vertical axis corresponds to speedup of a hybrid parallelization, plotted as solid lines. Both vertical and the left horizontal ax are logarithmic with base two. The last value for 4 threads is hardly visible because it is too small when compared with other data points.}
	\label{fig:chapter_results_gomp}
\end{figure}


\chapter{Summary}
\label{chap:summary}

In this dissertation, we have discussed a parallelization strategy for registration of series of electron microscopy images. We have approached the problem by representing the registration procedure as a prefix sum. Several parallel prefix sum algorithms have been implemented, evaluated and compared against each other and their respective theoretical predictions. In this chapter, we present the conclusions and suggest future improvements. 

\subsubsection{Scalability}

Strong scaling analysis of our problem reveals that an upper boundary on attainable speedup makes a massively parallel execution quite inefficient. Furthermore, a load imbalance induced by variances in execution time makes it even more wasteful in terms of computational resources. \\
Instead of trying to speedup fixed size problems, we focus on utilizing available resources to solve larger problems. A weak scaling analysis indicates there a more efficient use of available hardware.

\subsubsection{MPI facilities}

Our results show a stunning difference between MPI libraries in the quality of collective operations. A review of the literature suggests that this state of affairs may be caused by a relatively low popularity of the scan primitive in MPI community. Multiple papers have been written about optimal algorithms for collective operations such as barrier, broadcast, scatter, gather or even reduce. The IntelMPI library offers at least nine algorithms with variants for \code{MPI\_Allreduce}, \code{MPI\_Barrier}, \code{MPI\_Bcast} and \code{MPI\_Reduce}\cite{IntelMPIRef}. Sadly, the same cannot be said about the \code{MPI\_Scan}. \\
The image registration problem demonstrates that there is a need for high quality implementations of distributed prefix sum algorithms. 

\subsubsection{Our contribution}

We believe that the parallel image registration is the first example of the parallel prefix sum applied to a problem where the binary operator is
\begin{itemize}
	\item not associative
	\item computationally intensive
	\item of iterative nature with huge variances in convergence time
\end{itemize}
We have investigated algorithms existing in the literature and derived alternative formulations of the problem.
The research on parallel prefix adders has been concentrated on constructing work--efficient strategies with a minimal span. Parallel and distributed prefix sum algorithms have been designed for memory--bound operations and, as a result, they are optimized to minimize the cost of communication and memory access. Our work shows that there is a class of problems where better alternatives exist as neither work--efficiency nor optimized communication is desired.

\subsubsection{Limitations and future work}

The scalability of our parallelization scheme is inherently limited by the logarithmic nature of a parallel prefix sum and an unpredictable runtime for image registration operators. Our results suggest that a parallelization of the prefix sum operator is the only way of significantly improving the efficiency of a distributed prefix sum.

However, there are optimizations which can be applied to our problem. A major cause of a poor efficiency on many MPI ranks is a huge variation in execution times between workers. This impacts the global stage where each iteration involves an implicit synchronization through point--to--point communication. A shared--memory parallel implementation could decrease effects of a load imbalance by allocating one MPI rank per node and performing work--stealing inside a node. Negative effects of synchronization and communication in the global stage are decreased because they grow with the number of nodes, not the number of workers. Such improvement could boost the efficiency of computations spanning among a limited number of nodes, but it would not enable efficient, massively parallel computations.

Another way of improving performance is by redistributing the work after the global stage. In parallel prefix sum algorithms, many workers are not required to perform exactly $\log_{2}{P}$ iterations, and they are allowed to start computing final deformations earlier. Furthermore, in the general strategy the very first worker is not performing any work at all after the first local phase. \\
Sadly, an analysis of results suggests that workers with a largest theoretical span are not always the slowest ones. Besides, the second local phase tends to require much less computational effort than other stages. Even a successful reduction of the last phase on the slowest worker would only slightly decrease the total execution time. And there is evidence to suggest that a redistribution policy based on a theoretical prediction of span could cause more harm than good.

\addcontentsline{toc}{chapter}{Bibliography}
\printbibliography[nottype=online, title=References]
\printbibliography[type=online, title={Webpages}]

\end{document}